\keywords{actors, concurrency, termination detection, quiescence detection, garbage collection, distributed systems}
\begin{document}

\title[A Scalable Algorithm for Decentralized Actor Termination Detection]{A Scalable Algorithm for Decentralized \texorpdfstring{\\}{} Actor Termination Detection}
\titlecomment{A previous version of this article appeared at CONCUR 2020.}

\author[D.~Plyukhin]{Dan Plyukhin}
\address{University of Illinois at Urbana-Champaign, USA}
\email{\{\texttt{daniilp2},\texttt{agha}\}\texttt{@illinois.edu}}

\author[G.~Agha]{Gul Agha}

\begin{abstract}
Automatic {\em garbage collection\/} (GC) prevents certain kinds of
bugs and reduces programming overhead.  GC techniques for sequential
programs are based on {\em reachability analysis\/}.
However, testing reachability from a root set is inadequate for determining whether an {\em actor\/} is garbage: Observe that an unreachable actor 
may send a message to a reachable actor.  
Instead, it is sufficient to check \emph{termination} (sometimes also called \emph{quiescence}): an
actor is terminated if it is not currently processing a message 
and cannot receive a message in the future.
Moreover, many actor frameworks provide all
actors with access to file I/O or external storage; without inspecting an actor's internal code, it is necessary to check that the actor has terminated to ensure that it may be garbage collected in these frameworks.
Previous algorithms to detect actor garbage require coordination
mechanisms such as causal message delivery or nonlocal monitoring of
actors for mutation.  Such coordination mechanisms adversely affect
concurrency and are therefore expensive in distributed systems.  We
present a low-overhead {\em deferred reference listing\/} technique (called
{\em DRL\/}) for termination detection in actor systems.  DRL is based
on asynchronous local snapshots and message-passing between actors.
This enables a decentralized implementation and transient network
partition tolerance.  The paper provides a formal description of DRL,
shows that all actors identified as garbage have indeed terminated
(safety), and that all terminated actors---under certain reasonable
assumptions---will eventually be identified (liveness).
\end{abstract}

\maketitle

\section{Introduction}

The actor model~\cite{books/daglib/0066897,journals/cacm/Agha90} is a
foundational model of concurrency that has been widely adopted for its
scalability: for example, actor languages have been used to implement services at
PayPal~\cite{PayPalBlowsBillion}, Discord~\cite{vishnevskiyHowDiscordScaled2017}, and in the United Kingdom's National Health Service
database~\cite{NHSDeployRiak2013}.  In the actor model, stateful processes known as \emph{actors}
execute concurrently and communicate by
sending asynchronous messages to other actors, provided they have a \emph{reference} (also called a \emph{mail address} or \emph{address} in the literature) to the recipient. Actors can also spawn new actors. An actor is said to be \emph{garbage} if it can be destroyed without affecting the system's observable behavior.

Although a number of algorithms for automatic actor garbage collection (GC) have been proposed \cite{ clebschFullyConcurrentGarbage2013,
  kafuraConcurrentDistributedGarbage1995,
  vardhanUsingPassiveObject2003,
  venkatasubramanianScalableDistributedGarbage1992,
  wangConservativeSnapshotbasedActor2011,
  wangDistributedGarbageCollection2006}, actor languages and
frameworks currently popular in industry (such as Akka \cite{Akka},
Erlang \cite{armstrongConcurrentProgrammingERLANG1996}, and Orleans
\cite{bykovOrleansCloudComputing2011}) require that programmers
garbage collect actors manually.  We believe this is because the GC algorithms
proposed thus far are too expensive to implement in distributed
systems.  In order to find applicability in real-world actor runtimes, we argue that a GC algorithm should satisfy the following properties:

\begin{enumerate}
\item (\emph{Low latency}) GC should not restrict concurrency in the
  application.
\item (\emph{High throughput}) GC should not impose significant space
  or message overhead.
\item (\emph{Scalability}) GC should scale with the number of actors and nodes in the system.
\end{enumerate} 
To the best of our knowledge, no previous algorithm satisfies all
three constraints. The first requirement precludes any global
synchronization between actors, a ``stop-the-world'' step, or a
requirement for causal order delivery of all messages.  The second
requirement means that the number of additional ``control'' messages imposed by the
algorithm should be minimal.  The third requirement precludes
algorithms based on global snapshots, since that requires all actors to respond before any garbage can be collected; such a delay may become unacceptable as a system grows large.

To address these goals, we have developed a garbage collection
technique called \emph{DRL} for \emph{Deferred Reference Listing}. The primary advantage of DRL is that it is decentralized and incremental: local garbage can be collected in a subsystem without communicating with the rest of the system. Systems can also cooperate to detect distributed garbage by exchanging minimal amounts of information. Garbage collection can be performed concurrently with the application and imposes no message ordering constraints. We also expect DRL to be reasonably efficient in practice, since it does not require many additional messages or significant actor-local computation.

DRL works as follows.  The \emph{communication protocol} (\Cref{sec:model}) tracks information, such as references and message counts, and stores it in each actor's state.  Actors periodically send out copies of their local state (called \emph{snapshots}) to be stored at one or more designated \emph{snapshot aggregator} actors. Each aggregator periodically searches its local store to find a subset of snapshots representing terminated actors (\Cref{sec:termination-detection,sec:maximal}).  Once an actor is determined to have terminated, it can be garbage collected by, for example, sending it a \emph{self-destruct} message. We prove that non-terminated actors will never be garbage collected (Corollary~\ref{cor:safety}). Moreover, if every terminated actor eventually sends a snapshot to the aggregator, then all terminated actors will eventually be detected (\Cref{thm:liveness}).

Since DRL is defined in terms of the actor model, it is oblivious to details of a particular implementation (such as how sequential computations are represented or where actors are located). Our technique is therefore applicable to different actor frameworks; in particular, it may be
implemented as a library.  Moreover, it can also be applied to open
systems, allowing an actor system using DRL to interoperate with a manually-collected actor
system.

The outline of the paper is as follows. We provide a characterization of actor garbage in \Cref{sec:background} and discuss related work in \Cref{sec:related-work}. We then provide a
specification of the DRL protocol in \Cref{sec:model}.
In \Cref{sec:chain-lemma}, we describe a key property of DRL called the \emph{Chain Lemma}. In \Cref{sec:termination-detection}, we use this lemma to define when a set of snapshots is \emph{finalized} and prove that finalized snapshots correspond to terminated actors. In \Cref{sec:maximal}, we give algorithms that allow snapshot aggregators to find finalized subsets in an arbitrary set of snapshots. In \Cref{sec:coop}, we describe an efficient protocol for snapshot aggregators to detect distributed cycles of terminated actors.
We conclude in \Cref{sec:future-work} with some discussion of future work and how DRL may be used in practice.

\section{Preliminaries}
\label{sec:background}

An actor can only receive a message when it is \emph{idle}. Upon
receiving a message, it becomes \emph{busy}. A busy actor can perform
an unbounded sequence of \emph{actions} before becoming idle. In~\cite{aghaFoundationActorComputation1997}, an action may be to spawn an actor, send a message, or perform a (local) computation. The actor model places no constraints on message delivery order, though a particular implementation could provide stronger guarantees. We will also assume that actors can perform effects, such as file I/O. The actions an actor performs in response to a message are dictated by its application-level code, called a \emph{behavior}.

Actors can also receive messages from \emph{external} actors (such as the user) by
becoming \emph{receptionists}. An actor $A$ becomes a receptionist when its address is exposed to an external actor. Subsequently, any external actor can potentially obtain $A$'s address and send it a message.

\begin{figure} \centering \tikzfig{contents/diagrams/actor-graph-v2}
\caption{A simple actor system. Configuration {\bf (1)} leads to the second after $C$ receives the message $m$, which contains a reference to $E$. Notice that an actor can send a message and ``forget'' its reference to the recipient before the message is delivered, as is the case for actor $F$. In both configurations, $E$ is a potential acquaintance of $C$, and $D$ is potentially reachable from $C$. The only terminated actor is $F$ in Configuration {\bf (2)} because all other actors in {\bf (2)} are potentially reachable from unblocked actors.}
\label{fig:actor-graph-example}
\end{figure}

An actor is said to be garbage if it can be destroyed without affecting the system's observable behavior. However, without analyzing an actor’s code, it is not possible to know whether it will have an effect when it receives a message. As is typical for garbage collection algorithms, we will restrict our attention to those actors that can be guaranteed to be garbage without inspecting their internal behavior. According to this more conservative definition, any actor that might receive a message in the future should not be garbage collected because it could, for instance, write to a log file when it becomes busy. Conversely, any actor that remains idle indefinitely can safely be garbage collected because it will never have any effects. We therefore conservatively define an actor to be \emph{terminated} if it is guaranteed to remain idle indefinitely, regardless of how any other actor in the system behaves. For the purposes of this paper, terminated actors coincide with garbage actors.

The definition of termination above is problematic because it requires reasoning about infinite execution traces. Let us now introduce some terminology in order to give an equivalent definition in terms of the \emph{global state} of an actor system \cite{chandyDistributedSnapshotsDetermining1985}.

We say that an actor $B$ is a \emph{potential acquaintance} of $A$ (and $A$ is a \emph{potential inverse acquaintance} of $B$) if $A$ has a reference to $B$ or if there is an undelivered message to $A$ that contains a reference to $B$. We define \emph{potential reachability} to be the reflexive transitive closure of the potential acquaintance relation. That is, $A$ can potentially reach $B$ if there exists a sequence of actors $A_1,\dots,A_n$ where $A = A_1,B=A_n$, and each $A_{i+1}$ is a potential acquaintance of $A_i$. If an actor is idle and has no undelivered messages, then it is \emph{blocked}; otherwise it is \emph{unblocked}. 

Clearly, an unblocked actor is not terminated because it will become busy when the message is delivered. More generally, if an actor is blocked but \emph{potentially reachable} by an unblocked actor, then it is not terminated because it can become unblocked at some point in the future. Consider for example \Cref{fig:actor-graph-example} (1), in which actor $D$ is potentially reachable by actors $F,E,$ and $C$. Once $C$ receives the message in \Cref{fig:actor-graph-example} (2), it becomes a busy actor with a reference to $E$. If $C$ sends a message to $E$ and $E$ sends a message to $D$, then $D$ can become busy.

Conversely, consider an actor $A$ that is not potentially reachable by any unblocked actors, i.e.~$A$ is only potentially reachable by blocked actors. This means that $A$ is only \emph{reachable} by blocked actors, and there are no undelivered messages containing references to any of these actors. Hence there is no way for any of these blocked actors to become unblocked again, so they are all terminated.

Thus, we have shown that an actor is terminated---for a conservative definition of termination---precisely when it is only potentially reachable by blocked actors. One could therefore detect all terminated actors by computing a consistent global snapshot, but computing such snapshots would be infeasible for large actor systems. Instead, we will show how DRL can be used to find the terminated actors in an arbitrary set of \emph{local} actor snapshots $S$. Remarkably, $S$ does not have to be a consistent cut \cite{chandyDistributedSnapshotsDetermining1985}. This is thanks to the state metadata maintained by the DRL communication protocol, which we introduce in \Cref{sec:model}.

\section{Related Work}\label{sec:related-work}

\paragraph*{Global Termination and Snapshots}  
\emph{Global} termination detection (GTD) is used to determine when
\emph{all} processes have terminated
\cite{matternAlgorithmsDistributedTermination1987,matochaTaxonomyDistributedTermination1998}; detecting when individual \emph{actors} have terminated, as we do in this paper, is a more general problem.
For GTD, it suffices to obtain global message send and receive counts.
Most GTD algorithms also assume a fixed process topology. However, Lai
gives an algorithm 
that supports dynamic topologies such as in the actor model~\cite{laiTerminationDetectionDynamically1986}. Lai's
algorithm performs termination detection in ``waves'', disseminating
control messages along a spanning tree (such as an actor supervisor
hierarchy) so as to obtain consistent global message send and receive
counts. Venkatasubramanian et al.~take a similar approach to obtain a
consistent global snapshot of actor states in a distributed
system~\cite{venkatasubramanianScalableDistributedGarbage1992}. However,
such an approach does not scale well because it is not incremental:
garbage cannot be detected until all nodes in the system have
responded.  In contrast, DRL does not require a global snapshot, does not
require actors to coordinate their local snapshots, and does not
require waiting for all nodes before detecting local terminated
actors.

\paragraph*{Reference Tracking} We say that an idle actor is \emph{simple garbage} if it has no undelivered messages and no other actor has a reference to it.
Such actors can be detected with distributed reference counting
\cite{watsonEfficientGarbageCollection1987,bevanDistributedGarbageCollection1987,piquerIndirectReferenceCounting1991}
or with reference listing
\cite{DBLP:conf/iwmm/PlainfosseS95,wangDistributedGarbageCollection2006}
techniques.  In reference listing algorithms, each actor maintains a
partial list of actors that may have references to it. Whenever $A$ sends $B$ a
reference to $C$, it also sends an $\InfoMsg$ message informing $C$
about $B$'s reference. Once $B$ no longer needs a reference to $C$, it
informs $C$ by sending a $\ReleaseMsg$ message; this message should not be processed by $C$ until all preceding messages from $B$ to $C$ have been delivered. Thus an actor is
simple garbage when its reference listing is
empty. 

Our technique uses a form of \emph{deferred reference listing}, in which $A$ may also defer sending $\InfoMsg$
messages to $C$ until it releases its references to $C$.  This allows
$\InfoMsg$ and $\ReleaseMsg$ messages to be batched together, reducing communication
overhead. In a system where most messages contain one or more references, this optimization can reduce the total number of messages by a factor of two.

\paragraph*{Cyclic Garbage}

Actors that are transitively acquainted with one another are said to
form cycles. Cycles of terminated actors are called \emph{cyclic
garbage} and cannot be detected with reference listing alone.  Since
actors are hosted on nodes and cycles may span across multiple nodes,
detecting cyclic garbage requires sharing information between nodes to
obtain a consistent view of the global topology.  One approach is to
compute a global snapshot of the distributed system
\cite{kafuraConcurrentDistributedGarbage1995} using the Chandy-Lamport
algorithm \cite{chandyDistributedSnapshotsDetermining1985}; but this
requires pausing execution of all actors on a node to compute its
local snapshot.

Another approach is to add edges to the actor reference graph so
that actor garbage coincides with passive object garbage
\cite{vardhanUsingPassiveObject2003,wangActorGarbageCollection2010}. This
is convenient because it allows existing algorithms for distributed
passive object GC, such as
\cite{schelvisIncrementalDistributionTimestamp1989}, to be reused in
actor systems. However, such transformations require that actors know
when they have undelivered messages, which necessitates some form of
synchronization. Our approach also adds edges to the actor reference graph, but this time in the form of ``contact tracing'' (\Cref{sec:model}). Unlike prior work, this does not require synchronization and ensures that there is always a ``path'' (instead of a single reference) from an actor to its potential inverse acquaintances (\Cref{sec:chain-lemma}).

To avoid pausing executions, Wang and Varela proposed a reference
listing based technique called the \emph{pseudo-root} algorithm.  The
algorithm computes \emph{approximate} global snapshots and is
implemented in the SALSA runtime
\cite{wangDistributedGarbageCollection2006,wangConservativeSnapshotbasedActor2011}.
The pseudo-root algorithm requires acknowledgments for each application message and each reference contained inside a message, in the worst case producing several times more control messages than application messages. The algorithm also requires actors to write to shared memory (requiring synchronization) if they migrate
or release references during snapshot collection.  Our protocol does not require shared memory and, with batching, only sends control messages when references are released.  Wang and Varela also explicitly address migration of actors, 
a concern orthogonal to our algorithm.

\paragraph{MAC} Our technique was initially inspired by \emph{MAC}, an actor termination detection
algorithm implemented in the Pony runtime
\cite{clebschFullyConcurrentGarbage2013}. In MAC, actors send a local
snapshot to a designated cycle detector whenever their message queue
becomes empty, and send another notification whenever it becomes non-empty. Clebsch and Drossopoulou prove that for systems with
causal message delivery, a simple request-reply protocol is sufficient
to confirm that the cycle detector's view of the topology is
consistent.  However, enforcing causal delivery in a distributed
system imposes additional costs: the well-known strategy of using vector clocks requires $O(n)$ additional memory for each message, where $n$ is the number of nodes \cite{fidge1987timestamps}. Another approach, in which nodes must be arranged in a tree topology, does not impose this overhead but still prevents point-to-point communication between nodes~\cite{blessingTreeTopologiesCausal2017}. DRL is
similar to MAC, but does not require causal message delivery, supports
decentralized termination detection, and does not require actors to take
snapshots each time their message queues become empty. 

We achieve this greater flexibility by reifying information in an actor's local state that would otherwise be implicit in the timing of a snapshot. For instance, all references in DRL are associated with a unique identifying token. If an actor and its acquaintance both include reference $x$ in their snapshot, then the matching token allows us to conclude that the two snapshots are referring to the same reference. Another example of additional information in DRL is the message send and receive counts. In MAC, an actor can infer that it had no undelivered messages at the time of its snapshot if it sends a message to the cycle detector and receives an acknowledgment with no other messages arriving in between. In DRL, we assume that messages can be reordered or delayed by network partitions for an arbitrarily long time. This appears to necessitate the use of message send and receive counts.

\paragraph{Other Versions} An earlier version of DRL appeared in
\cite{plyukhinConcurrentGarbageCollection2018}. In this paper, we
formalize the description of the algorithm and prove its safety and
liveness.  In the process, we discovered that release acknowledgment
messages are unnecessary and that termination detection is more
flexible than we first thought: it is not necessary for GC to be
performed in distinct ``phases'' where every actor takes a snapshot in
each phase.  In particular, once an idle actor takes a snapshot, it
need not take another snapshot until it receives a fresh message.

This paper is an extension of \cite{DBLP:conf/concur/PlyukhinA20}. We have revised the text of several sections for clarity---particularly \Cref{sec:termination-detection}, which now uses the notion of a consistent set of snapshots and an alternative (but equivalent) definition of finalized sets. We also discovered a flaw in our original algorithm for finding maximal finalized sets of snapshots: under certain circumstances, the resulting set may be finalized but not necessarily maximal. The new \Cref{sec:maximal} elaborates on the issues at play and presents an improved algorithm with a proof of correctness. The section also includes the old algorithm as a ``heuristic'' that might be more efficient in some cases. Finally, \Cref{sec:coop} is entirely new and gives an algorithm for efficiently detecting distributed cycles of terminated actors across multiple actor systems.

\section{A Two-Level Semantic Model}\label{sec:model}

In this section we present the DRL communication protocol. We begin by motivating what kind of information a snapshot aggregator would need to detect terminated actors. Next we present the communication protocol more formally, using a two-level semantic model~\cite{venkatasubramanianReasoningMetaLevel1995}. In this model, a \emph{system-level} transition system interprets the operations performed by a user-facing
\emph{application-level} transition system. The application level defines the abstract operational semantics of the actor system from the user's perspective, including location transparency and fairness assumptions. Since DRL preserves this semantics, we leave out the application-level system; for a formalization, see \cite{aghaFoundationActorComputation1997}. The system-level transition system defines each actor's system-level state and what additional actions should be performed when the application level tried to do an operation, such as sending a message or spawning an actor. In the case of DRL, these operations sometimes cause additional system-level messages to be sent or for metadata to be added to an application-level message.

\subsection{Overview}
\label{sec:overview}

Ordinary actor systems allow an actor $A$ to send a message to actor $B$ if $A$ has $B$'s address. In DRL, actors must use \emph{reference objects} (abbreviated \emph{refobs}) instead; refobs combine a plain actor address (the address of the \emph{target}) with additional metadata, such as the address of the refob's designated \emph{owner}. A refob can only be used by its owner: in order for $A$ to give $B$ a reference to $C$, it explicitly creates a new refob owned by $B$. Once a refob is no longer needed, it is \emph{deactivated} by its owner and removed from the owner's local state. These operations could be done manually at the application level or handled automatically in the runtime via a suitable API.

The DRL communication protocol enriches each actor's state with a list of refobs that it currently owns and associated message counts representing the number of messages sent using each refob. Each actor also maintains a subset of the refobs of which it is the target, together with associated message receive counts. Lastly, actors perform a form of ``contact tracing'' by maintaining a subset of the refobs that they have created for other actors; we provide details about the bookkeeping later in this section.

The additional information above allows us to detect termination by inspecting actor snapshots. If a set of snapshots is consistent (in the sense of \cite{chandyDistributedSnapshotsDetermining1985}) then we can use the ``contact tracing'' information to determine whether the set is \emph{closed} under the potential inverse acquaintance relation (see \Cref{sec:chain-lemma}). Then, given a consistent and closed set of snapshots, we can use the message counts to determine whether an actor is blocked. We can therefore find all the terminated actors within a consistent set of snapshots.

In fact, DRL satisfies a stronger property: any set of snapshots that ``appears terminated'' in the sense above is guaranteed to be consistent. Hence, given an arbitrary  closed set of snapshots, it is possible to determine which of the corresponding actors have terminated. This allows a great deal of freedom in how snapshots are aggregated. For instance, each actor could set its own recurring timeout for when to take the next snapshot. The duration of this timeout could be changed based on runtime information, such as how long the actor has been alive; this would amount to a \emph{generational} approach to actor garbage collection \cite{DBLP:journals/cacm/LiebermanH83}.

\paragraph*{Reference Objects}

\begin{figure} \centering \tikzfig{contents/diagrams/references}
    \caption{An example showing how refobs are created and
destroyed. Below each actor we list all the ``facts'' related to $z$ that are stored in its local state. Although not pictured in the figure, $A$ also obtains facts $\Activated(x)$ and $\Activated(y)$ after spawning actors $B$ and $C$, respectively. Likewise, actors $B,C$ obtain facts $\Created(x),\Created(y)$, respectively, upon being spawned.}
    \label{fig:refob-example}
\end{figure}

A refob is a triple $(x,A,B)$, where $A$ is the owner actor's address, $B$ is the target actor's address, and $x$ is a globally unique token. An actor can cheaply generate such a token by combining its address with a local sequence number, since actor systems already guarantee that each address is unique. We will stylize a triple $(x,A,B)$ as $\Refob x A B$. We will also sometimes refer to such a refob as simply $x$, since tokens act as unique identifiers.

When an actor $A$ spawns an actor $B$ (\Cref{fig:refob-example}
(1, 2)) the DRL protocol creates a new refob
$\Refob x A B$ that is stored in both $A$ and $B$'s system-level
state, and a refob $\Refob w B B$ in $B$'s state. The refob $x$ allows $A$ to send application-level messages to
$B$. These messages are denoted $\AppMsg(x,R)$, where $R$ is the set of refobs contained in the message that $A$ has created for $B$. The refob $y$ corresponds to the \texttt{self} variable present in some actor languages. 

If $A$ has active refobs $\Refob x A B$ and $\Refob y A C$, then it can
create a new refob $\Refob z B C$ by generating a token $z$. In
addition to being sent to $B$, this refob must also temporarily be
stored in $A$'s system-level state and marked as ``created using $y$''
(\Cref{fig:refob-example} (3)). Once $B$ receives $z$, it must add
the refob to its system-level state and mark it as ``active''
(\Cref{fig:refob-example} (4)). Note that $B$ can have multiple
distinct refobs that reference the same actor in its state; this
can be the result of, for example, several actors concurrently sending
refobs to $B$.  Transition rules for spawning actors and sending
messages are given in \Cref{sec:standard-actor-operations}.

Actor $A$ may remove $z$ from its state once it has sent a
(system-level) $\InfoMsg$ message informing $C$ about $z$
(\Cref{fig:refob-example} (4)). Similarly, when $B$ no longer
needs its refob for $C$, it can ``deactivate'' $z$ by removing it
from local state and sending $C$ a (system-level) $\ReleaseMsg$
message (\Cref{fig:refob-example} (5)). Note that if $B$ already
has a refob $\Refob z B C$ and then receives another $\Refob {z'} B C$,
then it can be more efficient to defer deactivating the extraneous
$z'$ until $z$ is also no longer needed; this way, the $\ReleaseMsg$
messages can be batched together.

When $C$ receives an $\InfoMsg$ message, it records that the refob
has been created, and when $C$ receives a $\ReleaseMsg$ message, it
records that the refob has been released
(\Cref{fig:refob-example} (6)).  Note that these messages may
arrive in any order. Once $C$ has received both, it is permitted to
remove all facts about the refob from its local state. Transition
rules for these reference listing actions are given in
\Cref{sec:release-protocol}.

Once a refob has been created, it cycles through four states:
pending, active, inactive, or released.  A refob $\Refob z B C$ is
said to be \emph{pending} until it is received by its owner $B$. Once
received, the refob is \emph{active} until it is \emph{deactivated}
by its owner, at which point it becomes \emph{inactive}.  Finally,
once $C$ learns that $z$ has been deactivated, the refob is said to
be \emph{released}.  A refob that has not yet been released is
\emph{unreleased}.  

Slightly amending the definition we gave in \Cref{sec:background}, we say that $B$ is a \emph{potential acquaintance} of $A$
(and $A$ is a \emph{potential inverse acquaintance} of $B$) when there
exists an unreleased refob $\Refob x A B$. Thus, $B$ becomes a potential acquaintance of $A$ as soon as $x$ is created, and only ceases to be an acquaintance once it has received a $\ReleaseMsg$ message for every refob $\Refob y A B$ that has been created so far.

\begin{figure} \centering \tikzfig{contents/diagrams/message-counts-timelines-new}
    \caption{An event diagram~\cite{books/daglib/0066897} for actors $A,B,C$, illustrating message counts and consistent snapshots. Dashed arrows represent messages and dotted lines represent mutually quiescent cuts. For a cut to be mutually quiescent, it is necessary (but not sufficient) that the message send and receive counts agree for all participants.}
    \label{fig:message-counts}
\end{figure}

\paragraph*{Message Counts and Snapshots}

For each refob $\Refob x A B$, the owner $A$ maintains a count of how many $\AppMsg$ and $\InfoMsg$ messages have been sent along $x$; this count can be deleted when $A$
deactivates $x$. Each message is annotated with the refob used to
send it. Whenever $B$ receives an $\AppMsg$ or $\InfoMsg$ message along $x$, it
correspondingly increments a receive count for $x$; this count can be deleted once $x$
has been released. Thus the memory overhead of message counts is linear in
the number of unreleased refobs. \Cref{fig:message-counts} gives an example.

A snapshot is a copy of all the facts in an actor's system-level state at some point in time. We will assume throughout the paper that in every set of snapshots $Q$, each snapshot was taken by a different actor; such a set is also said to form a \emph{cut}. Recall that a set of snapshots $Q$ is \emph{consistent} if no snapshot in $Q$ causally precedes any other \cite{chandyDistributedSnapshotsDetermining1985}; it is as if all the actors in $Q$ took their snapshots simultaneously. Let us also say that $Q$ is \emph{mutually quiescent} if for all actors $A,B$ in $Q$, all messages sent from $A$ to $B$ before $A$'s snapshot were also received before $B$'s snapshot.  Notice that mutual quiescence is just a special case of consistency, in which all messages sent by actors in the cut to actors in the cut have been delivered. Moreover, if each actor in $Q$ is idle and $Q$ contains each actor's potential inverse acquaintances, then $Q$ corresponds to a terminated set of actors: every actor in $Q$ is blocked and only potentially reachable by other blocked actors in $Q$.

One might therefore hope to check whether $Q$ is mutually quiescent by simply comparing the message send and receive counts of all snapshots in $Q$. Clearly, if $Q$ is mutually quiescent, then the participants' send and receive counts will agree for each $\Refob x A B$ where $A,B \in Q$. However, the converse may be false for two reasons: out of order delivery of messages, and temporarily null message counts.

\begin{enumerate}
    \item \emph{Out of order delivery:} In \Cref{fig:message-counts}, a snapshot from $B$ when $\SentCount(z,1)$ is in its knowledge set would not be consistent with a snapshot from $C$ when $\RecvCount(z,1)$ is in its knowledge set. This is because the message $\InfoMsg(z,y)$ is sent after $B$'s snapshot and received before $C$'s snapshot. To guarantee this situation does not occur, we must be able to prove that $B$ does not send any messages along $z$ in the interval between $B$'s snapshot and $C$'s snapshot. In particular, this holds when $C$'s snapshot happens before $B$'s snapshot. 
    \item \emph{Null message count:} Based on the available information, $C$'s snapshot at $t_0$ in \Cref{fig:message-counts} appears mutually quiescent with $B$'s snapshot at $t_2$ and $C$'s snapshot at $t_2$ appears mutually quiescent with $B$'s snapshot at $t_0$---despite neither of these pairs being truly mutually quiescent. The problem is that when $B$'s send count for $\Refob z B C$ is null, it could be because $B$ has not yet received $z$ or because $B$ has already deactivated $z$. Likewise, $C$'s receive count could be null because it has not yet received any messages along $z$ or because $z$ has already been released. 
\end{enumerate}
To distinguish these scenarios, we incorporate the snapshots of $C$'s other potential inverse acquaintances---such as $A$---into the snapshot set $Q$. In \Cref{sec:chain-lemma} we identify a distributed property called the \emph{Chain Lemma} that must hold in any consistent set of snapshots closed under the potential inverse acquaintance relation. We show, in \Cref{sec:termination-detection}, that combining the Chain Lemma with message counts is sufficient to determine whether a set of snapshots is mutually quiescent.

\paragraph*{Definitions}

We use the capital letters $A,B,C,D,E$ to denote actor addresses.
Tokens are denoted $x,y,z$, with a special reserved token $\NullToken$
for messages from external actors.

A \emph{fact} is a value that takes one of the following forms:
$\Created(x)$, $\Released(x)$, $\CreatedUsing(x,y)$, $\Activated(x)$, $\Unreleased(x)$,
$\SentCount(x,n)$, or $\RecvCount(x,n)$ for some refobs $x,y$ and
natural number $n$.  Each actor's state holds a set of facts about
refobs and message counts called its \emph{knowledge set}.  We use
$\phi,\psi$ to denote facts and $\Phi,\Psi$ to denote finite sets of
facts.  Each fact may be interpreted as a \emph{predicate} that
indicates the occurrence of some past event. Interpreting a set of
facts $\Phi$ as a set of axioms, we write $\Phi \vdash \phi$ when
$\phi$ is derivable by first-order logic from $\Phi$ with the
following additional rules: 
\begin{itemize}
    \item If $(\not\exists n \in \mathbb N,\ \SentCount(x,n) \in
\Phi)$ then $\Phi \vdash \SentCount(x,0)$
    \item If $(\not\exists n \in \mathbb N,\ \RecvCount(x,n) \in
\Phi)$ then $\Phi \vdash \RecvCount(x,0)$
    \item If $\Phi \vdash \Created(x) \land \lnot \Released(x)$ then
$\Phi \vdash \Unreleased(x)$
    \item If $\Phi \vdash \CreatedUsing(x,y)$ then $\Phi \vdash
\Created(y)$
\end{itemize}
For convenience, we define a pair of functions
$\IncSent(x,\Phi),\IncRecv(x,\Phi)$ for incrementing message
send/receive counts, as follows: If $\SentCount(x,n) \in \Phi$ for some
$n$, then
$\IncSent(x,\Phi) = (\Phi \setminus \{\SentCount(x,n)\}) \cup
\{\SentCount(x,n+1)\}$; otherwise,
$\IncSent(x,\Phi) = \Phi \cup \{\SentCount(x,1)\}$. Likewise for
$\IncRecv$ and $\RecvCount$.

Recall that an actor is either \emph{busy} (processing a message) or
\emph{idle} (waiting for a message). An actor with knowledge set
$\Phi$ is denoted $[\Phi]$ if it is busy and $(\Phi)$ if it is idle.

Our specification includes both \emph{system messages} (also called
\emph{control messages}) and \emph{application messages}. The former
are automatically generated by the DRL  protocol and handled at the system
level, whereas the latter are explicitly created and consumed by
user-defined behaviors. Application-level messages are denoted
$\AppMsg(x,R)$. The argument $x$ is the refob used to send the
message. The second argument $R$ is a set of refobs created by the
sender to be used by the destination actor. Any remaining application-specific data in the message is omitted in our notation.

The DRL communication protocol uses two kinds of system messages. $\InfoMsg(y, z, B)$ is a message sent from an actor $A$ to an actor $C$, informing it that a new refob $\Refob z B C$ was created using $\Refob y A C$. $\ReleaseMsg(x,n)$ is a message sent from an actor $A$ to an actor $B$, informing it that the refob $\Refob x A B$ has been deactivated and that a total of $n$ messages have been sent along $x$.

A \emph{configuration} $\Config{\alpha}{\mu}{\rho}{\chi}$ is a
quadruple $(\alpha,\mu,\rho,\chi)$ where: $\alpha$ is a mapping from actor addresses to knowledge sets; $\mu$ is a mapping from actor addresses to multisets of messages; and $\rho,\chi$ are sets of actor addresses. Actors in $\dom(\alpha)$ are \emph{internal actors} and actors in $\chi$ are
\emph{external actors}; the two sets may not intersect. The mapping $\mu$ associates each actor with undelivered messages to that actor. Actors in
$\rho$ are \emph{receptionists}.  We will ensure $\rho \subseteq \dom(\alpha)$ remains
valid in any configuration that is derived from a configuration where
the property holds (referred to as the locality laws in
\cite{Baker-Hewitt-laws77}).

Configurations are denoted by $\kappa$, $\kappa'$, $\kappa_0$,
etc. If an actor address $A$ (resp. a token $x$), does not occur in
$\kappa$, then the address (resp. the token) is said to be
\emph{fresh}.  We assume a facility for generating fresh addresses and
tokens.

In order to express our transition rules in a pattern-matching style, we will employ the following shorthand. Let $\alpha,[\Phi]_A$ refer to a
mapping $\alpha'$ where $\alpha'(A) = [\Phi]$ and $\alpha =
\alpha'|_{\dom(\alpha') \setminus \{A\}}$.  Similarly, let
$\mu,\Msg{A}{m}$ refer to a mapping $\mu'$ where $m \in \mu'(A)$ and
$\mu = \mu'|_{\dom(\mu') \setminus \{A\}} \cup \{A \mapsto \mu'(A)
\setminus \{m\}\}$. Informally, the expression $\alpha,[\Phi]_A$ refers to a set of actors containing both $\alpha$ and the busy actor $A$ (with knowledge set $\Phi$); the expression $\mu, \Msg{A}{m}$ refers to the set of messages containing both $\mu$ and the message $m$ (sent to actor $A$).

The rules of our transition system define atomic transitions from one configuration
to another.  Each transition rule has a label $l$, parameterized by some
variables $\vec x$ that occur in the left- and right-hand
configurations. Given a configuration $\kappa$, these parameters
functionally determine the next configuration $\kappa'$. Given
arguments $\vec v$, we write $\kappa \Step{l(\vec v)} \kappa'$ to denote a semantic step from $\kappa$ to $\kappa'$ using rule $l(\vec v)$.

We refer to a label with arguments $l(\vec v)$ as an \emph{event},
denoted $e$. A sequence of events is denoted $\pi$. If $\pi =
e_1,\dots,e_n$ then we write $\kappa \Step \pi \kappa'$ when $\kappa
\Step{e_1} \kappa_1 \Step{e_2} \dots \Step{e_n} \kappa'$. If there
exists $\pi$ such that $\kappa \Step \pi \kappa'$, then $\kappa'$ is
\emph{derivable} from $\kappa$. An \emph{execution} (also called a \emph{computation path} \cite{aghaFoundationActorComputation1997}) is a sequence of events $e_1,\dots,e_n$ such that
$\kappa_0 \Step{e_1} \kappa_1 \Step{e_2} \dots \Step{e_n} \kappa_n$,
where $\kappa_0$ is the initial configuration
(\Cref{sec:initial-configuration}).  We say that a property holds \emph{at time $t$} if it holds in $\kappa_t$. We will also employ the shorthand that $\alpha_t$ is the actor configuration at time $t$, i.e. $\kappa_t = \Config{\alpha_t}{\mu}{\rho}{\chi}$.

Note that the DRL communication protocol does not require a notion of a unique global time. We could have given a more general specification using concurrent rewriting~\cite{DBLP:journals/tcs/Meseguer92}, in which potential executions are partial orders of events. A given execution in such a specification can be mapped to an execution in our system by mapping its partial order to a total order which respects the ordering specified in the partial order. We refer to ``time'' as an ordinal corresponding to an arbitrary total order that is consistent with a partial order in a system's execution (see \cite{10.5555/889486,books/daglib/0066897}). This allows us to prove various properties by induction on time $t$ instead of by more complicated means.

\subsection{Initial Configuration}\label{sec:initial-configuration}

The initial configuration $\kappa_0$ consists of a single actor in a
busy state:
$$\Config{[\Phi]_A}{\emptyset}{\emptyset}{\{E\}},$$
where
$\Phi = \{\Activated(\Refob x A E),\ \Created(\Refob y A A),\
\Activated(\Refob y A A)\}$. The actor's knowledge set includes a
refob to itself and a refob to an external actor $E$. $A$ can
become a receptionist by sending $E$ a refob to itself.
Henceforth, we will only consider configurations that are derivable
from an initial configuration.

\subsection{Standard Actor Operations}\label{sec:standard-actor-operations}

\begin{figure}[t]
$\textsc{Spawn}(x, A, B)$ 
$$\Config{\alpha, [\Phi]_A}{\mu}{\rho}{\chi} \InternalStep \Config{\alpha, [\Phi \cup \{ \Activated(\Refob x A B) \}]_A, [\Psi]_B}{\mu}{\rho}{\chi}$$
\begin{tabular}{ll}
where & $x,y,B$  fresh\\
and & $\Psi = \{ \Created(\Refob x A B),\ \Created(\Refob {y} B B),\ \Activated(\Refob y B B) \}$
\end{tabular}

\vspace{0.5cm}

$\textsc{Send}(x,\vec y, \vec z, A, B,\vec C)$ 
$$\Config{\alpha, [\Phi]_A}{\mu}{\rho}{\chi} \InternalStep \Config{\alpha, [\IncSent(x,\Phi) \cup \Psi]_A}{\mu, \Msg{B}{\AppMsg(x,R)}}{\rho}{\chi}$$
\begin{tabular}{ll}
where & $\vec z$ fresh and $n = |\vec y| = |\vec z| = |\vec C|$\\
and & $\Phi \vdash \Activated(\Refob x A B)$ and $\forall i \le n,\ \Phi \vdash \Activated(\Refob{y_i}{A}{C_i})$\\
and & $R = \{\Refob{z_i}{B}{C_i}\ |\ i \le n \}$ and $\Psi = \{\CreatedUsing(y_i,z_i)\ |\ i \le n \}$
\end{tabular}

\vspace{0.5cm}

$\textsc{Receive}(x,B,R)$ 
$$\Config{\alpha, (\Phi)_B}{\mu, \Msg{B}{\AppMsg(x,R)}}{\rho}{\chi} \InternalStep \Config{\alpha, [\IncRecv(x,\Phi) \cup \Psi]_B}{\mu}{\rho}{\chi}$$
\begin{tabular}{ll}
where $\Psi = \{\Activated(z)\ |\ z \in R\}$
\end{tabular}

\vspace{0.5cm}

$\textsc{Idle}(A)$ 
$$\Config{\alpha, [\Phi]_A}{\mu}{\rho}{\chi} \InternalStep \Config{\alpha, (\Phi)_A}{\mu}{\rho}{\chi}$$

  \caption{Rules for standard actor interactions.}
  \label{rules:actors}
\end{figure}

\Cref{rules:actors} gives transition rules for standard actor operations, such as spawning actors and sending messages. Each of these rules corresponds a rule in the standard operational semantics of actors~\cite{aghaFoundationActorComputation1997}. Note that each rule is atomic, but can just as well be implemented as a sequence of several smaller steps without loss of generality because actors do not share state---see \cite{aghaFoundationActorComputation1997} for a formal proof.

The \textsc{Spawn} event allows a busy actor $A$ to spawn a new actor $B$ and creates two refobs $\Refob x A B,\ \Refob y B B$. $B$ is initialized with knowledge about $x$ and $y$ via the facts $\Created(x),\Created(y)$. The facts $\Activated(x), \Activated(y)$ allow $A$ and $B$ to immediately begin sending messages to $B$. Note that implementing \textsc{Spawn} does not require a synchronization protocol between $A$ and $B$ to construct $\Refob x A B$. The parent $A$ can pass both its address and the freshly generated token $x$ to the constructor for $B$. Since actors typically know their own addresses, this allows $B$ to construct the triple $(x,A,B)$. Since the \texttt{spawn} call typically returns the address of the spawned actor, $A$ can also create the same triple.

The \textsc{Send} event allows a busy actor $A$ to send an application-level message to $B$ containing a set of refobs $z_1,\dots,z_n$ to actors $\vec C = C_1,\dots,C_n$.  Note that it is possible that $B = A$ or $C_i = A$ for some $i$ in this sequence---i.e., an actor may send itself a message, or it may send $B$ its a refob containing its own address.  For each new refob $z_i$, we say that the message \emph{contains $z_i$}. Any other data in the message besides these refobs is irrelevant to termination detection and therefore omitted. To send the message, $A$ must have active refobs to both the target actor $B$ and to every actor $C_1,\dots,C_n$ referenced in the message. For each target $C_i$, $A$ adds a fact $\CreatedUsing(y_i,z_i)$ to its knowledge set; we say that $A$ \emph{created $z_i$ using $y_i$}. Finally, $A$ must increment its $\SentCount$ count for the refob $x$ used to send the message; we say that the message is sent \emph{along $x$}.

The \textsc{Receive} event allows an idle actor $B$ to become busy by consuming an application message sent to $B$. Before  performing subsequent actions, $B$ increments the receive count for $x$ and adds all refobs in the message to its knowledge set.

Finally, the \textsc{Idle} event puts a busy actor into the idle state, enabling it to consume another message.

\subsection{Release Protocol}\label{sec:release-protocol}

\begin{figure}[t!]

$\textsc{SendInfo}(y,z,A,B,C)$ 
$$\Config{\alpha, [\Phi \cup \Psi]_A}{\mu}{\rho}{\chi} \InternalStep \Config{\alpha, [\IncSent(y,\Phi)]_A}{\mu,\Msg{C}{\InfoMsg(y,z,B)}}{\rho}{\chi}$$
\begin{tabular}{ll}
where $\Psi = \{\CreatedUsing(\Refob y A C,\Refob z B C)\}$
\end{tabular}

\vspace{0.5cm}

$\textsc{Info}(y,z,B,C)$ 
$$\Config{\alpha, (\Phi)_C}{\mu, \Msg{C}{\InfoMsg(y,z,B)}}{\rho}{\chi} \InternalStep \Config{\alpha, (\IncRecv(y,\Phi) \cup \Psi)_C}{\mu}{\rho}{\chi}$$
\begin{tabular}{ll}
where $\Psi = \{\Created(\Refob z B C)\}$
\end{tabular}

\vspace{0.5cm}

$\textsc{SendRelease}(x,A,B)$ 
$$\Config{\alpha, [\Phi \cup \Psi]_A}{\mu}{\rho}{\chi} \InternalStep \Config{\alpha, [\Phi]_A}{\mu, \Msg{B}{\ReleaseMsg(x,n)}}{\rho}{\chi}$$
\begin{tabular}{ll}
where &$\Psi = \{\Activated(\Refob x A B), \SentCount(x,n)\}$\\
and & $\not\exists y,\ \CreatedUsing(x,y) \in \Phi$
\end{tabular}

\vspace{0.5cm}

$\textsc{Release}(x,A,B)$
$$\Config{\alpha, (\Phi)_B}{\mu, \Msg{B}{\ReleaseMsg(x,n)}}{\rho}{\chi} \InternalStep \Config{\alpha, (\Phi \cup \{\Released(x)\})_B}{\mu}{\rho}{\chi}$$
\begin{tabular}{l}
only if $\Phi \vdash \RecvCount(x,n)$
\end{tabular}

\vspace{0.5cm}

$\textsc{Compaction}(x,B,C)$ 
$$\Config{\alpha, (\Phi \cup \Psi)_C}{\mu}{\rho}{\chi} \InternalStep \Config{\alpha, (\Phi)_C}{\mu}{\rho}{\chi}$$
\begin{tabular}{ll}
where & $\Psi = \{\Created(\Refob x B C), \Released(\Refob x B C), \RecvCount(x,n)\}$ \\&for some $n \in \mathbb N$\\
or & $\Psi = \{\Created(\Refob x B C), \Released(\Refob x B C)\}$ and\\
& $\forall n \in \mathbb N,\ \RecvCount(x,n) \not\in \Phi$
\end{tabular}

\vspace{0.5cm}

$\textsc{Snapshot}(A, \Phi)$ 
$$\Config{\alpha, (\Phi)_A}{\mu}{\rho}{\chi} \InternalStep \Config{\alpha, (\Phi)_A}{\mu}{\rho}{\chi}$$

  \caption{Rules for performing the release protocol.}
  \label{rules:release}
\end{figure}

Whenever an actor creates or receives a refob, it adds facts to its knowledge set. To remove these facts when they are no longer needed, actors can perform the \emph{release protocol} defined in \Cref{rules:release}. All of these rules are not present in the standard operational semantics of actors.

The \textsc{SendInfo} event allows a busy actor $A$ to inform $C$ about a refob $\Refob z B C$ that it created using $y$; we say that the $\InfoMsg$ message is sent \emph{along $y$} and \emph{contains $z$}. This event allows $A$ to remove the fact $\CreatedUsing(y,z)$ from its knowledge set. It is crucial that $A$ also increments its $\SentCount$ count for $y$ to indicate an undelivered $\InfoMsg$ message sent to $C$: it allows the snapshot aggregator to detect when there are undelivered $\InfoMsg$ messages, which contain refobs. This message is delivered with the \textsc{Info} event, which adds the fact $\Created(\Refob z B C)$ to $C$'s knowledge set and correspondingly increments $C$'s $\RecvCount$ count for $y$.

When an actor $A$ no longer needs $\Refob x A B$ for sending messages, $A$ can deactivate $x$ with the \textsc{SendRelease} event; we say that the $\ReleaseMsg$ is sent \emph{along $x$}. A precondition of this event is that $A$ has already sent messages to inform $B$ about all the refobs it has created using $x$. In practice, an implementation may defer sending any $\InfoMsg$ or $\ReleaseMsg$ messages to a target $B$ until all $A$'s refobs to $B$ are deactivated. This introduces a trade-off between the number of control messages and the rate of simple garbage detection (\Cref{sec:chain-lemma}).

Each $\ReleaseMsg$ message for a refob $x$ includes a count $n$ of the number of messages sent using $x$. This ensures that $\ReleaseMsg(x,n)$ is only delivered after all the preceding messages sent along $x$ have been delivered. Once the \textsc{Release} event can be executed, it adds the fact that $x$ has been released to $B$'s knowledge set. Once $C$ has received both an $\InfoMsg$ and $\ReleaseMsg$ message for a refob $x$, it may remove facts about $x$ from its knowledge set using the \textsc{Compaction} event.

Finally, the \textsc{Snapshot} event captures an idle actor's knowledge set. For simplicity, we have omitted the process of disseminating snapshots to an aggregator. Although this event does not change the configuration, it allows us to prove properties about snapshot events at different points in time.

\subsection{Composition and Effects}\label{sec:actor-composition}

\begin{figure}
$\textsc{In}(x,A,R)$ 
$$\Config{\alpha}{\mu}{\rho}{\chi} \ExternalStep \Config{\alpha}{\mu, \Msg{A}{\AppMsg(x, R)}}{\rho}{\chi \cup \chi'}$$
\begin{tabular}{ll}
where & $A \in \rho$ and $R = \{ \Refob{x_1}{A}{B_1}, \dots, \Refob{x_n}{A}{B_n} \}$ and $x_1,\dots,x_n$ fresh\\
and & $\{B_1,\dots,B_n\} \cap \dom(\alpha) \subseteq \rho$ and $\chi' = \{B_1,\dots,B_n\} \setminus \dom(\alpha)$ \\
\end{tabular}

\vspace{0.5cm}

$\textsc{Out}(x,B,R)$
$$\Config{\alpha}{\mu,\Msg{B}{\AppMsg(x, R)}}{\rho}{\chi} \ExternalStep \Config{\alpha}{\mu}{\rho \cup \rho'}{\chi}$$
\begin{tabular}{ll}
where $B \in \chi$ and $R = \{ \Refob{x_1}{B}{C_1}, \dots, \Refob{x_n}{B}{C_n} \}$ and $\rho' = \{C_1,\dots,C_n\} \cap \dom(\alpha)$
\end{tabular}

\vspace{0.5cm}

$\textsc{ReleaseOut}(x,B)$ 
$$\Config{\alpha}{\mu,\Msg{B}{\ReleaseMsg(x,n)}}{\rho}{\chi \cup \{B\}} \ExternalStep \Config{\alpha}{\mu}{\rho}{\chi \cup \{B\}}$$

\vspace{0.2cm}

$\textsc{InfoOut}(y,z,A,B,C)$ 
$$\Config{\alpha}{\mu,\Msg{C}{\InfoMsg(y,z,A,B)}}{\rho}{\chi \cup \{C\}} \ExternalStep \Config{\alpha}{\mu}{\rho}{\chi \cup \{C\}}$$

  \caption{Rules for interacting with the outside world.}
  \label{rules:composition}
\end{figure}

We give rules to dictate how internal actors interact with external actors in
\Cref{rules:composition}. The \textsc{In} and \textsc{Out} rules correspond to similar rules in the standard operational semantics of actors.

External actors may or may not participate in the DRL protocol themselves. It would be routine (but tedious) to define the composition of two DRL systems and to give additional rules for exchanging $\InfoMsg$ and $\ReleaseMsg$ messages between them. For simplicity, we only define the bare minimum interaction between a system and its environment; all $\ReleaseMsg$ and $\InfoMsg$ messages sent to external actors are simply dropped by the \textsc{ReleaseOut} and \textsc{InfoOut} events. For a complete formalization of actor composition, see \cite{aghaFoundationActorComputation1997}. We do, however, explore how snapshot aggregators from different actor systems can cooperate to detect cycles of terminated actors across the two systems (\Cref{sec:coop}).

The \textsc{In} event allows an external actor to send an application-level message to a receptionist $A$ containing a set of refobs $R$, all owned by $A$. If the external actor participates in DRL, the message is annotated as usual with the token $x$ used to send the message. Otherwise, a special $\NullToken$ token can be used instead. All targets in $R$ that are not internal actors are added to the set of external actors.

The \textsc{Out} event delivers an application-level message to an external actor with a set of refobs $R$. All internal actors referenced in $R$ become receptionists because their addresses have been exposed to the outside world.

\subsection{Basic Properties}

We now prove some basic properties of our model, both to help understand its semantics and to assist with later proofs.

\begin{lem}\label{lem:release-is-final}
If $B$ has undelivered messages along $\Refob x A B$, then $x$ is an unreleased refob.
\end{lem}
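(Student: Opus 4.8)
The plan is to proceed by induction on time $t$, showing that the invariant ``if $B$ has an undelivered message along $\Refob x A B$ at time $t$, then $x$ is unreleased at time $t$'' is preserved by every transition rule. Recall that $x$ is released exactly when $C$'s (here $B$'s) knowledge set derives both $\Created(x)$ and $\Released(x)$; equivalently, $x$ is unreleased when $\lnot\Released(x)$ holds in $B$'s state. An undelivered message ``along $x$'' is either an $\AppMsg(x,R)$ in $\mu(B)$ or, if we read the statement to cover all traffic on $x$, also an $\InfoMsg(x,\cdot,\cdot)$ or a $\ReleaseMsg(x,n)$ addressed to $B$; the key case is $\AppMsg$.

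First I would establish the base case: in $\kappa_0$ the mapping $\mu$ is empty, so the hypothesis is vacuous. For the inductive step, I would examine which rules can (a) create an undelivered message along $x$, (b) add $\Released(x)$ to $B$'s state, or (c) do neither but potentially interfere. For (a): $\textsc{Send}(x,\dots)$ puts $\AppMsg(x,R)$ into $\mu(B)$, but its precondition is $\Phi \vdash \Activated(\Refob x A B)$, i.e.\ $A$ still holds $x$ as active, so $A$ has not yet executed $\textsc{SendRelease}(x,A,B)$; since only $A$ can send a $\ReleaseMsg$ along $x$ (it is the unique owner), $\Released(x)$ cannot yet be in $B$'s state — one needs a small auxiliary fact here, essentially that $\Released(x) \in \alpha(B)$ requires a prior $\textsc{SendRelease}$ by $A$, which requires $A$ to have dropped $\Activated(x)$, contradicting the $\textsc{Send}$ precondition. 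For (b): $\textsc{Release}(x,A,B)$ adds $\Released(x)$, but its guard is $\Phi \vdash \RecvCount(x,n)$ where $n$ is the count carried by the $\ReleaseMsg$, and $\textsc{SendRelease}$ set $n$ to $A$'s $\SentCount$ for $x$ at the moment of deactivation — i.e.\ the total number of $\AppMsg/\InfoMsg$ messages ever sent along $x$. Since each such send increments $A$'s sent count and each delivery increments $B$'s receive count, $B$'s receive count reaching $n$ forces all $n$ messages along $x$ to have been delivered already; hence at the instant $\Released(x)$ is added there are no undelivered messages along $x$, and the invariant is maintained. The remaining rules either do not touch $\mu(B)$ on token $x$ or do not add $\Released(x)$, so they preserve the invariant trivially.

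The main obstacle is the matching argument in case (b): I need the fact that the count $n$ in $\ReleaseMsg(x,n)$ equals the total number of messages ever sent along $x$, together with a counting lemma that $B$'s $\RecvCount$ for $x$ never exceeds the number of delivered messages along $x$ and that undelivered messages along $x$ account for exactly the gap between $A$'s sent count and $B$'s received count (modulo the in-flight $\ReleaseMsg$ itself). This is really the ``$\ReleaseMsg$ is delivered last'' property informally stated when the rule was introduced; formalizing it may require a short companion invariant — e.g.\ ``(sent count for $x$) $=$ (received count for $x$) $+$ (number of undelivered $\AppMsg/\InfoMsg$ along $x$) $+$ (number of undelivered $\ReleaseMsg$ along $x$ times the appropriate adjustment)'' — proved by the same induction on $t$. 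I would either bundle this into the statement being proved or cite it as an easy simultaneous induction, and then the lemma follows.
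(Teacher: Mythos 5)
Your proposal is correct and follows essentially the same route as the paper's (much terser) proof: the paper likewise argues that all three message types along $x$ can only be sent while $x$ is active, and that the message-count guard on the \textsc{Release} rule forces every message sent along $x$ to be delivered before $\Released(x)$ can be recorded. Your version simply makes explicit the induction on time and the companion counting invariant that the paper leaves implicit (and later partially formalizes as Lemma~\ref{lem:msg-counts}).
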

\begin{proof}
    There are three types of messages: $\AppMsg(x,-), \InfoMsg(x,-,-,-),$ and $\ReleaseMsg(x,-)$. All three messages can only be sent when $x$ is active. Moreover, the \textsc{Release} rule ensures that they must all be delivered before $x$ can be released.
\end{proof}

\begin{lem}\label{lem:facts-remain-until-cancelled}
$\ $
\begin{itemize}
    \item Once $\CreatedUsing(\Refob y A C, \Refob z B C)$ is added to $A$'s knowledge set, it will not be removed until after $A$ has sent an $\InfoMsg$ message containing $z$ to $C$.

    \item Once $\Created(\Refob z B C)$ is added to $C$'s knowledge set, it will not be removed until after $C$ has received the (unique) $\ReleaseMsg$ message along $z$.

    \item Once $\Released(\Refob z B C)$ is added to $C$'s knowledge set, it will not be removed until after $C$ has received the (unique) $\InfoMsg$ message containing $z$.
\end{itemize}
\end{lem}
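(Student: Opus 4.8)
The plan is to prove each of the three items by a direct case analysis over the transition rules, exploiting the fact that an actor's knowledge set is only ever modified by one of the atomic rules of \Cref{rules:actors,rules:release} (the rules of \Cref{rules:composition} leave every internal actor's knowledge set untouched, so they can be dismissed at once). For a fixed fact $\phi$ and a time $t$ at which $\phi$ lies in some actor's knowledge set, it suffices to enumerate every rule whose firing could delete $\phi$, and to check that each such rule performs — as a side effect — or requires as a precondition the message event named in the corresponding item; the stated persistence then follows by induction on the length of the execution starting at $t$.

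For the first item, a fact $\CreatedUsing(\Refob y A C,\Refob z B C)$ is introduced into a knowledge set only by \textsc{Send}, and the only rule that can remove it is $\textsc{SendInfo}(y,z,A,B,C)$, whose right-hand side places the message $\InfoMsg(y,z,B)$ — which by definition contains $z$ — into $\mu$. Hence the fact survives at $A$ exactly up to the step that sends this $\InfoMsg$ to $C$. For the second and third items, the key observation is that \textsc{Compaction} is the only rule that deletes a $\Created(\cdot)$ or a $\Released(\cdot)$ fact, and that an instance $\textsc{Compaction}(z,B,C)$ can fire only when $C$'s knowledge set contains \emph{both} $\Created(\Refob z B C)$ and $\Released(\Refob z B C)$. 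Now $\Released(\Refob z B C)$ is placed in $C$'s knowledge set only by \textsc{Release}, which consumes a message $\ReleaseMsg(z,n)$ addressed to $C$; so the very precondition that lets \textsc{Compaction} remove $\Created(\Refob z B C)$ already witnesses that $C$ has received a $\ReleaseMsg$ along $z$, which gives the second item. Symmetrically, $\Created(\Refob z B C)$ is placed in $C$'s knowledge set by \textsc{Info}, which consumes an $\InfoMsg$ containing $z$, and this gives the third item.

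Two bookkeeping points call for a little care. First, the uniqueness of ``the $\ReleaseMsg$ along $z$'' and of ``the $\InfoMsg$ containing $z$'': because $z$ is a globally unique token, the facts enabling the relevant sends — namely $\Activated(\Refob z B C)$ for \textsc{SendRelease} and $\CreatedUsing(\cdot,z)$ for \textsc{SendInfo} — are each established at most once and are consumed by the send, so each of these messages is sent at most once. Second, besides \textsc{Info} there is one other way a $\Created(\cdot)$ fact can appear, namely the \textsc{Spawn} rule, which installs $\Created(\Refob x A B)$ and $\Created(\Refob y B B)$ in a freshly spawned actor; for the third item one therefore restricts attention to refobs not originating from \textsc{Spawn} (which are the refobs for which the statement is later invoked). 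I expect this enumeration — pinning down the complete list of rules that add or remove each of the three kinds of fact and ruling out any alternative removal path — to be the only real obstacle; once that list is fixed, each item drops out immediately from the shape of the relevant rule.
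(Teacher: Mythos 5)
Your proposal is correct and matches the paper's approach: the paper's entire proof is ``Immediate from the transition rules,'' and your rule-by-rule enumeration (in particular, that \textsc{Compaction} is the sole rule deleting $\Created$ or $\Released$ facts and requires both to be present) is exactly the verification the paper leaves implicit. Your caveat about \textsc{Spawn}-originated $\Created$ facts is a reasonable reading of the third bullet and does not affect how the lemma is later used.
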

\begin{proof}
    Immediate from the transition rules.
\end{proof}

The following lemma formalizes the argument made in \Cref{sec:overview}. In our model, it is possible for the message counts of two actor snapshots to agree, and yet for there to be undelivered messages between the two actors. However, in the special case where no messages are sent during the interval between the two snapshots, we can indeed trust the message counts to accurately reflect the number of undelivered messages.

\begin{lem}\label{lem:msg-counts}
    Consider a refob $\Refob x A B$.  Let $t_1, t_2$ be times such that $x$ has not yet been deactivated at $t_1$ and $x$ has not yet been released at $t_2$. In particular, $t_1$ and $t_2$ may be before the creation time of $x$.
    
    Suppose that $\alpha_{t_1}(A) \vdash \SentCount(x,n)$ and $\alpha_{t_2}(B) \vdash \RecvCount(x,m)$ and, if $t_1 < t_2$, that $A$ does not send any messages along $x$ during the interval $[t_1,t_2]$ . Then the difference $\max(n - m,0)$ is the number of messages sent along $x$ before $t_1$ that were not received before $t_2$.
\end{lem}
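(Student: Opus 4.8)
The plan is to track, over the interval of interest, the two quantities whose difference we want to control: the number $S(t)$ of messages sent along $x$ up to time $t$, and the number $R(t)$ of messages along $x$ received up to time $t$. The key invariant I would establish, by induction on time, is that whenever $\alpha_t(A) \vdash \SentCount(x,k)$ at a moment when $x$ has not yet been deactivated, then $k = S(t)$; symmetrically, whenever $\alpha_t(B) \vdash \RecvCount(x,j)$ at a moment when $x$ has not yet been released, then $j = R(t)$. For the send count this follows from the transition rules: the count is $0$ (via the derivation rule for a missing $\SentCount$ fact) before $x$ is created and up until the first $\textsc{Send}$ or $\textsc{SendInfo}$ along $x$, it is incremented by exactly one by each such event (by $\IncSent$), and it is never otherwise modified until the $\textsc{SendRelease}$ event removes it together with $\Activated(x)$ — which is exactly deactivation, i.e.\ outside the allowed window. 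The receive count argument is dual, using $\IncRecv$ on $\textsc{Receive}$ and $\textsc{Info}$, and noting that $\RecvCount(x,\cdot)$ persists until the $\textsc{Compaction}$ event, which requires $\Released(x)$; so it is intact throughout the window where $x$ has not been released.

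Given this, we have $n = S(t_1)$ and $m = R(t_2)$. Next I would argue that every message along $x$ received before $t_2$ was sent before $t_1$. If $t_2 \le t_1$ this is immediate since a message must be sent before it is received (messages only enter $\mu$ via a $\textsc{Send}$/$\textsc{SendInfo}$/$\textsc{SendRelease}$ event and only leave via the matching delivery). If $t_1 < t_2$, the hypothesis says $A$ sends no messages along $x$ in $[t_1,t_2]$, so again any message received by time $t_2$ must have been sent at or before $t_1$. Hence $R(t_2)$ counts a subset of the $S(t_1)$ messages sent along $x$ before $t_1$, namely those among them already delivered by $t_2$; and since delivery of distinct messages is distinct, the number of messages sent before $t_1$ but not received before $t_2$ is exactly $S(t_1) - (\text{number of those delivered by } t_2) = n - m$ when $n \ge m$, and this count is of course nonnegative, giving $\max(n-m,0)$. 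The one subtlety to handle cleanly: a priori $m$ could exceed $n$ only if some message received by $t_2$ was sent after $t_1$, which the previous step rules out; so in fact $m \le n$ always holds here and $\max(n-m,0) = n-m$, but stating it with $\max$ costs nothing and is robust.

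The main obstacle is the bookkeeping around the degenerate cases flagged in the statement — $t_1$ or $t_2$ lying before $x$'s creation time — and around the $\textsc{SendRelease}$/$\textsc{Compaction}$ boundaries. Concretely, I need the induction to correctly handle that $\SentCount(x,n)$ may be only a \emph{derived} fact ($n=0$) rather than an explicit one, and that the "has not been deactivated at $t_1$" and "has not been released at $t_2$" hypotheses are precisely what keeps the count facts from being destroyed or reset; Lemma~\ref{lem:release-is-final} and Lemma~\ref{lem:facts-remain-until-cancelled} supply exactly the persistence facts needed, and the $\ReleaseMsg(x,n)$ precondition in $\textsc{Release}$ (which is not used here directly but confirms the accounting is consistent) is a useful sanity check. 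I expect the write-up to be a short induction on $t$ for the count-tracking invariant, followed by the two-line combinatorial argument above.
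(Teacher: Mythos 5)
Your proposal is correct and follows essentially the same route as the paper's proof: the paper likewise observes that the non-deactivation/non-release hypotheses guarantee the counts were never reset, so $n$ and $m$ equal the true send and receive totals at $t_1$ and $t_2$, and then concludes the difference counts the undelivered messages. Your write-up is just a more detailed version, making explicit the induction behind the count-tracking invariant and the step (which the paper leaves implicit) where the ``no sends during $[t_1,t_2]$'' hypothesis is used to ensure every message received before $t_2$ was sent before $t_1$.
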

\begin{proof}
    Since $x$ is not deactivated at time $t_1$ and unreleased at time $t_2$, the message counts were never reset by the \textsc{SendRelease} or \textsc{Compaction} rules. Hence $n$ is the number of messages $A$ sent along $x$ before $t_1$ and $m$ is the number of messages $B$ received along $x$ before $t_2$. Hence $\max(n - m, 0)$ is the number of messages sent before $t_1$ and \emph{not} received before $t_2$.
\end{proof}

\subsection{Garbage}\label{sec:garbage-defn}

We can now operationally characterize actor garbage in our model. An actor $A$ can \emph{potentially receive a message} in $\kappa$ if there is a sequence of events (possibly of length zero) leading from $\kappa$ to a configuration $\kappa'$ in which $A$ has an undelivered message. We say that an actor is \emph{terminated} if it is idle and cannot potentially receive a message.

An actor is \emph{blocked} if it satisfies three conditions: (1) it is idle, (2) it is not a receptionist, and (3) it has no undelivered messages; otherwise, it is \emph{unblocked}. We define \emph{potential reachability} as the reflexive transitive closure of the potential acquaintance relation. That is, $A_1$ can potentially reach $A_n$ if and only if there is a sequence of unreleased refobs $(\Refob {x_1} {A_1} {A_2}), \dots, (\Refob {x_n} {A_{n-1}} {A_n})$; recall that a refob $\Refob x A B$ is unreleased if its target $B$ has not yet received a $\ReleaseMsg$ message for $x$.

Notice that an actor can potentially receive a message if and only if it is potentially reachable from an unblocked actor. Hence an actor is terminated if and only if it is only potentially reachable by blocked actors. A special case of this is \emph{simple garbage}, in which an actor is blocked and has no potential inverse acquaintances besides itself.

We say that a set of actors $S$ is \emph{closed} at time $t$ (with respect to the potential inverse acquaintance relation) if, whenever $B \in S$ and there is an unreleased refob $\Refob x A B$ at time $t$, then also $A \in S$. The \emph{closure} of a set of actors $S'$ is the smallest closed superset of $S'$. Notice that the closure of a set of terminated actors is also a set of terminated actors.

\section{Chain Lemma}\label{sec:chain-lemma}

To determine if an actor has terminated, one must show that all of its potential inverse acquaintances have terminated. This appears to pose a problem for termination detection, since actors cannot have a complete listing of all their potential inverse acquaintances without some synchronization: actors would need to consult their acquaintances before creating new references to them. In this section, we show that the DRL protocol provides a weaker guarantee that will nevertheless prove sufficient: knowledge about an actor's refobs is \emph{distributed} across the system and there is always a ``path'' from the actor to any of its potential inverse acquaintances.

\begin{figure}
    \centering
    \tikzfig{contents/diagrams/chain-lemma}
    \caption{An example of a chain from $B$ to $x_3$.}
    \label{fig:chain-example}
\end{figure}

Let us construct a concrete example of such a path, depicted by \Cref{fig:chain-example}. Suppose that $A_1$ spawns $B$, gaining a refob $\Refob{x_1}{A_1}{B}$. Then $A_1$ may use $x_1$ to create $\Refob{x_2}{A_2}{B}$, which $A_2$ may receive and then use $x_2$ to create $\Refob{x_3}{A_3}{B}$. 

At this point, there are unreleased refobs owned by $A_2$ and $A_3$ that are not included in $B$'s knowledge set. However, \Cref{fig:chain-example} shows that the distributed knowledge of $B,A_1,A_2$ creates a ``path'' to all of $B$'s potential inverse acquaintances. Since $A_1$ spawned $B$, $B$ knows the fact $\Created(x_1)$. Then when $A_1$ created $x_2$, it added the fact $\CreatedUsing(x_1, x_2)$ to its knowledge set, and likewise $A_2$ added the fact $\CreatedUsing(x_2, x_3)$; each fact points to another actor that owns an unreleased refob to $B$ (\Cref{fig:chain-example} (1)). 

Since actors can remove $\CreatedUsing$ facts by sending $\InfoMsg$ messages, we also consider (\Cref{fig:chain-example} (2)) to be a ``path'' from $B$ to $A_3$. But notice that, once $B$ receives the $\InfoMsg$ message, the fact $\Created(x_3)$ will be added to its knowledge set and so there will be a ``direct path'' from $B$ to $A_3$. We formalize this intuition with the notion of a \emph{chain} in a given configuration $\Config{\alpha}{\mu}{\rho}{\chi}$:
\begin{defi}\label{defn:chain}
A \emph{chain to $\Refob x A B$} is a sequence of unreleased refobs $(\Refob{x_1}{A_1}{B}),\allowbreak \dots,\allowbreak (\Refob{x_n}{A_n}{B})$ such that:
\begin{itemize}
    \item $\alpha(B) \vdash \Created(\Refob{x_1}{A_1}{B})$;
    \item For all $i < n$, either $\alpha(A_i) \vdash \CreatedUsing(x_i,x_{i+1})$ or the message $\Msg{B}{\InfoMsg(x_i,x_{i+1})}$ is in transit; and
    \item $A_n = A$ and $x_n = x$.
\end{itemize}
\end{defi}

\noindent We say that an actor $B$ is \emph{in the root set} if it is a receptionist or if there is an application message $\AppMsg(x,R)$ in transit to an external actor with $B \in \text{targets}(R)$. 

\begin{lem}[Chain Lemma]
\label{lem:chain-lemma}
Let $B$ be an internal actor in $\kappa$. If $B$ is not in the root set, then there is a chain to every unreleased refob $\Refob x A B$. Otherwise, there is a chain to some refob $\Refob y C B$ where $C$ is an external actor.
\end{lem}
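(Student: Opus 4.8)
The plan is to prove this by induction on time $t$, i.e. on the length of the execution leading to $\kappa$. The base case is the initial configuration $\kappa_0 = \Config{[\Phi]_A}{\emptyset}{\emptyset}{\{E\}}$: the only internal actor is $A$, which owns a self-refob $\Refob y A A$ with $\alpha(A) \vdash \Created(\Refob y A A)$, so the one-element sequence $(\Refob y A A)$ is a chain to it; and $A$ is not a receptionist and there are no messages in transit, so $A$ is not in the root set and there are no other unreleased refobs to consider. For the inductive step I would assume the statement holds in $\kappa_t$ and examine each transition rule $e$ with $\kappa_t \Step{e} \kappa_{t+1}$, showing that every chain required in $\kappa_{t+1}$ either already existed in $\kappa_t$ (and was not broken) or is freshly created by the rule.

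**Key steps of the inductive step.** I would organize the case analysis around which refobs are created, which are released, and which $\Created$/$\CreatedUsing$ facts or $\InfoMsg$ messages are produced or consumed. The important cases are: (1) \textsc{Spawn}$(x,A,B)$: the new actor $B$ receives $\Created(\Refob x A B)$, so $(\Refob x A B)$ is immediately a chain to the unique refob targeting $B$; no existing chains are disturbed. (2) \textsc{Send} (and symmetrically \textsc{SendInfo}): when $A$ creates $\Refob{z_i}{B}{C_i}$ using $y_i$, it records $\CreatedUsing(y_i,z_i)$; since by the inductive hypothesis there is a chain to $\Refob{y_i}{A}{C_i}$, appending $z_i$ extends it to a chain to $\Refob{z_i}{B}{C_i}$. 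I must also check that the chain to $y_i$ is not broken — it isn't, because \textsc{Send} only increments a send count and adds facts. (3) \textsc{SendInfo}$(y,z,A,B,C)$ removes $\CreatedUsing(y,z)$ from $A$ but puts $\InfoMsg(y,z,B)$ in transit to $C$, which is exactly the second alternative in \Cref{defn:chain}, so the chain survives; \textsc{Info}$(y,z,B,C)$ delivers that message and adds $\Created(\Refob z B C)$ to $C$, converting the "in transit" link into a direct one. (4) \textsc{Release}$(x,A,B)$: this is the delicate one — when $x$ becomes released, $(\Refob{x_1}{A_1}{B}),\dots$ chains ending in $x$ disappear, but I must argue that $x$ was not an interior link of any chain to some *other* still-unreleased $\Refob{x'}{A'}{B}$; by \Cref{lem:facts-remain-until-cancelled}, the fact $\CreatedUsing(x,x')$ (or the corresponding $\InfoMsg$) persists until the $\InfoMsg$ containing $x'$ reaches $B$, and by \Cref{lem:release-is-final} together with the ordering guarantee on $\ReleaseMsg$ (a $\ReleaseMsg(x,n)$ is delivered only after all $\InfoMsg$ messages along $x$), that $\InfoMsg$ must already have been delivered, so $\Created(x')$ is in $B$'s state and the shortened tail is still a valid chain. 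I would also need to handle \textsc{Compaction} (removing $\Created/\Released$ facts, but only for already-released refobs, so no chain to an unreleased refob is affected) and \textsc{In}/\textsc{Out} (the \textsc{Out} rule may add actors to $\rho$ or put messages to external actors in transit, moving actors into the root set — and in that case the conclusion weakens to "a chain to some $\Refob y C B$ with $C$ external", which the refobs $R$ in the outgoing \textsc{In}-message, all owned by external actors, provide).

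**Main obstacle.** I expect the hardest case to be \textsc{Release}, specifically ruling out that a refob being released is a non-final link in some chain to a still-unreleased refob. This is where the careful ordering discipline of the protocol — that $\ReleaseMsg(x,n)$ carries the send count $n$ and is held at the receiver until exactly $n$ messages along $x$ have arrived (the \textsc{Release} precondition $\Phi \vdash \RecvCount(x,n)$), so all $\InfoMsg$ messages spawning downstream links have already been absorbed — does the real work, in combination with \Cref{lem:release-is-final} and \Cref{lem:facts-remain-until-cancelled}. A secondary subtlety is the root-set bookkeeping: I must make sure the two clauses of the definition of "in the root set" (being a receptionist vs. being a target inside a message in transit to an external actor) are both maintained so that whenever $B$ leaves the safety of "not in the root set", an externally-owned unreleased refob to $B$ genuinely exists to anchor the weaker conclusion.
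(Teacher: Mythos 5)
Your proposal is correct and follows essentially the same route as the paper: induction over the execution with a case analysis on the transition rules, extending chains via $\CreatedUsing$ facts in \textsc{Send}, using the in-transit $\InfoMsg$ alternative for \textsc{SendInfo}, and resolving the delicate \textsc{Release} case exactly as the paper does (the \textsc{SendRelease} precondition forces the $\InfoMsg$ to be sent first, the $\ReleaseMsg$ message-count guard forces it to be delivered first, and \Cref{lem:facts-remain-until-cancelled} then anchors the shortened tail at $\Created(x_{i+1})$). The only blemishes are cosmetic: the \textsc{Info} case should state explicitly that the chain is \emph{shortened} to its tail $x_{i+1},\dots,x_n$, and the root-set witness for \textsc{Out} comes from the refobs created by the earlier \textsc{Send} to the external actor, not from an ``outgoing \textsc{In}-message.''
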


\noindent\emph{Remark:}
When $B$ is in the root set, not all of its unreleased refobs are guaranteed to have chains. This is because an external actor may send $B$'s address to other receptionists without sending an $\InfoMsg$ message to $B$.

\begin{proof}
    We prove that the invariant holds in the initial configuration and at all subsequent times by induction on events $\kappa \Step e \kappa'$, omitting events that do not affect chains. Let $\kappa = \Config{\alpha}{\mu}{\rho}{\chi}$ and $\kappa' = \Config{\alpha'}{\mu'}{\rho'}{\chi'}$.
    
    In the initial configuration, the only refob to an internal actor is $\Refob y A A$. Since $A$ knows $\Created(\Refob{y}{A}{A})$, the invariant is satisfied.
    
    In the cases below, let $x,y,z,A,B,C$ be free variables, not referencing the variables used in the statement of the lemma.
        
    \begin{itemize}
        \item $\textsc{Spawn}(x,A,B)$ creates a new unreleased refob $\Refob x A B$, which satisfies the invariant because $\alpha'(B) \vdash \Created(\Refob x A B)$.

        \item $\textsc{Send}(x,\vec y, \vec z, A,B,\vec C)$ creates a set of refobs $R$. Let $(\Refob z B C) \in R$, created using $\Refob y A C$.
        
        If $C$ is already in the root set, then the invariant is trivially preserved. Otherwise, there must be a chain $(\Refob{x_1}{A_1}{C}), \dots, (\Refob{x_n}{A_n}{C})$ where $x_n = y$ and $A_n = A$. Then $x_1,\dots,x_n,z$ is a chain in $\kappa'$, since $\alpha'(A_n) \vdash \CreatedUsing(x_n,z)$. 
        
        If $B$ is an internal actor, then this shows that every unreleased refob to $C$ has a chain in $\kappa'$. Otherwise, $C$ is in the root set in $\kappa'$. To see that the invariant still holds, notice that $\Refob z B C$ is a witness of the desired chain.
        
        \item $\textsc{SendInfo}(y,z,A,B,C)$ removes the $\CreatedUsing(y,z)$ fact but also sends $\InfoMsg(y,z,B)$, so chains are unaffected.
        
        \item $\textsc{Info}(y,z,B,C)$ delivers $\InfoMsg(y,z,B)$ to $C$ and adds $\Created(\Refob z B C)$ to its knowledge set.
        
        Suppose $\Refob z B C$ is part of a chain $(\Refob{x_1}{A_1}{C}), \dots, (\Refob{x_n}{A_n}{C})$, i.e. $x_i = y$ and $x_{i+1} = z$ and $A_{i+1} = B$ for some $i < n$. Since $\alpha'(C) \vdash \Created(\Refob{x_{i+1}}{A_{i+1}}{C})$, we still have a chain $x_{i+1},\dots,x_n$ in $\kappa'$.
        
        \item $\textsc{Release}(x,A,B)$ releases the refob $\Refob x A B$. Since external actors never release their refobs, both $A$ and $B$ must be internal actors.
        
        Suppose the released refob was part of a chain $(\Refob{x_1}{A_1}{B}), \dots, (\Refob{x_n}{A_n}{B})$, i.e. $x_i = x$ and $A_i = A$ for some $i < n$. We will show that $x_{i+1},\dots,x_n$ is a chain in $\kappa'$.
        
        Before performing $\textsc{SendRelease}(x_i,A_i,B)$, $A_i$ must have performed the $\textsc{Info}(x_i,x_{i+1},\allowbreak A_{i+1},B)$ event. Since the $\InfoMsg$ message was sent along $x_i$, Lemma~\ref{lem:release-is-final} ensures that the message must have been delivered before the present \textsc{Release} event. Furthermore, since $x_{i+1}$ is an unreleased refob in $\kappa'$, Lemma~\ref{lem:facts-remain-until-cancelled} ensures that $\alpha'(B) \vdash \Created(\Refob{x_{i+1}}{A_{i+1}}{B})$.
        
        \item $\textsc{In}(A,R)$ adds a message from an external actor to the internal actor $A$. This event can only create new refobs that point to receptionists, so it preserves the invariant.

        \item $\textsc{Out}(x,B,R)$ emits a message $\AppMsg(x,R)$ to the external actor $B$. Since all targets in $R$ are already in the root set, the invariant is preserved. \qedhere
    \end{itemize}
\end{proof}

\noindent An immediate application of the Chain Lemma is to allow actors to detect when they are simple garbage. If any actor besides $B$ owns an unreleased refob to $B$, then $B$ must have a fact $\Created(\Refob x A B)$ in its knowledge set where $A \ne B$. Hence, if $B$ has no such facts, then it must have no nontrivial potential inverse acquaintances. Moreover, since actors can only have undelivered messages along unreleased refobs, $B$ also has no undelivered messages from any other actor; it can only have undelivered messages that it sent to itself. This gives us the following result:

\begin{thm}
    Suppose $B$ is idle with knowledge set $\Phi$, such that:
    \begin{itemize}
        \item $\Phi$ does not contain any facts of the form $\Created(\Refob x A B)$ where $A \ne B$; and
        \item for all facts $\Created(\Refob x B B) \in \Phi$, also $\Phi \vdash \SentCount(x,n) \land \RecvCount(x,n)$ for some $n$.
    \end{itemize}
    Then $B$ is simple garbage.
\end{thm}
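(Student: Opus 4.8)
The plan is to show that $B$ is blocked and that its only potential inverse acquaintance is itself, and then invoke the characterization of simple garbage from \Cref{sec:garbage-defn}. Recall that simple garbage means $B$ is blocked and has no potential inverse acquaintances besides itself, so it suffices to establish those two conditions.

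First I would argue $B$ has no nontrivial potential inverse acquaintances. Suppose some actor $A \neq B$ owns an unreleased refob $\Refob x A B$. By the Chain Lemma (\Cref{lem:chain-lemma}), since $B$ is idle it is not a receptionist, and (with a little care about messages in transit to external actors --- see below) $B$ is not in the root set, so there is a chain to $\Refob x A B$. The first element of every chain is witnessed by a fact $\Created(\Refob{x_1}{A_1}{B}) \in \Phi$. If $A_1 \neq B$, this immediately contradicts the first hypothesis. If $A_1 = B$, I would need to trace along the chain: either the chain has length one (so $A = A_1 = B$, contradiction), or the next link is either a fact $\CreatedUsing(x_1,x_2) \in \alpha(B)$ or an $\InfoMsg$ in transit to $B$. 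A $\CreatedUsing$ fact in $B$'s state would (by the derivation rules) give $\Created(x_1) \in \Phi$ again but more importantly eventually leads to a later $\Created(\Refob{x_j}{A_j}{B})$ with $A_j \neq B$ somewhere along the chain, since $A = A_n \neq B$; the minimal such index gives a direct contradiction with the first hypothesis. An $\InfoMsg(x_i,x_{i+1})$ in transit to $B$ is itself an undelivered message to $B$, which I will rule out when I handle blockedness. So modulo the blockedness argument, $B$'s only potential inverse acquaintance is $B$ itself.

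Second I would argue $B$ is blocked, i.e.\ idle, not a receptionist, and with no undelivered messages. Idle is given. Not a receptionist: if $B$ were a receptionist it would be in the root set, and the Chain Lemma would force a chain to some $\Refob y C B$ with $C$ external --- but then $\Phi \vdash \Created(\Refob y C B)$ with $C \neq B$, contradicting the first hypothesis. For undelivered messages: by \Cref{lem:release-is-final}, any undelivered message to $B$ travels along some unreleased refob $\Refob x A B$. By the previous paragraph the only candidate is $A = B$, i.e.\ a refob $\Refob x B B$ with $\Created(\Refob x B B) \in \Phi$. By the second hypothesis, $\Phi \vdash \SentCount(x,n) \land \RecvCount(x,n)$ for some $n$. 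Since $B$ is idle, the send and receive counts for $x$ have not been reset (no pending \textsc{SendRelease} by $B$ itself would change this reasoning, but I should check the bookkeeping), so \Cref{lem:msg-counts} --- applied with $t_1 = t_2 = $ the current time, where no messages are sent in the empty interval --- gives $\max(n-n,0) = 0$ undelivered messages along $x$. Hence $B$ has no undelivered messages at all, so it is blocked, and combined with the previous paragraph it is simple garbage.

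The main obstacle I anticipate is the circularity between the two halves: ruling out $\InfoMsg$ messages in transit to $B$ (needed to walk the chain cleanly) and ruling out undelivered messages (which uses the chain conclusion). I would break this by observing that an $\InfoMsg$ to $B$ travels along an unreleased refob $\Refob{x_i}{A_i}{B}$ which itself has a chain, and pushing the argument to the \emph{minimal-length} counterexample among all unreleased refobs $\Refob x A B$ with $A \neq B$ together with all $\InfoMsg$/$\ReleaseMsg$/$\AppMsg$ messages in transit to $B$ from actors other than $B$: any such object forces, via its chain, a $\Created(\Refob{x'}{A'}{B})$ fact with $A' \neq B$ in $\Phi$, which is the desired contradiction. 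A secondary point to get right is the precise application of \Cref{lem:msg-counts} with $t_1 = t_2$ and verifying the "not deactivated/not released" side conditions hold because $B$ is idle and the relevant facts are present in $\Phi$.
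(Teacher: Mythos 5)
Your proposal is correct and follows essentially the same route as the paper, which derives the theorem as an ``immediate application'' of the Chain Lemma: a nontrivial unreleased refob $\Refob x A B$ would force a $\Created(\Refob{x'}{A'}{B})$ fact with $A' \ne B$ into $\Phi$, and Lemma~\ref{lem:release-is-final} plus the matching send/receive counts on self-refobs rule out undelivered messages. The extra care you take --- walking the chain to its first non-$B$ owner, excluding the root-set case, and breaking the apparent circularity with in-transit $\InfoMsg$ messages via a minimal counterexample --- only spells out details the paper leaves implicit.
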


\section{Termination Detection}\label{sec:termination-detection}

In the following three sections, we present the scheme for detecting non-simple terminated actors in DRL. First, we define what it means for a set of snapshots to be \emph{finalized} and prove that finalized sets correspond to closed sets of terminated actors. This reduces termination detection to simply collecting snapshots at an aggregator and periodically searching the collection for finalized subsets. We prove that such an approach is safe and live (under reasonable fairness assumptions). Next, in \Cref{sec:maximal}, we give an algorithm for finding the \emph{maximum} finalized subset---the union of all finalized subsets---in an arbitrary set of snapshots. This gives each aggregator an efficient procedure for detecting terminated actors. Lastly, in \Cref{sec:coop}, we show how a decentralized group of snapshot aggregators can cooperate to detect distributed garbage while exchanging minimal information. This makes DRL's termination detection scalable, parallelizable, and capable of making progress despite network partitions.

\subsection{Consistent snapshots}

Recall that when we speak of a set of snapshots $Q$, we assume each snapshot was taken by a different actor. We will therefore represent $Q$ as a mapping from actor names to snapshots, with $Q(A)$ denoting $A$'s snapshot in $Q$.

As shown in \Cref{sec:overview}, actor snapshots taken at different times can result in conflicting accounts of the configuration. Hence, in general, an arbitrary set of snapshots $Q$ does not accurately describe the current configuration. If a set of snapshots \textit{does} accurately describe the configuration, we say that it is \textit{consistent}. Formally:
\begin{defi}
    $Q$ is consistent at time $t$ when $\forall \phi, \forall A \in \dom(Q), \ Q(A) \vdash \phi$ if and only if $\alpha_{t}(A) \vdash \phi$.
\end{defi}

That is, the snapshot $Q(A)$ may not have been taken at time $t$---yet the contents of $A$'s knowledge set at time $t$ are the same as $Q(A)$. If $Q$ is consistent at time $t$, then it is as if all the actors of $\dom(Q)$ took their snapshots at time $t$.

Another important notion is that of a terminated actor's \emph{final action}. We define this as the last non-snapshot event that an actor performs before becoming terminated. Notice that an actor's final action can only be an \textsc{Idle}, \textsc{Info}, or \textsc{Release} event. This is because terminated actors are idle, and only these three events change an actor's status from busy to idle.  Note also that the final action may come \emph{strictly before} an actor becomes terminated, since a blocked actor is only considered to be terminated once all of its potential inverse acquaintances are terminated.

We can now give a simple proof of our earlier claim that snapshots from terminated actors are consistent. This property will allow us to treat finalized sets of snapshots as if they were all taken at an instant in global time.

\begin{lem}\label{lem:terminated-is-consistent}
    Let $S$ be a closed set of terminated actors at time $t_f$. If every actor in $S$ took a snapshot sometime after its final action, then the resulting set of snapshots $Q$ is consistent at $t_f$.
\end{lem}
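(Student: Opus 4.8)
The plan is to prove something slightly stronger than consistency: that for every $A \in S$ the snapshot $Q(A)$ is \emph{literally the same set of facts} as $A$'s knowledge set $\alpha_{t_f}(A)$ at time $t_f$. Consistency at $t_f$ then follows at once, since two identical sets of facts derive exactly the same facts. Everything rests on one observation: a terminated actor's knowledge set does not change after its final action.

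The first step is to justify that observation. By definition, the final action of $A$ is the last non-\textsc{Snapshot} event that $A$ performs, so afterwards the only events in which $A$ participates are \textsc{Snapshot} events; inspecting the \textsc{Snapshot} rule, these leave the knowledge set $(\Phi)_A$ untouched. (If one prefers to read ``final action'' as merely the event that renders $A$ permanently idle, one additionally argues that $A$, being terminated, has no undelivered messages---otherwise it could potentially receive one---so it can perform none of \textsc{Receive}, \textsc{Info}, or \textsc{Release} afterwards, while \textsc{Compaction}, the only other idle-enabled knowledge-changing event, is excluded by the same ``last non-snapshot event'' reasoning.) Either way, writing $t_A^{\mathrm{fin}}$ for the time of $A$'s final action, the map $t \mapsto \alpha_t(A)$ is constant on $[\,t_A^{\mathrm{fin}}, \infty)$.

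The second step is to place both times of interest inside that interval. By hypothesis $A$ took its snapshot at some time $s_A \ge t_A^{\mathrm{fin}}$, so $Q(A) = \alpha_{s_A}(A) = \alpha_{t_A^{\mathrm{fin}}}(A)$. And since $A \in S$ is terminated at $t_f$, it has become terminated by time $t_f$, and its final action precedes that; hence $t_A^{\mathrm{fin}} \le t_f$, so $\alpha_{t_f}(A) = \alpha_{t_A^{\mathrm{fin}}}(A)$ as well. Thus $Q(A)$ and $\alpha_{t_f}(A)$ coincide for every $A \in \dom(Q) = S$, which is exactly consistency at $t_f$.

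The delicate point I expect is the first step---arguing rigorously that no knowledge-changing event intervenes between $A$'s final action and $t_f$ (and beyond). That is where the structure of the model is really used: the classification of the events an idle actor can perform, and the fact that a terminated actor has no undelivered messages. Everything after that is bookkeeping about the relative order of $t_A^{\mathrm{fin}}$, $s_A$, and $t_f$.
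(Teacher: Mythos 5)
Your proof is correct and takes essentially the same route as the paper's: both rest on the single observation that a terminated actor's knowledge set is frozen from its final action onward, so the snapshot (taken after the final action) coincides with the knowledge set at $t_f$, which also lies after the final action. Your version is more explicit than the paper's three-sentence argument---in particular about which idle-enabled events could change the knowledge set (including \textsc{Compaction}) and about the ordering of $t_A^{\mathrm{fin}}$, $s_A$, and $t_f$---but the substance is identical.
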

\begin{proof}
    A terminated actor's knowledge set never changes. Moreover, an actor's knowledge set cannot change between the point of performing its final action and becoming terminated. Hence each $A$'s snapshot in $Q$ agrees with its knowledge set at time $t_f$.
\end{proof}

\subsection{Finalized sets}

Given a consistent set of snapshots $Q$, is it possible to determine whether the actors $\dom(Q)$ are terminated? Let us begin by giving an alternative characterization of terminated actors.

\begin{lem}
Let $A$ be an actor at time $t$ and let $S$ be the \emph{closure} of $\{A\}$. Then $A$ is terminated if and only if the actors of $S$ are idle and \emph{mutually quiescent}, i.e.~there are no undelivered messages between actors of $S$.
\end{lem}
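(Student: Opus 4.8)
The plan is to reduce the statement to the operational characterization of termination already recorded in \Cref{sec:garbage-defn}. First I would observe that the closure $S$ of $\{A\}$ under the potential-inverse-acquaintance relation is precisely the set of actors that can potentially reach $A$: an actor $C$ lies in $S$ exactly when $C = A$ or there is a chain of unreleased refobs $\Refob{x_1}{C}{C_2},\dots,\Refob{x_k}{C_k}{A}$. Since an actor can potentially receive a message iff it is potentially reachable from an unblocked actor, and $A$ is terminated iff it is only potentially reachable by blocked actors, the predicate ``$A$ is terminated'' is equivalent to ``every actor of $S$ is blocked''. So the whole lemma amounts to the claim: every actor of $S$ is blocked $\iff$ every actor of $S$ is idle and there are no undelivered messages between actors of $S$.

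The forward direction is immediate: if every actor of $S$ is blocked, then each is idle and has \emph{no} undelivered messages at all, so a fortiori none between members of $S$. (Equivalently, one may invoke the remark in \Cref{sec:garbage-defn} that the closure of a set of terminated actors is again a set of terminated actors, and terminated actors are idle with empty mailboxes.)

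For the converse, suppose every actor of $S$ is idle and there are no undelivered messages between actors of $S$; I must upgrade ``idle'' to ``blocked'' for each $U \in S$, i.e.\ show that $U$ has no undelivered messages and is not a receptionist. The no-message part is where \Cref{lem:release-is-final} does the work: any undelivered message to $U$---whether an $\AppMsg$, an $\InfoMsg$, or a $\ReleaseMsg$---is sent along some refob $\Refob x V U$, and \Cref{lem:release-is-final} guarantees this refob is unreleased; hence $V$ is a potential inverse acquaintance of $U$, and since $S$ is closed and $U \in S$ we get $V \in S$, so the message is one between actors of $S$---contradicting the hypothesis. The one escape is a message injected from outside the system (an \textsc{In} step with an external sender, or a $\NullToken$ message), but such a message can only be delivered to a receptionist, so it is subsumed by the receptionist obligation below.

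The delicate point---and the main obstacle I would expect in writing this up---is ruling out that some $U \in S$ is a receptionist. This is \emph{not} forced by ``idle and mutually quiescent'': a receptionist may be idle with an empty mailbox and yet is unblocked, since the environment can still message it (and if such a $U$ can reach $A$, then $A$ is not terminated). I would therefore read the hypothesis as implicitly also requiring that no actor of $S$ is a receptionist (equivalently, that no actor of $S$ lies in the root set of \Cref{sec:chain-lemma}), or strengthen ``idle'' to ``blocked'' in the statement. With that granted, every $U \in S$ is idle, mailbox-empty, and a non-receptionist---hence blocked---so $A$ is only potentially reachable by blocked actors and is therefore terminated. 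Apart from this, the only real bookkeeping is the case split on the three kinds of messages (and on external senders) when applying \Cref{lem:release-is-final}.
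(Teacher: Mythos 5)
Your proof is correct and follows the paper's argument exactly: the closure $S$ is precisely the set of actors that can potentially reach $A$, the forward direction is immediate, and the converse uses Lemma~\ref{lem:release-is-final} to show that any undelivered message to an actor of $S$ travels along an unreleased refob whose owner must, by closedness, lie in $S$. Your worry about receptionists is well taken---the paper's proof passes from ``idle with no undelivered messages'' to ``blocked'' without discharging condition (2) of the definition of blocked---though one can also resolve it without amending the statement: an actor only becomes a receptionist via the \textsc{Out} rule, which leaves an unreleased refob owned by an external actor, so a receptionist in $S$ would drag that external (never idle, never blocked) actor into the closure and falsify the hypothesis; in the safety argument proper (\Cref{thm:safety-helper}) the paper instead rules out receptionists explicitly via the Chain Lemma.
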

\begin{proof}
By definition, $A$ is terminated if and only if every actor that can potentially reach $A$ is blocked. Notice that the closure $S$ is precisely the set of actors that can potentially reach $A$. Hence, if $A$ is terminated then the actors of $S$ are blocked, i.e.~idle and have no undelivered messages from any actor.

Conversely, let the actors of $S$ be idle and mutually quiescent. Could any of them have undelivered messages from actors outside $S$? The basic property Lemma~\ref{lem:release-is-final} shows that this cannot occur; any sender of such a message must be in $S$. Hence the actors of $S$ are all blocked and $A$ is terminated.
\end{proof}

By the above lemma, we can only conclude that the actors of $\dom(Q)$ are terminated if $\dom(Q)$ is closed, mutually quiescent, and all the actors are idle. This last condition is automatically satisfied by our communication protocol, since only idle actors take snapshots. In order to check the other two conditions, we will inspect the snapshots themselves.

To check that $\dom(Q)$ is closed, recall that every unreleased refob $\Refob x A B$ has a chain, $(\Refob{x_1}{A_1}{B}),\dots,(\Refob{x_n}{A_n}{B})$. If $\dom(Q)$ is terminated, then there can be no undelivered $\InfoMsg$ messages. Hence $x$ must satisfy the following predicate:

\begin{defi}
Let $Q \vdash \Chain(\Refob x A B)$ if there exist $(\Refob{x_1}{A_1}{B}),\dots,(\Refob{x_n}{A_n}{B})$ such that:
\begin{enumerate}
    \item $Q \vdash \Created(x_1)$ and $Q \not\vdash \Released(x_1)$;
    \item For all $i < n$, $Q \vdash \CreatedUsing(x_i,x_{i+1})$ and $Q \not\vdash \Released(x_{i+1})$;
    \item $A_n = A$ and $x_n = x$.
\end{enumerate}
\end{defi}

\noindent Checking where $\dom(Q)$ is closed therefore amounts to checking that $B \in Q$ and $Q \vdash \Chain(\Refob x A B)$ always implies $A \in Q$.

To decide whether $\dom(Q)$ is mutually quiescent, we need to check that there are no undelivered messages along each unreleased refob $\Refob x A B$ where $A,B \in \dom(Q)$. In a consistent set of snapshots, we can do so by inspecting the message counts of $A$ and $B$ for $x$. Hence $x$ must satisfy the following predicate:

\begin{defi}
Let $Q \vdash \Relevant(\Refob x A B)$ if there exists $n$ such that $Q \vdash \Activated(x) \land \SentCount(x,n) \land \RecvCount(x,n)$.
\end{defi}

Together, we can use these predicates to characterize a set of snapshots from a closed, terminated set of actors.
\begin{defi}\label{defn:weakly-finalized}
A set of snapshots $Q$ is \emph{finalized} if, for all $B \in \dom(Q)$ and for all $\Refob x A B$, 
\begin{enumerate}
    \item $Q \vdash \Chain(x)$ implies $A \in Q$; and
    \item $Q \vdash \Chain(x)$ implies $Q \vdash \Relevant(x)$.
\end{enumerate}
\end{defi}
\noindent The first condition ensures that $\dom(Q)$ is closed; $Q \vdash \Chain(\Refob x A B)$ implies that $A$ is a potential inverse acquaintance of $B$. The second condition ensures that $\dom(Q)$ is mutually quiescent: between any two actors $A,B \in \dom(Q)$, the message counts for any unreleased refob $\Refob x A B$ must agree.

If finalized sets $Q$ correspond to closed sets of terminated actors $S$, then we would expect that a consistent snapshot of $S$ is finalized. This is indeed the case:

\begin{thm}\label{lem:terminated-is-complete}
    Let $Q$ be a consistent set of snapshots at time $t$ of a closed set of terminated actors $S$. Then $Q$ is finalized.
\end{thm}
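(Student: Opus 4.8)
The plan is to fix a consistent set of snapshots $Q$ at time $t$ of a closed terminated set $S$ (so $\dom(Q) = S$), and verify the two conditions of Definition~\ref{defn:weakly-finalized} by hand. Throughout I will freely use consistency: for every $A \in S$ and every fact $\phi$, $Q(A) \vdash \phi$ iff $\alpha_t(A) \vdash \phi$. So reasoning about $Q \vdash \phi$ is the same as reasoning about the actual configuration $\kappa_t$. I will also use that since $S$ is a closed set of terminated actors, there are no undelivered messages at all with sender \emph{or} receiver in $S$: a terminated actor is blocked, so has no undelivered messages to it, and by Lemma~\ref{lem:release-is-final} any message \emph{from} an actor in $S$ travels along an unreleased refob owned by an actor in $S$, whose target is then (by closedness) also in $S$ and hence terminated with no undelivered messages. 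In particular there are no in-transit $\InfoMsg$ or $\ReleaseMsg$ messages touching $S$.

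First I would handle condition~(1). Suppose $B \in S$ and $Q \vdash \Chain(\Refob x A B)$, witnessed by refobs $(\Refob{x_1}{A_1}{B}),\dots,(\Refob{x_n}{A_n}{B})$ with $A_n = A$. By consistency, in $\kappa_t$ we have $\alpha_t(B) \vdash \Created(x_1)$ with $x_1$ unreleased, and for each $i<n$, $\alpha_t(A_i) \vdash \CreatedUsing(x_i, x_{i+1})$ with $x_{i+1}$ unreleased. By Lemma~\ref{lem:facts-remain-until-cancelled}, holding $\CreatedUsing(x_i,x_{i+1})$ means $A_i$ has an unreleased refob to $B$ (namely $x_i$), so $A_i$ is a potential inverse acquaintance of $B$; by closedness of $S$, $A_i \in S$. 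Walking the chain from $A_1$ up to $A_n = A$ (each $A_i$ being the owner of an unreleased refob $x_i$ to $B \in S$), closedness gives $A \in S = \dom(Q)$. This is essentially the observation already made after the definition of $\Chain$; the only subtlety is to check the $\CreatedUsing$ fact really does certify an unreleased refob, which is exactly Lemma~\ref{lem:facts-remain-until-cancelled} (the fact persists until an $\InfoMsg$ is sent, and $\InfoMsg$ travels along an unreleased refob).

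Next, condition~(2): assume again $B \in S$ and $Q \vdash \Chain(\Refob x A B)$; we must show $Q \vdash \Relevant(\Refob x A B)$, i.e.\ there is an $n$ with $Q \vdash \Activated(x) \land \SentCount(x,n) \land \RecvCount(x,n)$. From condition~(1) we already know $A \in S$, and $x$ is unreleased at $t$. I would argue $x$ is in fact \emph{active} at $t$: a chain witnesses $x$ as unreleased, and if $x$ were merely pending (an $\AppMsg$/spawn-created refob still in transit) there would be an undelivered message to $B \in S$, contradicting that $B$ is blocked; if $x$ were inactive, $A$ would have executed \textsc{SendRelease}, leaving an undelivered $\ReleaseMsg$ to $B \in S$, again a contradiction. (Here I should double-check the pending case against the Chain definition — a chain entry only requires $\Created$ at $B$ or a $\CreatedUsing$ upstream, both of which can hold before receipt; but then the $\InfoMsg$ carrying $x$ would be in transit to $B$, impossible.) So $\alpha_t(A) \vdash \Activated(x)$, and since $x$ was never deactivated the \textsc{SendRelease}/\textsc{Compaction} resets never fired, so $A$'s $\SentCount$ for $x$ and $B$'s $\RecvCount$ for $x$ are live. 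Now apply Lemma~\ref{lem:msg-counts} with $t_1 = t_2 = t$ (no messages sent in the empty interval): $\max(n-m,0)$ equals the number of messages along $x$ sent before $t$ and not received before $t$, which is the number of undelivered messages along $x$ at $t$ — and that is $0$ since $B$ is blocked. Hence $m \ge n$; and since every received message was sent, $n \ge m$, so $n = m$, giving $Q \vdash \SentCount(x,n) \land \RecvCount(x,n) \land \Activated(x)$, i.e.\ $Q \vdash \Relevant(x)$.

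The main obstacle is the bookkeeping around refob \emph{states} in condition~(2): I need to rule out that an unreleased refob appearing in a chain is pending or inactive, purely from the fact that $B$ and all of $B$'s potential inverse acquaintances are blocked/terminated. The argument is a case analysis on the refob lifecycle (pending $\Rightarrow$ in-transit $\AppMsg$ or $\InfoMsg$ to $B$; inactive $\Rightarrow$ in-transit $\ReleaseMsg$ to $B$), each contradicted by $B$ being blocked and $S$ being closed — so the work is really in being careful about which message is in flight in each case and invoking Lemma~\ref{lem:release-is-final} to locate its sender inside $S$. Once active-ness is established, the message-count equality is an immediate application of Lemma~\ref{lem:msg-counts}. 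Everything else is a direct translation through the consistency hypothesis.
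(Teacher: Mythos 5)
Your proof is correct and follows essentially the same route as the paper's: use consistency to transfer the $\Chain$ facts to $\kappa_t$, use Lemma~\ref{lem:facts-remain-until-cancelled} and closedness to conclude the chain's refobs are unreleased and hence $A \in \dom(Q)$, rule out the pending/deactivated cases by blockedness of actors in $S$, and apply Lemma~\ref{lem:msg-counts} at $t_1=t_2=t$ to equate the counts. (The paper additionally proves the converse direction, unreleased $\Rightarrow$ $\Chain$, via the Chain Lemma, which your argument correctly observes is not needed for Definition~\ref{defn:weakly-finalized}; also note the in-transit message witnessing a pending $\Refob x A B$ is addressed to the owner $A$, not $B$, though both are in $S$ so the contradiction stands.)
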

\begin{proof}
    First, we show that if $B \in \dom(Q)$ and $\Refob x A B$ is an unreleased refob at time $t$, then $Q \vdash \Chain(x) \land \Activated(x) \land \SentCount(x,n) \land \RecvCount(x,n)$ for some $n$.
    
    \begin{itemize}
        \item $Q \vdash \Chain(x)$ follows from Lemma~\ref{lem:chain-lemma} because $B$ is blocked and $S$ is closed.
    
        \item $Q \vdash \Activated(x)$ holds because $x$ must be activated: if $x$ were pending then $A$ would be unblocked and if $x$ were deactivated then $B$ would be unblocked.
    
        \item $Q \vdash \SentCount(x,n) \land \RecvCount(x,n)$ holds because there are no undelivered messages between $A$ and $B$ at time $t$, so the send and receive counts of $x$ at time $t$ must agree.
    \end{itemize}
    Now it suffices to show that, if $Q \vdash \Chain(x)$, then $\Refob x A B$ is unreleased at time $t$. There are two cases: Either $Q(B) \vdash \Created(x)$ or $Q(C) \vdash \CreatedUsing(y,x)$ for some $C,y$. In both cases, $x$ has been created before time $t$. Since $Q$ is consistent and $Q(B) \not\vdash \Released(x)$, it follows from Lemma~\ref{lem:facts-remain-until-cancelled} that $B$ has not yet received a $\ReleaseMsg$ message for $x$. Hence $x$ is unreleased at time $t$.
\end{proof}

By contrapositive, if $Q$ is \emph{not} finalized then it cannot be a consistent set of snapshots from a closed set of terminated actors. Recall also that any set of snapshots from terminated actors is guaranteed to be consistent (Lemma~\ref{lem:terminated-is-consistent}). Hence, if $Q$ is not finalized, then either some actor in $Q$ is not terminated or $\dom(Q)$ is not closed. The latter case indicates that there is insufficient information to conclude whether the actors of $\dom(Q)$ are terminated; they may or may not be reachable by an unblocked actor outside of $\dom(Q)$.

We now show that, surprisingly, the converse of Theorem~\ref{lem:terminated-is-complete} also holds:
any finalized set of snapshots $Q$ necessarily describes a closed set of terminated actors, with each snapshot taken some point after the actor's final action. By Lemma~\ref{lem:terminated-is-consistent}, such a set of snapshots is also consistent. 

Given a set of snapshots $Q$ taken before some time $t_f$, we write $Q_t$ to denote those snapshots in $Q$ that were taken before time $t < t_f$. If $A \in \dom(Q)$, we denote the time of $A$'s snapshot as $t_A$.

\begin{thm}\label{thm:safety-helper}
Let $Q$ be a finalized set of snapshots at time $t_f$. Then for all times $t$:
\begin{enumerate}
    \item If $B \in \dom(Q_t)$ and $\Refob x A B$ is unreleased, then $Q \vdash \Chain(x)$.
    \item The actors of $Q_t$ are all blocked.
\end{enumerate}
In particular $Q_t = Q$, when $t \ge t_f$.
\end{thm}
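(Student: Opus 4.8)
The plan is to prove parts~(1) and~(2) simultaneously by induction on $t$, establishing part~(1) at time $t+1$ first and then using it to establish part~(2). In the base case $Q_0=\emptyset$ and both statements are vacuous. For the inductive step we examine the event $e$ with $\kappa_t \Step{e} \kappa_{t+1}$; the transitions that matter are \textsc{Send}, \textsc{SendInfo}, the message deliveries, \textsc{SendRelease}, and \textsc{Snapshot}.

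For part~(1), a fresh instance of the premise at $t+1$ arises either when a new unreleased refob $\Refob x A B$ to some $B\in\dom(Q_t)$ is created, or when a new actor $B$ enters $\dom(Q)$ through a \textsc{Snapshot} event. In the first case the refob must be produced by a \textsc{Send} (the target of a \textsc{Spawn} is fresh, hence not in $\dom(Q_t)$): the sender $P$ uses some active refob $\Refob y P B$, which by the induction hypothesis for~(1) satisfies $Q\vdash\Chain(\Refob y P B)$, and we extend that chain by the new token $x$ via the fact $\CreatedUsing(y,x)$ that $P$ has just recorded. This extension survives into $P$'s own snapshot unless $P$ sends the matching $\InfoMsg$ beforehand; but then $P$'s recorded send count for $y$ includes that message whereas $B$'s recorded receive count for $y$ --- taken earlier, since $B\in\dom(Q_t)$ while the busy $P$ is not, by the IH for~(2) --- cannot, so the counts for $\Refob y P B$ disagree, contradicting $Q\vdash\Relevant(\Refob y P B)$ and hence that $Q$ is finalized. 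When $B$ enters via \textsc{Snapshot}, $B$ is idle, so the Chain Lemma (\Cref{lem:chain-lemma}) supplies a chain in $\kappa_{t+1}$ to each unreleased $\Refob x A B$; we promote it to a $Q$-chain by truncating past any $\InfoMsg$ already delivered to $B$ and invoking \Cref{lem:facts-remain-until-cancelled} to see that the remaining $\CreatedUsing$ facts still appear in the snapshots of the intermediate owners (which, by the IH, are idle).

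For part~(2), an actor $A\in\dom(Q_t)$ is blocked by the IH, and at step $t$ it can only become unblocked by receiving a message or by becoming a receptionist. A message to $A$ requires a sender $S$ with an unreleased $\Refob x S A$ (an external sender via \textsc{In} would need $A$ to already be a receptionist, excluded inductively); part~(1) at time $t$ gives $Q\vdash\Chain(x)$, hence $S\in\dom(Q)$ and $Q\vdash\Relevant(x)$, i.e.\ $Q(S)\vdash\Activated(x)\wedge\SentCount(x,n)$ and $Q(A)\vdash\RecvCount(x,n)$ for a common $n$. Since $S$ is busy at step $t$ it is not in $\dom(Q_t)$ (IH), so $S$'s snapshot comes strictly after $A$'s; the message sent at step $t$ is then an extra message along $x$ sent after $A$'s snapshot, and \Cref{lem:msg-counts} --- together with the observation that the facts $\Activated(x),\SentCount(x,n),\RecvCount(x,n)$ in the snapshots certify that no \textsc{SendRelease} or \textsc{Compaction} reset a count beforehand --- forces $n\ge n+1$, a contradiction. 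If instead $A$'s address escapes to an external actor, some unreleased $\Refob y C A$ with $C$ external exists, and $Q\vdash\Chain(\Refob y C A)$ would force the external $C$ into $\dom(Q)$, which is impossible. The same two arguments show that an actor entering $\dom(Q)$ via \textsc{Snapshot} has no undelivered messages and is not a receptionist, hence is blocked. Finally $Q_t=Q$ for $t\ge t_f$ is immediate, as by hypothesis every snapshot of $Q$ is taken by time $t_f$.

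I expect the main obstacle to be the message-count bookkeeping in part~(2): one must match the send and receive counts recorded in two snapshots taken at different times against the actual number of messages in flight, which means tracking precisely when the relevant refob could have been deactivated, released, or compacted (via \Cref{lem:msg-counts} and \Cref{lem:facts-remain-until-cancelled}) and using the induction hypothesis for part~(2) at earlier times to pin down the relative order of those two snapshots.
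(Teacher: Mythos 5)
Your proposal is correct and follows essentially the same route as the paper's proof: induction on time with the two invariants as mutual hypotheses, the Chain Lemma plus persistence of facts to lift runtime chains into $Q$-chains at \textsc{Snapshot} events, the message-count contradiction via Lemma~\ref{lem:msg-counts} (exploiting that a busy sender's snapshot must come after the blocked receiver's) to rule out \textsc{Send}/\textsc{SendInfo}/\textsc{SendRelease} into $\dom(Q_t)$, and the no-external-snapshots argument to exclude receptionists. The only imprecision is in the \textsc{Snapshot} case, where intermediate chain owners whose snapshots come \emph{after} $t$ are not covered by the induction hypothesis ("which, by the IH, are idle"); for those you need the same send-count argument you already give in the \textsc{Send} case, exactly as the paper does in its inner induction (Lemma~\ref{lem:complete-chains}).
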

\begin{proof}
Proof by induction on $t$. Notice that these two properties trivially hold in the initial configuration because $Q_0 = \emptyset$. 

For the induction step, assume both properties hold at time $t$ and call them IH-1 and IH-2, respectively. We show that IH-1 and IH-2 are preserved by any legal transition $\kappa \Step e \kappa'$.

\paragraph*{$\textsc{Snapshot}(B, \Phi)$}

Suppose \(B \in \dom(Q)\) takes a snapshot at time \(t\). We show that if $\Refob x A B$ is unreleased at time $t$, then $Q \vdash \Chain(x)$ and there are no undelivered messages along $x$ from $A$ to $B$. We do this with the help of two lemmas.

\begin{lem}\label{lem:complete-ref}
    If $Q \vdash \Chain(\Refob x A B)$, then $x$ is unreleased at time $t$ and there are no undelivered messages along $x$ at time $t$. Moreover, if $t_A > t$, then there are no undelivered messages along $x$ throughout the interval $[t,t_A]$.
\end{lem}
\begin{proof}[Proof (Lemma)]
    Since $Q \vdash \Relevant(\Refob x A B)$, we have $A \in \dom(Q)$ and $Q \vdash \Activated(x)$ and $Q \vdash \SentCount(x,n) \land \RecvCount(x,n)$ for some $n$.
    
    Consider the case when $t_A > t$. Since $Q(A) \vdash \Activated(x)$, $x$ is not deactivated and therefore not released at $t_A$ or $t$. Hence, by Lemma~\ref{lem:msg-counts}, every message sent along $x$ before $t_A$ was received before $t$. Since message sends precede receipts, each of those messages was sent before $t$. Hence there are no undelivered messages along $x$ throughout $[t,t_A]$.
    
    Now consider the case when $t_A < t$. Since $Q(A) \vdash \Activated(x)$, $x$ is not deactivated and not released at $t_A$. By IH-2, $A$ was blocked throughout the interval $[t_A,t]$, so it could not have sent a $\ReleaseMsg$ message. Hence $x$ is still not deactivated at $t$ and therefore not released at $t$. By Lemma~\ref{lem:msg-counts}, all messages sent along $x$ before $t_A$ must have been delivered before $t$. Hence, there are no undelivered messages along $x$ at time $t$.
\end{proof}

\begin{lem}\label{lem:complete-chains}
    Let $\Refob{x_1}{A_1}{B}, \dots, \Refob{x_n}{A_n}{B}$ be a chain to $\Refob x A B$ at time $t$. Then $Q \vdash \Chain(x)$.
\end{lem}
\begin{proof}[Proof (Lemma)]
    We prove by induction on the length of the chain that $Q \vdash \Chain(x_i)$ for all $i \le n$.
    
    \noindent\textbf{Base case:} By the definition of a chain, $\alpha_t(B) \vdash \Created(x_1)$ and $\alpha_t(B) \not\vdash \Released(x_1)$. Since $B$'s snapshot happens at time $t$, we must have $Q(B) \vdash \Created(x_1)$ and $Q(B) \not\vdash \Released(x_1)$.
    
    \noindent\textbf{Induction step:} Assume $Q \vdash \Chain(x_i)$. Notice that $Q \not\vdash \Released(x_{i+1})$ because $x_{i+1}$ is unreleased at the time of $B$'s snapshot. Hence, it suffices to show that $Q \vdash \CreatedUsing(x_i,x_{i+1})$.
    
    Since $Q \vdash \Relevant(x_i)$, we must have $A_i \in \dom(Q)$. Let $t_i$ be the time of $A_i$'s snapshot; we will show $\alpha_{t_i}(A_i) \vdash \CreatedUsing(x_i,x_{i+1})$.
    
    By the definition of a chain, either the message $\Msg{B}{\InfoMsg(x_i,x_{i+1})}$ is in transit at time $t$, or $\alpha_t(A_i) \vdash \CreatedUsing(x_i,x_{i+1})$. But the first case is impossible by Lemma~\ref{lem:complete-ref}, so we only need to consider the latter.
    
    Consider the case where $t_i > t$. Lemma~\ref{lem:complete-ref} implies that $A_i$ cannot perform the $\textsc{SendInfo}(x_i,x_{i+1},A_{i+1},B)$ event during $[t,t_i]$. Hence $\alpha_{t_i}(A_i) \vdash \CreatedUsing(x_i,x_{i+1})$.
    
    Now consider the case where $t_i < t$. By IH-2, $A_i$ must have been blocked throughout the interval $[t_i,t]$. Hence $A_i$ could not have created any refobs during this interval, so $x_{i+1}$ must have been created before $t_i$. This implies $\alpha_{t_i}(A_i) \vdash \CreatedUsing(x_i,x_{i+1})$.
\end{proof}

Lemma~\ref{lem:complete-chains} implies that $B$ cannot be in the root set. If it were, then by the Chain Lemma there would be a refob $\Refob y C B$ with a chain where $C$ is an external actor. Since $Q \vdash \Chain(y)$, there would need to be a snapshot from $C$ in $Q$---but external actors do not take snapshots, so this is impossible.

Since $B$ is not in the root set, there must be a chain to every unreleased refob $\Refob x A B$. By Lemma~\ref{lem:complete-chains}, $Q \vdash \Chain(x)$. By Lemma~\ref{lem:complete-ref}, there are no undelivered messages to $B$ along $x$ at time $t$. Since $B$ can only have undelivered messages along unreleased refobs (Lemma~\ref{lem:release-is-final}), the actor is indeed blocked.

\paragraph*{$\textsc{Send}(x,\vec y, \vec z, A,B,\vec C)$}

In order to maintain IH-2, we must show that if $B \in \dom(Q_t)$ then this event cannot occur. So suppose $B \in \dom(Q_t)$. By IH-1, we must have $Q \vdash \Chain(\Refob x A B)$, so $A \in \dom(Q)$. By IH-2, we moreover have $A \not\in \dom(Q_t)$---otherwise $A$ would be blocked and unable to send this message. Since $Q \vdash \Relevant(x)$, we must have $Q(A) \vdash \SentCount(x,n)$ and $Q(B) \vdash \RecvCount(x,n)$ for some $n$. Hence $x$ is not deactivated at $t_A$ and unreleased at $t_B$. By Lemma~\ref{lem:msg-counts}, every message sent before $t_A$ is received before $t_B$. Hence $A$ cannot send this message to $B$ because $t_A > t > t_B$.

In order to maintain IH-1, we will show that if one of the refobs sent to $B$ in this step is $\Refob z B C$, where $C \in \dom(Q_t)$, then $Q \vdash \Chain(z)$. In the configuration that follows this $\textsc{Send}$ event, $\CreatedUsing(y,z)$ occurs in $A$'s knowledge set. By the same argument as above, $A \in \dom(Q) \setminus Q_t$ and $Q(A) \vdash \SentCount(y,n)$ and $Q(C) \vdash \RecvCount(y,n)$ for some $n$. Hence $A$ cannot perform the $\textsc{SendInfo}(y,z,A,B,C)$ event before $t_A$, so $Q(A) \vdash \CreatedUsing(y,z)$. Since $Q \vdash \Chain(y) \land \CreatedUsing(y,z)$ and $Q \not\vdash \Released(z)$, we have $Q \vdash \Chain(z)$.

\paragraph*{$\textsc{SendInfo}(y,z,A,B,C)$}

By the same argument as above, $A \not\in \dom(Q_t)$ cannot send an $\InfoMsg$ message to $B \in \dom(Q_t)$ without violating message counts, so IH-2 is preserved.

\paragraph*{$\textsc{SendRelease}(x,A,B)$}

Suppose that $A \not\in \dom(Q_t)$ and $B \in \dom(Q_t)$. By IH-1, $Q \vdash \Chain(x)$ at time $t$. Since $Q \vdash \Relevant(x)$, it follows that $Q(A) \vdash \Activated(x)$. Hence $A$ cannot deactivate $x$ and IH-2 is preserved.

\paragraph*{$\textsc{In}(A,R)$}

By IH-1, every potential inverse acquaintance of an actor in $Q_t$ is also in $Q$. Hence none of the actors in $Q_t$ is a receptionist and this rule does not affect the invariants.

\paragraph*{$\textsc{Out}(x,B,R)$}
Suppose $(\Refob y B C) \in R$ where $C \in \dom(Q_t)$. Then $y$ is unreleased and $Q \vdash \Chain(y)$ and $B \in \dom(Q)$. But this is impossible because $B$ is an external actor and external actors do not take snapshots.
\let\qed\relax
\hfill {\small \em End of proof of \cref{thm:safety-helper}.}~\qedsymbol
\end{proof}

\begin{cor}[Safety]
\label{cor:safety}
If $Q$ is a finalized set of snapshots at time $t_f$ then the actors in $Q$ are all terminated at $t_f$.
\end{cor}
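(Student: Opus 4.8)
The plan is to read off the corollary almost immediately from \Cref{thm:safety-helper}, using only the operational characterization of termination from \Cref{sec:garbage-defn}: an actor is terminated iff it is idle and cannot potentially receive a message, equivalently iff it is only potentially reachable by blocked actors. First I would instantiate \Cref{thm:safety-helper} at the time $t = t_f$. Its closing clause gives $Q_{t_f} = Q$, so $\dom(Q_{t_f}) = \dom(Q)$, and part (2) of the theorem then states that every actor of $\dom(Q)$ is blocked at $t_f$---in particular idle, not a receptionist, and with no undelivered messages.

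Next I would combine part (1) of \Cref{thm:safety-helper} with the hypothesis that $Q$ is finalized to conclude that $\dom(Q)$ is \emph{closed} at $t_f$ (with respect to the potential inverse acquaintance relation). Indeed, if $B \in \dom(Q)$ and $\Refob x A B$ is an unreleased refob at $t_f$, then part (1) gives $Q \vdash \Chain(x)$, and condition (1) of \Cref{defn:weakly-finalized} then forces $A \in \dom(Q)$.

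Finally I would assemble the pieces: $\dom(Q)$ is a closed set of blocked actors at $t_f$. A short induction on the length of a potential-reachability path shows that any actor that can potentially reach some $B \in \dom(Q)$ must itself lie in $\dom(Q)$ (each step of such a path exhibits an unreleased refob whose owner, by closure, stays inside $\dom(Q)$), hence is blocked. Therefore every $B \in \dom(Q)$ is only potentially reachable by blocked actors and is idle, so $B$ is terminated at $t_f$.

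I expect no real obstacle here, since all the substantive work lives in \Cref{thm:safety-helper}. The only point requiring a little care is the interaction with external actors: one must rule out that some $B \in \dom(Q)$ is in the root set (a receptionist, or referenced by an in-transit message to an external actor), which would leave an unreleased refob to $B$ without a chain owned inside $\dom(Q)$. This is already handled inside the proof of \Cref{thm:safety-helper} (the \textsc{Snapshot} case: a chain to an external-owned refob would require a snapshot from an external actor, which is impossible), so it suffices to cite that fact.
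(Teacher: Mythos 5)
Your proposal is correct and follows essentially the same route as the paper's own proof: invoke \Cref{thm:safety-helper} at $t_f$ to get that all actors of $\dom(Q)$ are blocked, combine part (1) with the finalized hypothesis to get closure under the potential inverse acquaintance relation, and conclude termination from the operational characterization. The extra details you supply (the induction along reachability paths and the remark about the root set) are just elaborations of steps the paper leaves implicit.
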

\begin{proof}
    \Cref{thm:safety-helper} implies that, at $t_f$, all the actors in $Q$ are blocked. Together with the fact that $Q$ is finalized, it also implies that $Q$ is closed under the potential inverse acquaintance relation at $t_f$. Hence every actor that can potentially reach $B \in \dom(Q)$ at $t_f$ is blocked, so by definition every $B \in \dom(Q)$ is terminated at $t_f$.
\end{proof}

Recall that a snapshot aggregator detects terminated actors by receiving actor snapshots and periodically looking for finalized subsets. It is now simple to see that this algorithm is live, under reasonable fairness assumptions:

\begin{thm}[Liveness]\label{thm:liveness}
If every actor eventually takes a snapshot after performing an \textsc{Idle}, \textsc{Info}, or \textsc{Release} event, then every terminated actor is eventually part of a finalized set of snapshots.
\end{thm}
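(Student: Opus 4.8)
The plan is to derive the statement directly from the machinery already built in this section, so that the real content has been discharged by \Cref{lem:terminated-is-complete}. Fix a terminated actor $A$ and let $t_f$ be a time at which $A$ is terminated. Let $S$ be the closure of $\{A\}$ under the potential inverse acquaintance relation in $\kappa_{t_f}$. Since $\{A\}$ is a set of terminated actors and the closure of a set of terminated actors is again a set of terminated actors (\Cref{sec:garbage-defn}), every actor of $S$ is terminated at $t_f$; and $S$ is finite, since $\kappa_{t_f}$ is reached from $\kappa_0$ by finitely many events and hence contains only finitely many actors and only finitely many unreleased refobs.

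First I would note that each $B \in S$, being terminated at $t_f$, has performed its \emph{final action}---an \textsc{Idle}, \textsc{Info}, or \textsc{Release} event---at or before $t_f$, and that $B$'s knowledge set is unchanged from that action onward. By the fairness hypothesis, $B$ eventually takes a \textsc{Snapshot} event after its final action; this snapshot therefore records $B$'s state at $t_f$. Since $S$ is finite, there is a time by which all actors of $S$ have taken such a snapshot; collect these into a set $Q$, so that $\dom(Q) = S \ni A$.

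It then remains only to check that $Q$ is finalized. Every actor of $S$ took its snapshot in $Q$ after its final action, so $Q$ is consistent at $t_f$ by \Cref{lem:terminated-is-consistent}; and $S$ is a closed set of terminated actors at $t_f$, so $Q$ is finalized by \Cref{lem:terminated-is-complete}. Hence from the moment the last of these snapshots is taken, $A$ belongs to the finalized set $Q$, as required. I expect no genuine obstacle: the only points needing care are that the fairness hypothesis really does supply a snapshot ``sometime after the final action'' in the sense \Cref{lem:terminated-is-consistent} requires (it does, because the final action is exactly one of the three events named in the hypothesis), that the closure $S$ is finite, and that $A$---and hence all of $S$---remains terminated so that $Q$ describes a valid closed set of terminated actors when the final snapshot is recorded.
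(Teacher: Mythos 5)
Your proposal is correct and follows essentially the same route as the paper's proof: take the closure $S$ of $\{A\}$, use the fairness hypothesis to obtain a post-final-action snapshot from each actor of $S$, and then apply \Cref{lem:terminated-is-consistent} and \Cref{lem:terminated-is-complete} to conclude the resulting set is finalized. The extra details you supply (finiteness of $S$, and that the final action is necessarily one of the three named events) are correct and merely make explicit what the paper leaves implicit.
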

\begin{proof}
    If an actor $A$ is terminated, then the closure $S$ of $\{A\}$ is a terminated set of actors. Every actor eventually takes a snapshot after taking its final action and the resulting set of snapshots is consistent, by Lemma~\ref{lem:terminated-is-consistent}. Then Theorem~\ref{lem:terminated-is-complete} implies that the resulting snapshots is finalized.
\end{proof}

\subsection{Strongly finalized sets}

Note that our definition of finalized sets differs from the definition which originally appeared in \cite{DBLP:conf/concur/PlyukhinA20}. This old definition, which we now call ``strongly finalized'', used $Q \vdash \Unreleased(x)$ instead of $Q \vdash \Chain(x)$:
\begin{defi}\label{defn:finalized}
A set of snapshots $Q$ is \emph{strongly finalized} if, for all $B \in \dom(Q)$ and for all $\Refob x A B$, $Q \vdash \Unreleased(x)$ implies $Q \vdash \Relevant(x)$.
\end{defi}
In fact, the two notions of finalized are equivalent.  However, in the process of developing the theory in \Cref{sec:maximal,sec:coop}, we found this old definition to be inconvenient.

Notice that any strongly finalized $Q$ is also finalized because $Q \vdash \Chain(x)$ implies $Q \vdash \Unreleased(x)$. However, in an arbitrary set $Q$, $Q \vdash \Unreleased(x)$ does not imply $Q \vdash \Chain(x)$; there could exist $A,B,C \in \dom(Q)$ such that $Q(A) \vdash \CreatedUsing(\Refob x A C,\allowbreak \Refob y B C)$ but $Q \not\vdash \Chain(x)$. Could such a situation occur in a finalized $Q$? We prove below that no, this is impossible:

\begin{thm}
If $Q$ is finalized then $Q$ is strongly finalized.
\end{thm}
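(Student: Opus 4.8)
The plan is to sharpen the hypothesis: I will show that in a finalized $Q$, whenever $B \in \dom(Q)$ and $Q \vdash \Unreleased(\Refob x A B)$, in fact $Q \vdash \Chain(\Refob x A B)$. Strong finality is then immediate, since condition~(2) of \Cref{defn:weakly-finalized} promotes $Q \vdash \Chain(x)$ to $Q \vdash \Relevant(x)$. The starting data is $Q \vdash \Created(\Refob x A B)$ together with $Q \not\vdash \Released(\Refob x A B)$. Since $\Released$ facts are never produced by a derivation rule and are only ever recorded in the target's state, $Q \not\vdash \Released(\Refob x A B)$ is equivalent to $Q(B) \not\vdash \Released(\Refob x A B)$; and since $\Created(\Refob x A B)$ is only ever recorded in $B$'s state and is otherwise derivable solely from a $\CreatedUsing$ fact whose second component is $\Refob x A B$, at least one of two things holds: either the literal fact $\Created(\Refob x A B)$ lies in $Q(B)$, or $\CreatedUsing(\Refob w D B, \Refob x A B)$ lies in the snapshot $Q(D)$ of the (unique) creator $D$ of $x$ (possibly with $D = B$, and necessarily with $D \in \dom(Q)$). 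The argument splits on these two cases; in each I produce a refob to $B$ that is genuinely unreleased at the time $t_B$ of $B$'s snapshot and then invoke the Chain Lemma.

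Two preparatory facts, both consequences of $Q$ being finalized, carry the weight. First, by \Cref{cor:safety} every actor of $\dom(Q)$ is terminated at $t_f$, hence by \Cref{thm:safety-helper} every $C \in \dom(Q)$ is blocked at all times after $t_C$; moreover $B$ is not in the root set at $t_B$, and any refob to $B$ that is unreleased at $t_B$ already satisfies $Q \vdash \Chain(\cdot)$ --- this is exactly what the \textsc{Snapshot} case in the proof of \Cref{thm:safety-helper} establishes (from the Chain Lemma and \Cref{lem:complete-chains}). Second, if $\CreatedUsing(\Refob w D B, z) \in Q(D)$, then $\Refob w D B$ is \emph{never} deactivated, hence unreleased at all times: by \Cref{lem:facts-remain-until-cancelled} this $\CreatedUsing$ fact sits in $D$'s state continuously from the \textsc{Send} that created $z$ until $D$ performs the matching \textsc{SendInfo}, and both that \textsc{SendInfo} and any \textsc{SendRelease} of $w$ (whose precondition also forbids the $\CreatedUsing$ fact) require $D$ to be busy, which $D$ is not once it has snapshotted.

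In the first case, $\Created(\Refob x A B) \in Q(B) = \alpha_{t_B}(B)$ while $Q(B) \not\vdash \Released(\Refob x A B)$; by \Cref{lem:facts-remain-until-cancelled} this forces $\Refob x A B$ to be unreleased at $t_B$ --- had $B$ received the $\ReleaseMsg$ for $x$, then either $\Released(x)$ would still be present in $Q(B)$ or a \textsc{Compaction} would also have deleted $\Created(x)$. The first preparatory fact then yields $Q \vdash \Chain(\Refob x A B)$ directly. In the second case, the second preparatory fact makes $\Refob w D B$ unreleased at $t_B$, so the first preparatory fact yields $Q \vdash \Chain(\Refob w D B)$; since $\CreatedUsing(\Refob w D B, \Refob x A B) \in Q(D)$ and $Q \not\vdash \Released(\Refob x A B)$, appending $\Refob x A B$ to this chain exhibits $Q \vdash \Chain(\Refob x A B)$, as required.

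The main obstacle is the second preparatory fact: nailing down that a single $\CreatedUsing(\Refob w D B, z)$ fact in $D$'s snapshot really certifies $w$ as unreleased at $t_B$, irrespective of whether $t_D$ precedes or follows $t_B$ --- this is where the \textsc{SendRelease} precondition, \Cref{lem:facts-remain-until-cancelled}, and the ``blocked ever after one's snapshot'' consequence of \Cref{thm:safety-helper} must interlock. A lesser point of care is that \Cref{lem:complete-chains} and the root-set argument, being internal to the proof of \Cref{thm:safety-helper}, were proved relative to the snapshot time of the actor in question, so here they may be applied only at $t = t_B$.
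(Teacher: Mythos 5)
Your proof is correct and follows essentially the same strategy as the paper's: reduce the claim to showing that $Q \vdash \Unreleased(\Refob x A B)$ with $B \in \dom(Q)$ implies $Q \vdash \Chain(\Refob x A B)$, split on whether $\Created(x)$ is witnessed literally in $Q(B)$ or only via a $\CreatedUsing(\Refob w D B, \Refob x A B)$ fact in the creator's snapshot, and in the latter case certify a chain to the creating refob and extend it by one link. The one place you diverge is in how you certify that chain: the paper applies the Chain Lemma at the termination time $t_f$ and uses consistency of $Q$ at $t_f$ to pull the runtime chain into the snapshots, whereas you reuse \Cref{thm:safety-helper}(1) as a black box and give a direct protocol argument that $\Refob w D B$ is never deactivated (the $\CreatedUsing$ fact blocks \textsc{SendRelease}, and $D$ is blocked forever after its snapshot). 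Your justification is arguably cleaner at that step---the paper's assertion that $Q \vdash \Activated(y)$ ``because $Q \vdash \Relevant(y)$'' presupposes $Q \vdash \Chain(y)$, which is what is being established---but both arguments rest on the same underlying machinery, so this is a local variation rather than a genuinely different route.
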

\begin{proof}
    We will show that if $Q$ is finalized, then $Q \vdash \Unreleased(\Refob x A B)$ and $B \in \dom(Q)$ implies $Q \vdash \Chain(\Refob x A B)$. 
    
    Note that $Q$ is consistent at some time $t_f$, since $\dom(Q)$ is a closed terminated set of actors where each snapshot was taken after the actor's final action.
    
    By definition, $Q \vdash \Unreleased(\Refob x A B)$ implies that $Q \not\vdash \Released(x)$ and either (1) $Q \vdash \Created(x)$  or (2) $Q \vdash \Chain(y) \land \CreatedUsing(y,x)$ for some $\Refob y C B$.
    
    In case (1), we have a trivial chain $Q \vdash \Chain(x)$.
    
    In case (2), notice that we must have $Q \vdash \Activated(y)$ because $Q \vdash \Relevant(y)$. Since $Q$ is consistent, $y$ must be an unreleased refob at $t_f$. Due to Lemma~\ref{lem:chain-lemma}, there must be a chain of unreleased refobs $(\Refob{y_1}{C_1}{B}),\dots,(\Refob{y_n}{C_n}{B})$ at $t_f$. Again since $Q$ is consistent, we must have $Q \vdash \Created(y_1)$ and $Q \vdash \CreatedUsing(y_i,y_{i+1})$ for each $i < n$ and $Q \not\vdash \Released(y_i)$ for each $i \le n$. Hence $Q \vdash \Chain(y)$. Since also $Q \vdash \CreatedUsing(y,x)$ and $Q \not\vdash \Released(x)$, we can ``extend'' this chain to derive $Q \vdash \Chain(x)$.
\end{proof}

Thanks to this theorem, we can use the two definitions interchangeably.

\section{Maximal Finalized Subsets}\label{sec:maximal}

In the previous section, we showed that a finalized set of snapshots corresponds to a closed set of terminated actors. Hence, the problem of garbage collection reduces to finding all the finalized subsets of an arbitrary set of snapshots $Q$. In this section, we show that there is in fact a single largest finalized subset $Q_f \subseteq Q$ that contains all other finalized subsets. We then show that $Q_f$ can be computed in linear time by removing all snapshots that cannot appear in a finalized subset.

Originally, we presented a slightly different algorithm in \cite{DBLP:conf/concur/PlyukhinA20}. It operates similarly to the new algorithm, iteratively removing snapshots that appear not to be in a finalized subset. However, we subsequently discovered through model checking that the original algorithm could sometimes be overzealous: the presence of certain ``stale'' snapshots in $Q$ can cause other snapshots to be unnecessarily removed. In other words, the computed set is finalized but not necessarily maximal. Nevertheless, since the original algorithm can have better cache locality than the new algorithm, it may be more practical for real systems. We present the algorithm again in \Cref{sec:heuristic} and go on to prove that it \emph{eventually} detects all terminated actors, under reasonable fairness conditions.

\subsection{Chain Algorithm}\label{sec:chain-algorithm}

Let us first show that a maximum finalized subset always exists. Notice that finalized sets are closed under union when they agree on $\dom(Q_1) \cap \dom(Q_2)$:
\begin{lem}\label{lem:weakly-finalized-closed-union}
    Let $Q_1,Q_2$ be finalized sets of snapshots that agree at their intersection, i.e.~ $\forall A \in \dom(Q_1) \cap \dom(Q_2),\ Q_1(A) = Q_2(A)$. Then $Q_1 \cup Q_2$ is also finalized.
\end{lem}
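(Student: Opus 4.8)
The plan is to verify Definition~\ref{defn:weakly-finalized} directly for $Q = Q_1 \cup Q_2$. First I would observe that $Q$ is well-defined as a set of snapshots (i.e.\ a cut): since $Q_1$ and $Q_2$ agree on $\dom(Q_1) \cap \dom(Q_2)$, no actor is assigned two different snapshots, so $Q$ is again a mapping from actor names to snapshots. The central point is that derivability is monotone in the knowledge base, and more usefully, that for the facts we care about the derivation is \emph{local}: whether $Q \vdash \phi$ for a fact $\phi$ about a token $x$ depends only on $Q(A)$ for the relevant owner/target $A$. So if $Q \vdash \Created(x)$, $Q \vdash \CreatedUsing(x,y)$, $Q \vdash \Released(x)$, $Q \vdash \Activated(x)$, or $Q \vdash \SentCount(x,n)$ / $\RecvCount(x,n)$, then the same fact is already derivable from $Q_1$ alone or from $Q_2$ alone (whichever contains the snapshot of the actor whose knowledge set carries the fact), and conversely.

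With that in hand, fix $B \in \dom(Q)$ and a refob $\Refob x A B$ with $Q \vdash \Chain(x)$; I must show $A \in \dom(Q)$ and $Q \vdash \Relevant(x)$. Without loss of generality $B \in \dom(Q_1)$. The key claim is that the entire chain witnessing $Q \vdash \Chain(\Refob x A B)$ already lives in $Q_1$: the base fact $Q(B) \vdash \Created(x_1)$ and $Q(B) \not\vdash \Released(x_1)$ are facts of $Q_1(B)$; and for each step, $Q \vdash \CreatedUsing(x_i, x_{i+1})$ is a fact in the knowledge set of $A_i$, while $Q \not\vdash \Released(x_{i+1})$. I would argue inductively along the chain that each $A_i \in \dom(Q_1)$ and $Q_1(A_i) \vdash \CreatedUsing(x_i,x_{i+1})$, and that $Q_1 \not\vdash \Released(x_{i+1})$: once $A_i \in \dom(Q_1)$ we get $Q_1 \vdash \CreatedUsing(x_i,x_{i+1})$ (the fact is local to $A_i$), hence $Q_1 \vdash \Chain(x_{i+1})$ so far; finalizedness of $Q_1$ then forces $A_{i+1} \in \dom(Q_1)$ (applying condition~1 of Definition~\ref{defn:weakly-finalized} to the prefix chain and the target $B$), which feeds the next step. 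The subtlety to watch is the $\Released$ side condition: I need $Q_1 \not\vdash \Released(x_{i+1})$, which follows because $Q \not\vdash \Released(x_{i+1})$ and $\Released(x_{i+1})$, being a fact of $B$'s knowledge set, is derivable from $Q$ iff it is derivable from $Q_1$ (as $B \in \dom(Q_1)$). So the whole chain is internal to $Q_1$, giving $Q_1 \vdash \Chain(x)$; then finalizedness of $Q_1$ yields $A \in \dom(Q_1) \subseteq \dom(Q)$ and $Q_1 \vdash \Relevant(x)$, and by monotonicity $Q \vdash \Relevant(x)$. This establishes both conditions of Definition~\ref{defn:weakly-finalized}.

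The main obstacle, and the step deserving the most care, is the locality argument for $\vdash$ and in particular the bookkeeping that the $\Chain$-witness cannot ``escape'' from $Q_1$ into $Q_2$-only territory: a priori a chain derived from the union could route through $\CreatedUsing$ facts in $\dom(Q_2) \setminus \dom(Q_1)$. Closing this requires exactly the inductive use of condition~1 of finalizedness of $Q_1$ at every step of the chain, which is why the hypothesis that $Q_1$ and $Q_2$ are individually finalized (not merely consistent) is essential. The remaining pieces—well-definedness of the union, monotonicity of $\Relevant$—are routine.
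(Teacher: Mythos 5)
Your proof is correct and follows essentially the same route as the paper's: both induct along the witnessing chain, using locality of the stored facts together with finalizedness of $Q_1$ to show that the entire chain (and hence $\Relevant(x)$) is already derivable from $Q_1$ alone, then transfer back to the union. The only cosmetic difference is that you invoke condition~1 of Definition~\ref{defn:weakly-finalized} to pull each $A_{i+1}$ into $\dom(Q_1)$, whereas the paper reaches the same conclusion via condition~2 (since $Q_1 \vdash \Relevant(x_i)$ already forces $A_i \in \dom(Q_1)$).
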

\begin{proof}
    Suppose there exists $\Refob x A B$ such that $Q_1 \cup Q_2 \vdash \Chain(x)$ and $Q_1 \cup Q_2 \not\vdash \Relevant(x)$. Let $(\Refob{x_1}{A_1}{B}),\dots,(\Refob{x_n}{A_n}{B})$ be the chain.
    
    Let $Q$ be either $Q_1$ or $Q_2$; we prove by induction on $n$ that, if $B \in \dom(Q)$, then $\forall i \le n,\ Q \vdash \Chain(x_i)$. If $n = 1$ then $Q(B) \vdash \Created(x_1)$ and $Q(B) \not\vdash \Released(x_1)$ implies $Q \vdash \Chain(x_1)$. For $n > 1$, $Q \vdash \Chain(x_{n-1})$ implies $Q \vdash \Relevant(x_{n-1})$ since $Q$ is finalized, and therefore $A_{n-1} \in \dom(Q)$. Hence $Q(A_{n-1}) \vdash \CreatedUsing(x_{n-1},x_n)$, which implies $Q \vdash \Chain(x_n)$.
    
    Since each $Q_1,Q_2$ is finalized, we must therefore have $Q \vdash \Relevant(x)$, i.e.~$Q \vdash \Activated(x) \land \SentCount(x,n) \land \RecvCount(x,n)$ for some $n$. By definition of $(\vdash)$, it follows that $Q_1 \cup Q_2 \vdash \Relevant(x)$.
\end{proof}
Any two finalized subsets of $Q$ will satisfy the condition of this lemma. Hence the maximum finalized subset $Q_f$ is the union of all finalized subsets of $Q$.

Next, we characterize which snapshots in $Q$ \emph{can} and \emph{cannot} appear in a finalized subset of $Q$. To this end, we define the following useful concept:
\begin{defi}
    We say that $B$ \emph{depends on} $A$ in $Q$ if $A = B$ or there is a sequence of one or more refobs $(\Refob{x_1}{A_1}{A_2}),\dots,(\Refob{x_n}{A_{n-1}}{A_n})$ where $A = A_1$ and $B = A_n$ and, for each $i < n$, $Q \vdash \Chain(\Refob{x_i}{A_i}{A_{i+1}})$. Hence the ``depends on'' relation is reflexive and transitive.
\end{defi}
The following lemmas show that if there exists $A \in \dom(Q)$ such that $A \not\in \dom(Q_f)$, then every $B$ that depends on $A$ in $Q$ also cannot appear in $Q_f$.
\begin{lem}
    If $Q \vdash \Chain(\Refob x A B)$ then, for any finalized subset $Q_f$ of $Q$, if $B \in \dom(Q_f)$ then $Q_f \vdash \Chain(\Refob x A B)$.
\end{lem}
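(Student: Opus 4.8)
The plan is to induct on the length of a chain witnessing the hypothesis. Fix a chain $(\Refob{x_1}{A_1}{B}),\dots,(\Refob{x_n}{A_n}{B})$ with $x_n = x$ and $A_n = A$ witnessing $Q \vdash \Chain(\Refob x A B)$, and prove that $Q_f \vdash \Chain(\Refob{x_i}{A_i}{B})$ for every $i \le n$; the case $i = n$ is the statement. Two structural facts about $\vdash$ do the work. First, since $Q_f$ is a subset of $Q$ we have $\dom(Q_f) \subseteq \dom(Q)$ and $Q_f(C) = Q(C)$ for every $C \in \dom(Q_f)$, so any fact living in the snapshot of an actor that belongs to $\dom(Q_f)$ transfers between $Q$ and $Q_f$ in both directions. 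Second, among the fact forms relevant to $\Chain$, a $\CreatedUsing(\Refob{y}{C}{D},\Refob{y'}{C'}{D})$ fact can only reside in $C$'s knowledge set and a $\Released(\Refob{y}{C}{D})$ fact only in $D$'s, and neither is derivable from facts of other forms; I will also read the root condition $Q \vdash \Created(x_1)$ as asserting $Q(B) \vdash \Created(x_1)$, matching \Cref{defn:chain}. Combining these, $Q \not\vdash \Released(\Refob{x_j}{A_j}{B})$ with $B \in \dom(Q_f)$ yields $Q_f \not\vdash \Released(\Refob{x_j}{A_j}{B})$, and $Q \vdash \CreatedUsing(x_i,x_{i+1})$ with $A_i \in \dom(Q_f)$ yields $Q_f \vdash \CreatedUsing(x_i,x_{i+1})$.

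For the base case, $Q(B) \vdash \Created(x_1)$ and $Q \not\vdash \Released(x_1)$ give, via $B \in \dom(Q_f)$, that $Q_f(B) \vdash \Created(x_1)$ and $Q_f \not\vdash \Released(x_1)$, so the one-element sequence $(\Refob{x_1}{A_1}{B})$ witnesses $Q_f \vdash \Chain(x_1)$. For the induction step, suppose $Q_f \vdash \Chain(\Refob{x_i}{A_i}{B})$ with $i < n$. Since $Q_f$ is finalized, clause~(1) of \Cref{defn:weakly-finalized} gives $A_i \in \dom(Q_f)$; hence $Q_f \vdash \CreatedUsing(x_i,x_{i+1})$ by the transfer above, and $Q_f \not\vdash \Released(x_{i+1})$ because $B \in \dom(Q_f)$. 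Appending $\Refob{x_{i+1}}{A_{i+1}}{B}$ to any $Q_f$-chain witnessing $Q_f \vdash \Chain(x_i)$ then produces a $Q_f$-chain to $x_{i+1}$, completing the step.

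I expect the only real subtlety to be bookkeeping about where facts can live, in particular for the negative conditions: $Q \not\vdash \Released(\cdot)$ survives the passage to the subset $Q_f$ precisely because $B$ itself lies in $\dom(Q_f)$ and $\Released$ facts for target $B$ are non-derivable and confined to $B$'s (shared) snapshot---without the hypothesis $B \in \dom(Q_f)$ the implication would fail. A secondary point is the base case: $\Created$ is the one $\Chain$-relevant fact that is derivable (from $\CreatedUsing$ via rule~4), so the argument tacitly takes the root condition to be a fact genuinely in $B$'s snapshot, consistent with \Cref{defn:chain}. Everything else is routine monotonicity of $\vdash$ under enlarging a set of snapshots.
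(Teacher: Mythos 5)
Your proof is correct and follows essentially the same route as the paper's: induction on the length of the chain, using clause~(1) of the finalized definition to conclude $A_i \in \dom(Q_f)$ and then transferring the $\CreatedUsing$ and $\lnot\Released$ facts into $Q_f$ because the actors holding them lie in $\dom(Q_f)$. Your extra bookkeeping about which actor's snapshot each fact resides in (and the reading of the root condition as $Q(B) \vdash \Created(x_1)$) only makes explicit what the paper's proof leaves implicit.
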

\begin{proof}
    Proof by induction on the length of the chain $(\Refob{x_1}{A_1}{B}),\dots,(\Refob{x_n}{A_n}{B})$.
    
    If $n = 1$ then $Q(B) \vdash \Created(x_1)$ and $Q(B) \not\vdash \Released(x_1)$. Hence any subset $Q_f$ of $Q$ must have $Q_f \vdash \Chain(x_1)$.
    
    If $n > 1$, assume $Q_f \vdash \Chain(x_{n-1})$. Since $Q_f$ is finalized, $A_{n-1} \in \dom(Q_f)$. Since $Q(A_{n-1}) \vdash \CreatedUsing(x_{n-1},x_n)$ and $Q(B) \not\vdash \Released(x_n)$, it follows that $Q_f \vdash \Chain(x_n)$.
\end{proof}

\begin{lem}
    If $B$ depends on $A$ in $Q$, then every finalized subset of $Q$ containing $B$ must also contain $A$.
\end{lem}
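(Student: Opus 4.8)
The plan is to induct on the length $n$ of the refob sequence witnessing that $B$ depends on $A$ in $Q$. Fix an arbitrary finalized subset $Q_f$ of $Q$ with $B \in \dom(Q_f)$; the goal is to show $A \in \dom(Q_f)$. The reflexive case $A = B$ is immediate, since then $A = B \in \dom(Q_f)$ by hypothesis, so assume there is a sequence $(\Refob{x_1}{A_1}{A_2}),\dots,(\Refob{x_{n}}{A_{n}}{A_{n+1}})$ with $A = A_1$, $B = A_{n+1}$, and $Q \vdash \Chain(\Refob{x_i}{A_i}{A_{i+1}})$ for every $i$.

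The base case is $n = 1$: here $Q \vdash \Chain(\Refob{x_1}{A}{B})$ directly, so the preceding lemma (applied to the finalized subset $Q_f$, using $B \in \dom(Q_f)$) gives $Q_f \vdash \Chain(\Refob{x_1}{A}{B})$, and then condition (1) of the definition of a finalized set (\Cref{defn:weakly-finalized}) forces $A \in \dom(Q_f)$. For the inductive step, let $A' = A_{n}$. From the last refob we have $Q \vdash \Chain(\Refob{x_{n}}{A'}{B})$; by the preceding lemma $Q_f \vdash \Chain(\Refob{x_{n}}{A'}{B})$, and since $Q_f$ is finalized, condition (1) yields $A' \in \dom(Q_f)$. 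The truncated sequence $(\Refob{x_1}{A_1}{A_2}),\dots,(\Refob{x_{n-1}}{A_{n-1}}{A_{n}})$ witnesses that $A'$ depends on $A$ in $Q$, so applying the inductive hypothesis with $A'$ in the role of $B$ (legitimate precisely because we have just shown $A' \in \dom(Q_f)$) gives $A \in \dom(Q_f)$, as required.

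I expect no real difficulty here: the argument is a routine finite iteration built on the two facts already available, namely the preceding lemma (a chain of $Q$ restricted to a finalized subset containing its target is still a chain) and condition (1) of finalizedness (a chain forces its owner into the domain). The only point that deserves care is the \emph{direction} of the induction: one must peel refobs off starting from the target end $B$ — where membership in $\dom(Q_f)$ is known — and propagate toward the owner end $A$, rather than the reverse; attempting to argue forward from $A$ is circular, since $A \in \dom(Q_f)$ is exactly what is being proved. A secondary but purely cosmetic nuisance is reconciling the index bookkeeping of the ``depends on'' definition with the $n{+}1$ actors versus $n$ refobs, which I have smoothed over above.
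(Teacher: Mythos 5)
Your proof is correct and follows essentially the same route as the paper's: induction on the length of the witnessing refob sequence, peeling refobs off from the target end $B$ (where membership in $Q_f$ is known) and using the preceding lemma together with the finalizedness condition to pull each successive owner into $\dom(Q_f)$. The only cosmetic difference is that you invoke condition (1) of the definition directly where the paper routes through $\Relevant(x)$; both yield the same conclusion.
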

\begin{proof}
    If $A = B$ then the lemma trivially holds. We prove that this must hold for nontrivial sequences by induction on the length of the sequence $(\Refob{x_1}{A_1}{A_2})$, $\dots$, $(\Refob{x_n}{A_{n-1}}{A_n})$.
    
    If $n = 1$ then $Q \vdash \Chain(\Refob x A B)$. Then $Q_f \vdash \Chain(\Refob x A B)$ for any finalized subset $Q_f$ containing $B$. Since $Q_f$ is finalized, we must also have $Q_f \vdash \Relevant(\Refob x A B)$ and therefore $A \in \dom(Q_f)$.
    
    For $n > 1$, assume any finalized subset containing $A_{n-1}$ must also contain $A_1$. By the same argument as above, any finalized subset containing $A_n$ must contain $A_{n-1}$ and therefore also contain $A_1$.
\end{proof}

We can also use the notion of dependency to give a new characterization of finalized sets:
\begin{defi}
    $C$ is \emph{finalized} in $Q$ if, for all $B$ on which $C$ depends, for all $\Refob x A B$, $Q \vdash \Chain(x)$ implies $Q \vdash \Relevant(x)$.
\end{defi}
\begin{lem}
    $C$ is finalized in $Q$ if and only if $C$ is in a finalized subset of $Q$.
\end{lem}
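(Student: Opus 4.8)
The plan is to prove the two implications separately, leaning on the two lemmas that immediately precede the statement and on a simple ``locality'' observation: whether $Q \vdash \Chain(\Refob x A B)$ holds is witnessed by the $\Created$ and $\CreatedUsing$ facts of the actors named along the chain (plus the \emph{absence} of $\Released$ facts from $B$'s snapshot), and whether $Q \vdash \Relevant(\Refob x A B)$ holds is witnessed by the $\Activated$/$\SentCount$ facts in $Q(A)$ and the $\RecvCount$ facts in $Q(B)$. Consequently, if $Q_f$ is a restriction of $Q$ and $Q_f$ already contains all the actors named in such a witness, then $Q_f$ and $Q$ agree on that instance of the predicate. I will use this in both directions.

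For the forward direction, suppose $C \in \dom(Q_f)$ for some finalized $Q_f \subseteq Q$. Let $B$ be any actor on which $C$ depends in $Q$ and let $\Refob x A B$ satisfy $Q \vdash \Chain(x)$; I must show $Q \vdash \Relevant(x)$. Since $C$ depends on $B$, the lemma ``if $B$ depends on $A$ then every finalized subset containing $B$ contains $A$'' gives $B \in \dom(Q_f)$. Then the lemma ``$Q \vdash \Chain(\Refob x A B)$ and $B \in \dom(Q_f)$ imply $Q_f \vdash \Chain(\Refob x A B)$'' gives $Q_f \vdash \Chain(x)$, so finalization of $Q_f$ (\Cref{defn:weakly-finalized}) yields $Q_f \vdash \Relevant(x)$; since the witnesses for $\Relevant(x)$ lie in $Q_f(A) = Q(A)$ and $Q_f(B) = Q(B)$, we conclude $Q \vdash \Relevant(x)$. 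Hence $C$ is finalized in $Q$.

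For the converse, suppose $C$ is finalized in $Q$ (so in particular $C \in \dom(Q)$). I claim the restriction $Q_f$ of $Q$ to $\{B \in \dom(Q) : C \text{ depends on } B \text{ in } Q\}$ is a finalized subset of $Q$ containing $C$. It contains $C$ by reflexivity of dependence. To verify \Cref{defn:weakly-finalized}, take $B \in \dom(Q_f)$ and $\Refob x A B$ with $Q_f \vdash \Chain(x)$. Because $B \in \dom(Q_f)$, all $\Created$/$\CreatedUsing$ facts witnessing this chain come from snapshots of $Q_f$ (hence of $Q$), and every $\Released$ fact about a refob targeting $B$ can only occur in $Q(B) = Q_f(B)$, so $Q \vdash \Chain(x)$. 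Since $C$ depends on $B$, finalization of $C$ in $Q$ gives $Q \vdash \Relevant(x)$, whence $A \in \dom(Q)$. Moreover $Q \vdash \Chain(\Refob x A B)$ is precisely a one-step witness that $C$ depends on $A$ (compose with the sequence witnessing that $C$ depends on $B$, using transitivity of dependence), so $A \in \dom(Q_f)$; and $Q_f \vdash \Relevant(x)$ because $Q_f(A) = Q(A)$ and $Q_f(B) = Q(B)$ carry the required $\Activated$, $\SentCount$, $\RecvCount$ facts (or their default-zero readings). Both clauses of the definition thus hold for $Q_f$.

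The main obstacle is the bookkeeping behind why $\Chain$ and $\Relevant$ transfer cleanly between a set of snapshots and a sub-restriction. One must check that the negative condition $Q \not\vdash \Released(x)$ cannot be spoiled by the extra snapshots of $Q$ (it cannot, because $\Released$ facts about $\Refob x A B$ live only in $B$'s snapshot, which is shared whenever $B$ lies in the smaller set), and that the derivation rule assigning $\SentCount$/$\RecvCount$ the default value $0$ never lets the larger set derive a count the smaller set cannot. Once these monotonicity facts are pinned down, both directions reduce to short applications of the two preceding lemmas.
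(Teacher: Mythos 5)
Your proof is correct and follows essentially the same route as the paper: one direction takes $Q_f$ to be the restriction of $Q$ to the actors on which $C$ depends and checks it is finalized, and the other direction applies the two preceding lemmas to pull membership and $\Chain$ facts into the given finalized subset. Your extra care about why $\Chain$ and $\Relevant$ transfer between $Q$ and the restriction (the $\Released$ and message-count facts living only in the shared snapshots of the named actors) is a point the paper's proof leaves implicit.
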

\begin{proof}
    If $C$ is finalized in $Q$, let $Q_f$ be a subset of $Q$ containing only snapshots of actors on which $C$ depends. To see that $Q_f$ is finalized, first notice that each $B \in \dom(Q_f)$ has a sequence of refobs $(\Refob{x_1}{A_1}{A_2})$, $\dots$, $(\Refob{x_n}{A_{n-1}}{A_n})$ where $B = A_1$ and $C = A_n$ and $Q \vdash \Chain(\Refob{x_i}{A_i}{A_{i+1}}$ for each $i < n$. For any $(\Refob x A B)$, $Q_f \vdash \Chain(\Refob x A B)$ implies $Q \vdash \Chain(\Refob x A B)$ and therefore $C$ depends on $A$ in $Q$. Hence $Q \vdash \Relevant(x)$ and therefore $Q_f \vdash \Relevant(x)$.
    
    Conversely, let $C$ be in a finalized subset $Q_f$ and consider a sequence $(\Refob{x_1}{A_1}{A_2})$, $\dots$, $(\Refob{x_n}{A_{n-1}}{A_n})$ where $Q \vdash \Chain(\Refob{x_i}{A_i}{A_{i+1}}$ for each $i < n$. Then $A_i \in \dom(Q_f)$ for each $i \le n$. Since $Q_f$ is finalized, $Q_f \vdash \Relevant(x_i)$ for each $i < n$. Hence $Q \vdash \Relevant(x_i)$ for each $i < n$.
\end{proof}

Since $C \in Q_f$ if and only if $C$ is finalized in $Q$, it follows that $C \not\in Q_f$ if and only if $C$ is not finalized in $Q$. Hence, to find the maximum finalized subset of $Q$ it suffices to remove every snapshot that is not finalized in $Q$.

\begin{algorithm}
\caption{Compute the largest finalized subset of $Q$}\label{alg:maximum-finalized-subset}
\begin{algorithmic}[1]
    \State Let $S_1 \subseteq \dom(Q)$ be the set of all actors $B$ for which there exists $\Refob x A B$ such that $Q \vdash \Chain(x)$ and $Q \not\vdash \Relevant(x)$.
    \State Let $S_2 \subseteq \dom(Q)$ be the set of all actors that depend on actors in $S_1$.
    \State Let $S_3 = \dom(Q) \setminus S_2$.
\end{algorithmic}
\end{algorithm}

\begin{thm}
    \Cref{alg:maximum-finalized-subset} computes the largest finalized subset of $Q$. 
\end{thm}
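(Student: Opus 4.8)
The plan is to show that the output set $S_3$ is exactly $\dom(Q_f)$, from which it follows that the restriction of $Q$ to $S_3$ is the largest finalized subset. By the preceding lemma we already know that $C$ is finalized in $Q$ if and only if $C \in \dom(Q_f)$, so it suffices to prove that for every $C \in \dom(Q)$ we have $C \in S_3$ if and only if $C$ is finalized in $Q$.

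First I would unfold the three definitions in the algorithm. By construction, $S_1$ is precisely the set of actors $B$ that exhibit a \emph{direct} violation of finalization: there is some $\Refob x A B$ with $Q \vdash \Chain(x)$ but $Q \not\vdash \Relevant(x)$. On the other hand, $C$ is finalized in $Q$ exactly when \emph{no} actor $B$ on which $C$ depends exhibits such a direct violation---i.e.\ when $C$ depends on no actor of $S_1$. Since the ``depends on'' relation is reflexive and transitive, the set of actors that depend on some element of $S_1$ is exactly $S_2$; in particular $S_1 \subseteq S_2$, so an actor with its own direct violation is removed as well.

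Chaining these equivalences gives: $C$ is finalized in $Q$ $\iff$ $C$ depends on no actor of $S_1$ $\iff$ $C \notin S_2$ $\iff$ $C \in S_3$. Combined with the characterization lemma, this yields $S_3 = \dom(Q_f)$, so the set of snapshots $\{Q(C) \mid C \in S_3\}$ is the largest finalized subset of $Q$, as claimed.

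I do not expect a real obstacle: each step is an immediate consequence of a definition or of the already-established equivalence ``$C$ finalized in $Q$ $\iff$ $C \in \dom(Q_f)$''. The only point needing a moment's care is the reflexivity of ``depends on'', which is what guarantees $S_2$ absorbs $S_1$ so that a snapshot with a local defect is discarded together with everything depending on it. If one additionally wants the linear-time claim made earlier, one would remark that $S_1$ is computable in one pass by following chains and checking $\Relevant$, and $S_2$ by a graph-reachability computation over the ``depends on'' edges, both linear in the size of $Q$.
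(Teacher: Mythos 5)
Your proposal is correct and follows essentially the same route as the paper: both identify $S_1$ as the actors with a direct $\Chain$/$\Relevant$ violation, observe that the definition of ``finalized in $Q$'' negates exactly to ``depends on some actor of $S_1$'' (so $S_2$ is precisely the non-finalized actors), and then invoke the characterization lemma to conclude $S_3 = \dom(Q_f)$. Your version just spells out the reflexivity point and the final identification with $Q_f$ a bit more explicitly than the paper's three-line argument.
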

\begin{proof}
    Clearly, an actor is not finalized in $Q$ if it depends on one of the actors of $S_1$. Hence $S_2$ is precisely the set of all actors that are not finalized in $Q$. Its complement, $S_3$, is therefore the set of all finalized actors in $Q$.
\end{proof}

\subsection{Heuristic algorithm}\label{sec:heuristic}

Although the algorithm above has $O(m)$ time complexity, where $m$ is the number of unreleased refobs in $Q$, it can suffer from poor locality: finding every $\Refob x A B$ such that $Q \vdash \Chain(x)$ requires tracing a path from $B$ to all of its potential inverse acquaintances using the chains of $\CreatedUsing$ facts.

One way to address this problem is to keep the $\CreatedUsing$ chains short, by having actors not keep the $\CreatedUsing$ fact in their knowledge set for long periods of time. In the extreme case, actors can immediately perform the \textsc{SendInfo} rule whenever they create a refob. This relieves the snapshot aggregator from dealing with $\CreatedUsing$ chains entirely, at the cost of increased control messages between actors.

Another interesting approach is for the snapshot aggregator to use a heuristic to find \emph{some} finalized subset, not necessarily the largest one. For our heuristic, notice that $Q \vdash \Chain(x)$ implies $Q \vdash \Unreleased(x)$ in \emph{any} set of snapshots $Q$. This motivates a new definition:
\begin{defi}
    $B$ \emph{potentially depends} on $A$ in $Q$ if $A = B$ or there is a sequence of one or more refobs $(\Refob{x_1}{A_1}{A_2}),\dots,(\Refob{x_n}{A_{n-1}}{A_n})$ where $A = A_1$ and $B = A_n$ and, for each $i < n$, $Q \vdash \Unreleased(\Refob{x_i}{A_i}{A_{i+1}})$.
\end{defi}
Notice that if $B$ depends on $A$, then $B$ also potentially depends on $A$; the latter is a coarser relation than the former.

Our heuristic algorithm is identical to the original, except that $S_2$ is the set of all actors that \emph{potentially} depend on $S_1$. Since the ``potentially depends'' relation is coarser than the ``depends'' relation, every snapshot in the resulting set is necessarily in the maximum finalized subset.

\begin{algorithm}
\caption{Compute a finalized subset of $Q$}\label{alg:fast-finalized-subset}
\begin{algorithmic}[1]
    \State Let $S_1 \subseteq \dom(Q)$ be the set of all actors $B$ for which there exists $\Refob x A B$ such that $Q \vdash \Chain(x)$ and $Q \not\vdash \Relevant(x)$.
    \State Let $S_2 \subseteq \dom(Q)$ be the set of all actors that \emph{potentially depend} on actors in $S_1$.
    \State Let $S_3 = \dom(Q) \setminus S_2$.
\end{algorithmic}
\end{algorithm}

The following lemma shows that, indeed, only ``stale'' snapshots prevent the resulting set from being the largest finalized subset.

\begin{lem}
    Let $Q$ be an arbitrary set of snapshots at time $t$, and $Q_f$ the largest finalized subset of $Q$. Let $Q'$ be another set of snapshots, all taken after time $t$, such that $\dom(Q')\cap\dom(Q) = \emptyset$.
    
    Then for all $B \in \dom(Q_f)$, for all $\Refob x A B$, $Q' \cup Q_f \vdash \Unreleased(\Refob x A B)$ implies $Q' \cup Q_f \vdash \Chain(\Refob x A B)$.
\end{lem}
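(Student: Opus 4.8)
\emph{Proof proposal.}
The plan is to take any refob $\Refob x A B$ with $B \in \dom(Q_f)$ that ``looks unreleased'' in $Q' \cup Q_f$ and exhibit a chain to it that is also visible in $Q' \cup Q_f$. I would start with two observations. Since $\dom(Q') \cap \dom(Q) = \emptyset$ and $Q_f \subseteq Q$, the actor $B$ does not occur in $Q'$, so $Q_f(B)$ is the \emph{unique} snapshot of $B$ in $Q' \cup Q_f$; hence every $\Created(\Refob u A B)$ fact and every $\Released(\Refob u A B)$ fact occurring directly in $Q' \cup Q_f$ (as opposed to being \emph{derived} from a $\CreatedUsing$ fact) comes from $Q_f(B)$. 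Second, since $Q_f$ is finalized and $Q_f \subseteq Q$ (so all its snapshots were taken by time $t$), Corollary~\ref{cor:safety} shows $\dom(Q_f)$ is a closed set of terminated actors at $t$; these actors stay terminated and $\dom(Q_f)$ stays closed thereafter, so fix any $t_f \ge t$. By Lemma~\ref{lem:terminated-is-consistent}, $Q_f$ is consistent at $t_f$; and since $B$ is terminated at $t_f$ it is not a receptionist and has no undelivered messages there, so $B$ is not in the root set at $t_f$.

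Now assume $Q' \cup Q_f \vdash \Unreleased(\Refob x A B)$, i.e.\ $Q' \cup Q_f \not\vdash \Released(x)$ (so $Q_f(B) \not\vdash \Released(x)$) and $Q' \cup Q_f \vdash \Created(x)$. If $\Created(\Refob x A B) \in Q_f(B)$, the one-element sequence $(\Refob x A B)$ already witnesses $Q' \cup Q_f \vdash \Chain(\Refob x A B)$ and we are done. Otherwise $\Created(x)$ must be derived from a fact $\CreatedUsing(\Refob y D B, \Refob x A B)$ occurring in $Q_f(D)$ or in $Q'(D)$, where $D$ is the unique actor that created $x$ using its refob $y$ to $B$. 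The key preliminary claim is that $\Refob y D B$ is unreleased at $t_f$: if the $\CreatedUsing$ fact lies in $Q_f(D)$, consistency gives $\alpha_{t_f}(D) \vdash \CreatedUsing(y,x)$, and since such a fact blocks \textsc{SendRelease} on $y$, the refob $y$ is still active, hence unreleased, at $t_f$; if instead it lies in $Q'(D)$, then $y$ is active at $D$'s snapshot time $s_D > t$, so $B$ had not received a $\ReleaseMsg$ for $y$ by $s_D$, and since $B$ is terminated (thus frozen) from $t \le s_D$ onward, $y$ remains unreleased at $t_f$.

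With $\Refob y D B$ unreleased at $t_f$ and $B$ not in the root set at $t_f$, the Chain Lemma (Lemma~\ref{lem:chain-lemma}) produces a chain of unreleased refobs $(\Refob{y_1}{D_1}{B}), \dots, (\Refob{y_m}{D_m}{B})$ to $\Refob y D B$ in the configuration at $t_f$. I would then pull this chain into the snapshot set: each $D_i$ owns an unreleased refob to $B \in \dom(Q_f)$ at $t_f$, so $D_i \in \dom(Q_f)$ by closedness; the ``$\InfoMsg$ in transit to $B$'' alternative in the definition of a chain cannot occur, because the terminated actor $B$ has no undelivered messages at $t_f$; hence consistency of $Q_f$ at $t_f$ yields $Q_f(B) \vdash \Created(y_1)$, $Q_f(D_i) \vdash \CreatedUsing(y_i, y_{i+1})$ for $i < m$, and $Q_f(B) \not\vdash \Released(y_i)$ for all $i$, i.e.\ $Q_f \vdash \Chain(\Refob y D B)$. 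Appending $\Refob x A B$ to this sequence, and using that $\CreatedUsing(y,x)$ is in $Q' \cup Q_f$ while $Q' \cup Q_f \not\vdash \Released(x)$, gives $Q' \cup Q_f \vdash \Chain(\Refob x A B)$.

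The step I expect to be the crux is transferring the chain supplied by the Chain Lemma into $Q' \cup Q_f$: that lemma only describes the real configuration at $t_f$, and a priori neither the owners $D_i$ nor the $\CreatedUsing$ facts along the chain need be recorded anywhere in the snapshots. Closedness of $\dom(Q_f)$ is what forces each $D_i$ into $\dom(Q_f)$, consistency of $Q_f$ at $t_f$ is what converts in-configuration facts into snapshot facts, and the fact that $B$ is \emph{terminated} (not merely blocked) is what excludes the ``$\InfoMsg$ still in flight'' case. A secondary delicate point is the unreleasedness argument for $y$ in the $Q'(D)$ subcase, which depends on $B$ being frozen before $D$'s (later) snapshot time, so that no $\ReleaseMsg$ for $y$ can ever reach $B$; here the hypothesis that the snapshots of $Q'$ are all taken after $t$ is exactly what is needed.
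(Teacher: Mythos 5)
Your proof is correct, but it takes a noticeably longer route than the paper's. The paper's proof exploits the earlier theorem that a finalized set is \emph{strongly} finalized, which already gives $Q_f \vdash \Unreleased(x) \Rightarrow Q_f \vdash \Chain(x)$; it then only has to locate the snapshot witnessing $\Created(x)$. If that witness lies in $\dom(Q_f)$, the strong-finalization property closes the case in one line. If it lies in $\dom(Q')$, the paper derives a contradiction: the creator $C$ would hold an active refob $\Refob y C B$ at its snapshot time $t_C > t$, and closedness of $\dom(Q_f)$ at $t_C$ would force $C \in \dom(Q_f)$, contradicting $\dom(Q') \cap \dom(Q) = \emptyset$ --- so that case is vacuous. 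Your argument instead re-derives the chain from scratch (Chain Lemma on the real configuration at $t_f$, then closedness plus consistency to pull the chain into $Q_f$), which essentially reproves the finalized-implies-strongly-finalized theorem inline; and in your $Q'(D)$ subcase you go on to build a chain where the paper simply observes impossibility. Indeed, your own closedness step already forces $D \in \dom(Q_f)$ there, contradicting disjointness --- so that branch of your proof is doing unnecessary work, though it is not wrong. Citing the strong-finalization theorem would compress your argument to a few lines; what your version buys is self-containedness and an explicit account of where each $\Created$, $\CreatedUsing$, and $\Released$ fact must live, which is the genuinely delicate bookkeeping here.
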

\begin{proof}
    Since $Q_f$ is finalized, $Q_f \vdash \Unreleased(\Refob x A B)$ implies $Q_f \vdash \Chain(\Refob x A B)$. Moreover, $Q_f$ is a consistent closed snapshot at all times $t' \ge t$. Hence, for ant $B \in \dom(Q_f)$, if $\Refob x A B$ is unreleased at time $t'$ then $A \in \dom(Q_f)$.
    
    Now let $Q' \cup Q_f \vdash \Unreleased(\Refob x A B)$. By definition, this means $(Q' \cup Q_f)(B) \not\vdash \Released(x)$ and there exists some $C$ such that $(Q' \cup Q_f)(C) \vdash \Created(x)$. 
    
    If $C \in \dom(Q_f)$ then $Q_f \vdash \Unreleased(x)$ and therefore $Q_f \vdash \Chain(x)$ and therefore $Q' \cup Q_f \vdash \Chain(x)$. 
    
    Now suppose $C \in \dom(Q') \setminus \dom(Q_f)$. This implies $C \ne B$ and therefore $Q'(C) \vdash \CreatedUsing(y,x)$ for some $\Refob y C B$. This implies that $Q'(C) \vdash \Activated(y)$, so $y$ is unreleased at the time of $C$'s snapshot $t_C$. But since $\dom(Q_f)$ is closed at time $t_C$, this implies $C \in \dom(Q_f)$ after all; a contradiction. Hence $C \in \dom(Q_f)$, so $Q' \cup Q_f \vdash \Chain(x)$ by the argument above.
\end{proof}

Hence, if every non-terminated actor eventually takes a snapshot, a snapshot aggregator running the heuristic algorithm will eventually detect all terminated garbage.

\section{Cooperative Garbage Collection}\label{sec:coop}

Up to this point we have assumed the existence of a single snapshot aggregator that eventually receives all snapshots. However, there is no reason this must be a centralized entity. For instance, we can view a multicore actor system as a composition of $n$ actor systems; one for each processor core. It would be natural to have a snapshot aggregator for each system, dedicated to detecting and collecting terminated actors in that system. To detect cycles terminated sets of actors distributed across multiple systems, the aggregators can gossip their local snapshots amongst themselves; eventually every aggregator will obtain enough snapshots to detect all local terminated actors. Moreover, since the actor model is location-transparent, this same strategy extends to distributed multicore systems as well.

More formally, the cooperative garbage collection problem is for two snapshot aggregators, with disjoint snapshot sets $Q_1,Q_2$, to find maximal subsets $\hat Q_1 \subseteq Q_1$, $\hat Q_2 \subseteq Q_2$, such that $\hat Q_1 \cup \hat Q_2$ is finalized. For simplicity, we assume that neither $Q_1$ nor $Q_2$ has any finalized subsets, since such terminated actors could be detected without cooperation. Although we only consider the two-party case here, the discussion naturally generalizes to $n$ snapshot aggregators.

In this formalism, the simple strategy amounts to having the first aggregator send its entire snapshot set $Q_1$ to the second aggregator, and vice versa. This is clearly inefficient for two reasons. Firstly, the two aggregators must perform duplicate work to compute the maximum finalized subset of $Q_1 \cup Q_2$. Secondly, each snapshot set seems to contain significantly more information than is necessary to compute $\hat Q_1,\hat Q_2$; we might expect, for example, that it is only necessary to pass along snapshots from actors at the ``border'' of $Q_1,Q_2$ (e.g.~the receptionists).

In this section, we address both of the above concerns. We begin by defining \emph{potentially finalized} subsets of $Q_1$ and $Q_2$, which omit any snapshots that \emph{a priori} cannot be finalized in $Q_1 \cup Q_2$. Every actor in a potentially finalized set $Q$ depends on one or more of the receptionists in $Q$. Hence, computing $\hat Q_1,\hat Q_2$ reduces to finding the finalized receptionists of $Q_1,Q_2$. With this insight, we then show how to compute \emph{summaries} $\tilde Q_1,\tilde Q_2$ of $Q_1,Q_2$ such that the finalized receptionists in $\tilde Q_1 \cup \tilde Q_2$ coincide with those of $Q_1 \cup Q_2$. Aggregators can therefore simply exchange summaries to find the finalized receptionists. Since summaries can be significantly smaller than the original set of snapshots, this technique reduces the amount of data exchanged and reduces the amount of computation needed to detect finalized receptionists.

\subsection{Potentially finalized sets}

An actor $A$ in $Q_1$ could potentially be finalized in $Q_1 \cup Q_2$ if there exists $Q'$ disjoint from $Q_1$, such that $A$ is finalized in $Q_1 \cup Q'$. This motivates the following definition:
\begin{defi}
    $C$ is \emph{potentially finalized} in $Q$ if, for all $B$ on which $C$ depends, if $Q \vdash \Chain(\Refob x A B)$ then either $Q \vdash \Relevant(x)$ or $A \not\in \dom(Q)$.
\end{defi}
That is, $C$ would be finalized in $Q$ if it did not depend on some actors outside of $Q$. We say that a set $Q$ is potentially finalized if every actor in $Q$ is potentially finalized in $Q$.

Notice that if $C$ is \emph{not} potentially finalized in $Q$, then $C$ depends on some $B$ which has an irrelevant chain in $Q$. Such a $C$ is guaranteed not to be finalized in $Q_1 \cup Q_2$, for any $Q_2$. This means that any actor in $Q_i$ that is not potentially finalized in $Q_i$ can safely be removed from consideration, since it can neither be finalized in $Q_i$ nor $Q_1 \cup Q_2$.

Viewing $\dom(Q)$ as an actor system, we call $B$ a \emph{receptionist} in $Q$ if $B \in \dom(Q)$ and $Q \vdash \Chain(\Refob x A B)$ and $A \not\in \dom(Q)$. The following lemmas show that every $C \in Q_i$ depends on a receptionist.
\begin{lem}\label{lem:local-chain-relevant}
    If $A,B \in \dom(Q_i)$ and $Q_i \vdash \Chain(\Refob x A B)$ then $Q_i \vdash \Relevant(x)$.
\end{lem}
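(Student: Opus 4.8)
The plan is to read this off directly from the standing hypothesis that, at this point in the development, $Q_i$ is a \emph{potentially finalized} set of snapshots: we have already discarded every actor that is not potentially finalized in $Q_i$, so each actor of $\dom(Q_i)$—in particular $B$—is potentially finalized in $Q_i$. The only structural fact needed is that the ``depends on'' relation is reflexive, so $B$ depends on itself in $Q_i$.

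Given that, I would instantiate the definition of ``$B$ is potentially finalized in $Q_i$'' with the actor $B$ itself (on which $B$ depends) and the hypothesized chain $Q_i \vdash \Chain(\Refob{x}{A}{B})$. The definition then yields $Q_i \vdash \Relevant(x)$ or $A \notin \dom(Q_i)$. The hypothesis $A \in \dom(Q_i)$ rules out the second disjunct, so $Q_i \vdash \Relevant(x)$, which is the claim. (Equivalently, one can run this as a proof by contradiction: if $Q_i \vdash \Chain(\Refob{x}{A}{B})$ with $A,B \in \dom(Q_i)$ but $Q_i \not\vdash \Relevant(x)$, then $B$ is not potentially finalized in $Q_i$, contradicting the hypothesis on $Q_i$.)

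No induction or appeal to the Chain Lemma is needed here; the substantive work has already been absorbed into the definition of potential finalization. The one point that needs care is making the standing assumption on $Q_i$ explicit—without it the statement is false, since an arbitrary (for instance, inconsistent) set of snapshots can have $A,B \in \dom(Q_i)$ and $Q_i \vdash \Chain(x)$ while the send- and receive-counts recorded for $x$ disagree, so that $Q_i \not\vdash \Relevant(x)$. I would also flag, looking ahead, that this lemma is exactly what is used for the companion claim that every actor of $Q_i$ depends on a receptionist: its contrapositive states that whenever $B \in \dom(Q_i)$ has an irrelevant chain $Q_i \vdash \Chain(\Refob{x}{A}{B})$, the source $A$ must lie outside $\dom(Q_i)$, i.e.\ $B$ is a receptionist in $Q_i$.
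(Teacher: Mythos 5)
Your proof is correct and takes exactly the paper's approach: the paper's entire proof reads ``Immediate from the assumption that $Q_i$ is potentially finalized,'' and your unfolding of that definition---using reflexivity of the depends-on relation to apply it to $B$ itself, then discharging the disjunct $A \notin \dom(Q_i)$ via the hypothesis---is precisely the intended argument. Your observation that the standing assumption on $Q_i$ must be made explicit is a fair point about the presentation, not a gap.
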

\begin{proof}
    Immediate from the assumption that $Q_i$ is potentially finalized.
\end{proof}
\begin{lem}\label{lem:all-depend-on-receptionists}
    Every $C \in \dom(Q_i)$ depends on some $B \in \dom(Q_i)$ where $A \not\in \dom(Q_i)$ and $Q_i \vdash \Chain(\Refob x A B)$.
\end{lem}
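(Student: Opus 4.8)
The plan is to argue by contradiction: if some $C \in \dom(Q_i)$ depended only on non-receptionists, we could carve a nonempty finalized subset out of $Q_i$, violating the standing assumption that $Q_i$ has no finalized subsets. Concretely, fix $C \in \dom(Q_i)$, let $D \subseteq \dom(Q_i)$ be the set of all actors on which $C$ depends in $Q_i$ (so $C \in D$ by reflexivity of ``depends on''), and suppose for contradiction that no element of $D$ is a receptionist in $Q_i$---that is, whenever $B \in D$ and $Q_i \vdash \Chain(\Refob x A B)$, we have $A \in \dom(Q_i)$. Let $Q_f$ be the set of snapshots in $Q_i$ taken by the actors of $D$.

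First I would verify condition (1) of \Cref{defn:weakly-finalized} for $Q_f$. Suppose $B \in \dom(Q_f)$ and $Q_f \vdash \Chain(\Refob x A B)$. Since every fact witnessing such a chain lives in the knowledge set of $B$ or of one of the chain's owners---all of which are then in $\dom(Q_f)$---this is also a chain in $Q_i$, so $Q_i \vdash \Chain(\Refob x A B)$ and hence $A \in \dom(Q_i)$ by our assumption. Moreover $B$ depends on $A$ (via the length-one sequence $(\Refob x A B)$), and since $C$ depends on $B$ and ``depends on'' is transitive, $C$ depends on $A$; thus $A \in D = \dom(Q_f)$.

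Next I would verify condition (2). With $B \in \dom(Q_f)$ and $Q_f \vdash \Chain(\Refob x A B)$ as above, we now know $A,B \in \dom(Q_i)$ and $Q_i \vdash \Chain(\Refob x A B)$, so \Cref{lem:local-chain-relevant} gives $Q_i \vdash \Relevant(\Refob x A B)$. Since the facts $\Activated(x)$, $\SentCount(x,n)$, $\RecvCount(x,n)$ witnessing $\Relevant$ reside only in the snapshots of $A$ and $B$, both of which are retained in $Q_f$, it follows that $Q_f \vdash \Relevant(\Refob x A B)$ too---this is the same localization argument used in the proof of \Cref{lem:weakly-finalized-closed-union}. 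Hence $Q_f$ is finalized, and it is nonempty since $C \in \dom(Q_f)$, contradicting the hypothesis that $Q_i$ has no finalized subsets. Therefore $D$ must contain a receptionist $B$, which is exactly the claim: $C$ depends on some $B \in \dom(Q_i)$ with $Q_i \vdash \Chain(\Refob x A B)$ and $A \notin \dom(Q_i)$.

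I do not expect a serious obstacle; the content lies in choosing the right witness set $Q_f$ (the ``dependency cone'' of $C$) and in two small monotonicity observations---that a chain within a subset is a chain in the superset, and that $\Relevant$ for a single refob depends only on its owner's and target's snapshots. The latter needs a little care because $(\vdash)$ has default rules for null message counts, but since $\Relevant$ refers only to $\Activated$, $\SentCount$, and $\RecvCount$ for the single refob $\Refob x A B$---all stored exclusively in $A$'s and $B$'s knowledge sets---the same reasoning already used for \Cref{lem:weakly-finalized-closed-union} applies verbatim.
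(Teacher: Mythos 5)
Your proof is correct and matches the paper's intent: the paper dismisses this lemma as ``immediate from the assumption that $Q_i$ has no finalized subsets,'' and your contradiction argument via the dependency cone of $C$ is precisely the elaboration of that one-liner (together with the standing assumption that $Q_i$ is potentially finalized, which you invoke through \Cref{lem:local-chain-relevant}). The only stylistic remark is that the dependency-cone construction and the two localization observations you verify by hand are already packaged in \Cref{sec:chain-algorithm} as the lemma that $C$ is finalized in $Q$ if and only if $C$ lies in a finalized subset of $Q$, so you could have cited that instead of re-proving it.
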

\begin{proof}
    Immediate from the assumption that $Q_i$ has no finalized subsets.
\end{proof}

Moreover, $C \in \dom(Q_i)$ is finalized in $Q_1 \cup Q_2$ if the receptionists on which it depends are finalized in $Q_1 \cup Q_2$:

\begin{lem}
    Let $A \in \dom(Q_1)$ and $B \in \dom(Q_2)$, without loss of generality. If $Q_1 \cup Q_2 \vdash \Chain(\Refob x A B)$, then $B$ is a receptionist in $Q_2$.
\end{lem}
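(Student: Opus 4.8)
The plan is to take a chain in $Q_1 \cup Q_2$ that witnesses the hypothesis, walk it \emph{outward} from $B$, and stop at the first actor that lies outside $\dom(Q_2)$; the prefix of the chain up to that actor will be a chain living entirely inside $Q_2$, which exhibits $B$ as a receptionist of $Q_2$.

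Concretely, suppose $Q_1 \cup Q_2 \vdash \Chain(\Refob x A B)$ and fix a witnessing sequence of unreleased refobs $(\Refob{x_1}{A_1}{B}),\dots,(\Refob{x_n}{A_n}{B})$, so that $Q_1 \cup Q_2 \vdash \Created(x_1)$, $Q_1 \cup Q_2 \vdash \CreatedUsing(x_j,x_{j+1})$ for $j<n$, $Q_1 \cup Q_2 \not\vdash \Released(x_j)$ for all $j \le n$, and $A_n = A$, $x_n = x$. Since $\dom(Q_1)\cap\dom(Q_2)=\emptyset$ and $A = A_n \in \dom(Q_1)$, we have $A_n \notin \dom(Q_2)$, so the set $\{\,j : A_j \notin \dom(Q_2)\,\}$ is nonempty; let $i$ be its least element. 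I would then argue that the prefix $(\Refob{x_1}{A_1}{B}),\dots,(\Refob{x_i}{A_i}{B})$ witnesses $Q_2 \vdash \Chain(\Refob{x_i}{A_i}{B})$. The negative conditions $Q_2 \not\vdash \Released(x_j)$ for $j \le i$ are immediate, since $Q_2$'s snapshots form a subset of those in $Q_1 \cup Q_2$ and derivability is monotone. For the positive conditions the key is locality of knowledge: the fact $\CreatedUsing(\Refob{x_j}{A_j}{B},\Refob{x_{j+1}}{A_{j+1}}{B})$ can only be held by its owner $A_j$, and $A_j \in \dom(Q_2)$ for every $j<i$ by minimality of $i$, so $Q_2 \vdash \CreatedUsing(x_j,x_{j+1})$; and the anchor fact $\Created(\Refob{x_1}{A_1}{B})$ is a fact about the target $B$, which lies in $\dom(Q_2)$, so $Q_2 \vdash \Created(x_1)$. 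Once this prefix chain is in hand, $A_i \notin \dom(Q_2)$ while $B \in \dom(Q_2)$, so $B$ meets the definition of a receptionist in $Q_2$.

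The step requiring the most care — and the one I would single out as the main obstacle — is showing that the \emph{derivable} facts we need actually localize from $Q_1 \cup Q_2$ down to $Q_2$. The $\Released$ and $\CreatedUsing$ cases are clean, but $Q_1 \cup Q_2 \vdash \Created(x_1)$ could a priori be obtained not from $B$'s snapshot at all, but via the derivation rule $\CreatedUsing(x_0,x_1)\vdash\Created(x_1)$ applied to a $\CreatedUsing$ fact residing at an actor in $\dom(Q_1)$. To close this gap I would either (i) preprocess the witnessing chain so that its anchor is a genuine $\Created$ fact held by $B$, extending the chain backward along such $\CreatedUsing$ facts and using the release-protocol invariants (\Cref{lem:release-is-final} and \Cref{lem:facts-remain-until-cancelled}) together with $Q_1 \cup Q_2 \not\vdash \Released(x_1)$ to see that the backward extension stays valid and must terminate at $B$; or (ii) observe that when the anchor is supplied by a $Q_1$-resident $\CreatedUsing$ fact, that fact's owner is itself a potential inverse acquaintance of $B$ lying outside $\dom(Q_2)$, and reroute the receptionist witness through it. Apart from this bookkeeping around the $(\vdash)$ rules, the argument is routine.
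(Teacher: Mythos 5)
Your argument is the same as the paper's: walk the witnessing chain outward from $B$, cut at the first owner lying outside $\dom(Q_2)$, and observe that the prefix is a chain in $Q_2$ whose final refob is owned by an actor not in $\dom(Q_2)$, so $B$ is a receptionist in $Q_2$. The subtlety you flag---that the anchor $Q_1 \cup Q_2 \vdash \Created(x_1)$ might be derived from a $\CreatedUsing$ fact held by an actor in $\dom(Q_1)$ rather than localizing to $Q_2$---is a genuine gap that the paper's own proof silently skips over (it simply asserts $Q_2 \vdash \Chain(x_m)$), so your extra care here goes beyond, rather than falls short of, the published argument.
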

\begin{proof}
    Let $(\Refob{x_1}{A_1}{B}),\dots,(\Refob{x_n}{A_n}{B})$ be the chain from $B$ to $x$ in $Q_1 \cup Q_2$. Since $A_n = A \in \dom(Q_1)$, there must be some $m \le n$ such that $\forall i < m,\ A_i \in \dom(Q_2)$ and $A_m \in \dom(Q_1)$. Hence $Q_2 \vdash \Chain(x_m)$ and $A_m \not\in \dom(Q_2)$, so $B$ is a receptionist in $Q_2$.
\end{proof}

\begin{lem}
    If $C \in \dom(Q_i)$ depends on $A$ in $Q_1 \cup Q_2$, then either \emph{(1)} $C$ depends on $A$ in $Q_i$, or \emph{(2)} $C$ depends on a receptionist $B$ in $Q_i$ and $B$ depends on $A$ in $Q_1 \cup Q_2$.
\end{lem}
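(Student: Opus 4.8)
Without loss of generality take $i = 1$, so $C \in \dom(Q_1)$. The plan is to induct on the length $n$ of a sequence of actors $A = A_1, \dots, A_n = C$ with refobs $\Refob{x_j}{A_j}{A_{j+1}}$ satisfying $Q_1 \cup Q_2 \vdash \Chain(\Refob{x_j}{A_j}{A_{j+1}})$ for each $j < n$, which witnesses that $C$ depends on $A$ in $Q_1 \cup Q_2$. The base case $n = 1$ is immediate: then $A = C$, and $C$ depends on $A = C$ in $Q_1$ by reflexivity, so case~(1) holds. For the step $n \ge 2$, the key move is to peel off the final refob $\Refob{x_{n-1}}{A_{n-1}}{C}$: I would fix a chain $(\Refob{y_1}{B_1}{C}), \dots, (\Refob{y_k}{B_k}{C})$ witnessing $Q_1 \cup Q_2 \vdash \Chain(\Refob{x_{n-1}}{A_{n-1}}{C})$, with $B_k = A_{n-1}$, $y_k = x_{n-1}$, the fact $\Created(y_1)$ held by $C$, each $B_l$ (for $l < k$) holding $\CreatedUsing(y_l, y_{l+1})$, and $Q_1 \cup Q_2 \not\vdash \Released(y_l)$ for all $l \le k$. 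The argument then splits on whether this chain stays inside $\dom(Q_1)$.

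In the first case every $B_l$ lies in $\dom(Q_1)$ (so in particular $A_{n-1} = B_k \in \dom(Q_1)$). Since $C$ and each $B_l$ carry the same knowledge set in $Q_1$ as in $Q_1 \cup Q_2$, all of the facts $\Created(y_1)$ and $\CreatedUsing(y_l, y_{l+1})$ are already derivable from $Q_1$ alone, while $Q_1 \not\vdash \Released(y_l)$ follows because derivability of $\Released$ facts is monotone under passing to a subset of snapshots. Hence $Q_1 \vdash \Chain(\Refob{x_{n-1}}{A_{n-1}}{C})$, i.e.\ $C$ depends on $A_{n-1}$ in $Q_1$. Now $A_{n-1} \in \dom(Q_1)$ depends on $A$ in $Q_1 \cup Q_2$ via the shorter sequence $A_1, \dots, A_{n-1}$, so the induction hypothesis yields either that $A_{n-1}$ depends on $A$ in $Q_1$, or that $A_{n-1}$ depends on a receptionist $B$ in $Q_1$ with $B$ depending on $A$ in $Q_1 \cup Q_2$. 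Composing each possibility with ``$C$ depends on $A_{n-1}$ in $Q_1$'' by transitivity of the ``depends on'' relation in $Q_1$ gives case~(1) and case~(2) respectively.

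In the second case some $B_l \notin \dom(Q_1)$; let $l^\ast$ be the least such index, so $B_1, \dots, B_{l^\ast-1} \in \dom(Q_1)$ but $B_{l^\ast} \notin \dom(Q_1)$. The prefix $(\Refob{y_1}{B_1}{C}), \dots, (\Refob{y_{l^\ast}}{B_{l^\ast}}{C})$ then witnesses $Q_1 \vdash \Chain(\Refob{y_{l^\ast}}{B_{l^\ast}}{C})$ by the same restriction argument: $\Created(y_1)$ is held by $C \in \dom(Q_1)$, each $\CreatedUsing(y_l, y_{l+1})$ with $l < l^\ast$ is held by $B_l \in \dom(Q_1)$, and no $\Released$ fact is derivable. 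Since $C \in \dom(Q_1)$ and $B_{l^\ast} \notin \dom(Q_1)$, this exhibits $C$ itself as a receptionist in $Q_1$; so case~(2) holds with $B = C$, since $C$ depends on itself in $Q_1$ by reflexivity and depends on $A$ in $Q_1 \cup Q_2$ by hypothesis.

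I expect the only real obstacle to be making the ``restrict a chain to $Q_1$'' step airtight. It rests on the structural facts that $\Created(\Refob z B C)$ can only ever occur in the target $C$'s knowledge set and $\CreatedUsing(\Refob y A C, \Refob z B C)$ only in the creator $A$'s, so that knowing the holder lies in $\dom(Q_1)$ suffices to re-derive the fact from $Q_1$; on the monotonicity of $\vdash$ for $\Released$ facts---in contrast to $\SentCount$ and $\RecvCount$, for which $\vdash$ is \emph{not} monotone, though those counts play no role here; and, to remain consistent with \Cref{defn:chain}, on taking the witnessing chain so that $\Created(y_1)$ is literally held by $C$ rather than merely derived. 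The inductive skeleton itself is routine.
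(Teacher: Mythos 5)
Your proof is correct and arrives at the same conclusion, but by a genuinely different and more careful route than the paper's. The paper argues directly on the top-level sequence $A = A_1,\dots,A_n = C$: if every $A_j$ lies in $\dom(Q_i)$ it asserts case~(1) outright, and otherwise it takes the greatest $m$ with $A_m \notin \dom(Q_i)$ and names $A_{m+1}$ as the receptionist, with $C$ depending on $A_{m+1}$ in $Q_i$ via the suffix. That argument tacitly assumes each witness of $Q_1 \cup Q_2 \vdash \Chain(\Refob{x_j}{A_j}{A_{j+1}})$ restricts to $Q_i$ whenever its endpoints lie in $\dom(Q_i)$ --- but such a chain may pass through intermediate owners $B_l \notin \dom(Q_i)$, since a chain to $\Refob{x}{A}{B}$ is anchored at the target and need not stay near the sequence. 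Your induction confronts exactly this: when the chain for the final link stays inside $\dom(Q_1)$ you restrict it (using that $\CreatedUsing(y_l,y_{l+1})$ is held by $B_l$ and that non-derivability of $\Released$ passes to subsets) and recurse; when it escapes, the prefix up to the first external owner exhibits $C$ itself as a receptionist, and case~(2) holds reflexively. So your proof patches a step the paper's two-line argument glosses over, at the cost of producing a different (but equally valid) receptionist. The residual wrinkle you flag --- that the definition of $Q \vdash \Chain$ only demands $Q \vdash \Created(y_1)$, which could be derived from a $\CreatedUsing$ fact held by a third actor rather than held by the target --- is a looseness in the paper's definition that afflicts its own proof equally; under the intended reading, in which the first link's $\Created$ fact sits in the target's snapshot, your argument goes through.
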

\begin{proof}
    Let $(\Refob{x_1}{A_1}{A_2})$, $\dots$, $(\Refob{x_n}{A_{n-1}}{A_n})$ be the sequence of refobs from $A$ to $C$. If $\forall i \le n,\ A_i \in \dom(Q_i)$, then $C$ depends on $A$ in $Q_i$. Otherwise, let $m < n$ be the greatest index such that $A_m \not\in \dom(Q_i)$; then $A_{m+1} \in \dom(Q_i)$ is a receptionist that depends on $A$ and $C$ depends on $A_{m+1}$ in $Q_i$.
\end{proof}

\begin{thm}
    A non-receptionist $B \in \dom(Q_i)$ is finalized in $Q_1 \cup Q_2$ if and only if every receptionist on which $B$ depends in $Q_i$ is finalized in $Q_1 \cup Q_2$.
\end{thm}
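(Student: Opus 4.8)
The plan is to prove the two implications separately, working almost entirely with the ``depends on'' relation and the two structural facts just established: the \emph{decomposition lemma} (if $C \in \dom(Q_i)$ depends on $A$ in $Q_1 \cup Q_2$, then either $C$ already depends on $A$ in $Q_i$, or $C$ depends in $Q_i$ on a receptionist $R$ that in turn depends on $A$ in $Q_1 \cup Q_2$) and the \emph{border lemma} (if $Q_1 \cup Q_2 \vdash \Chain(\Refob x A B)$ with the owner in one part and the target $B$ in the other, then $B$ is a receptionist of its part). A standing observation I would use throughout is that ``$C$ depends on $A$ in $Q_i$'' implies ``$C$ depends on $A$ in $Q_1 \cup Q_2$'': every fact witnessing a chain $Q_i \vdash \Chain(\cdot)$ is stored by an actor of $\dom(Q_i)$, and since $\dom(Q_1)\cap\dom(Q_2)=\emptyset$, adjoining the snapshots of the other part neither deletes such a fact nor contributes a conflicting $\Released$ fact about the same target. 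I will also call an actor $D$ \emph{clean} when $Q_1 \cup Q_2 \vdash \Chain(\Refob x A D)$ implies $Q_1 \cup Q_2 \vdash \Relevant(x)$, so that $C$ is finalized in $Q_1 \cup Q_2$ exactly when every actor on which $C$ depends in $Q_1 \cup Q_2$ is clean.

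For the forward direction I would take $B$ finalized in $Q_1 \cup Q_2$ and a receptionist $R$ on which $B$ depends in $Q_i$; since $B$ then depends on $R$ in $Q_1 \cup Q_2$, transitivity makes every actor $R$ depends on in $Q_1 \cup Q_2$ one that $B$ depends on as well, hence clean, so $R$ is finalized.

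For the converse I would assume every receptionist on which $B$ depends in $Q_i$ is finalized, fix an arbitrary $D$ on which $B$ depends in $Q_1 \cup Q_2$, and show $D$ is clean. The decomposition lemma splits this into the case where $B$ depends on a receptionist $R$ in $Q_i$ with $R$ depending on $D$ in $Q_1 \cup Q_2$ --- then $R$ is finalized by hypothesis and $D$ is clean --- and the case where $B$ depends on $D$ already in $Q_i$. In the latter case, if $D \notin \dom(Q_i)$ or $D$ is a receptionist of $Q_i$, the dependency path from $D$ to $B$ passes through a receptionist $R'$ of $Q_i$ that depends on $D$, reducing to the previous case. The one genuinely new case is $D \in \dom(Q_i)$ with $D$ not a receptionist of $Q_i$: here, for any $\Refob x A D$ with $Q_1 \cup Q_2 \vdash \Chain(x)$, the border lemma rules out any chain to a refob targeting $D$ that uses an owner outside $\dom(Q_i)$, so tracing the $\CreatedUsing$ links back to a $\Created$ fact (necessarily at $D$) yields a chain witness for $x$ whose facts all lie in $Q_i$; then $Q_i \vdash \Chain(x)$, Lemma~\ref{lem:local-chain-relevant} gives $Q_i \vdash \Relevant(x)$, and hence $Q_1 \cup Q_2 \vdash \Relevant(x)$.

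The hard part will be exactly this last case --- converting $Q_1 \cup Q_2 \vdash \Chain(x)$ to $Q_i \vdash \Chain(x)$ when $D$ is a non-receptionist of $Q_i$. The subtlety is that the $\Created$ base of a chain need not be held by the target $D$ but may be derived from a $\CreatedUsing$ fact held by the creator, which \emph{a priori} could sit in the other part; one must walk that $\CreatedUsing$ chain backward until it bottoms out at a true $\Created$ fact at $D$, invoking the border lemma on each prefix to keep all owners inside $\dom(Q_i)$, and appeal to disjointness of $Q_1$ and $Q_2$ to ensure no $\Released$ fact is gained or lost in the transfer. The rest is routine bookkeeping with transitivity of ``depends on'' and the definitions of \emph{finalized} and \emph{receptionist}.
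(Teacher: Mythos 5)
Your proof follows essentially the same route as the paper's: the forward direction via transitivity of dependency (finalizedness propagates down the ``depends on'' relation, and dependency in $Q_i$ implies dependency in $Q_1 \cup Q_2$), and the converse via the same decomposition into the two cases---dependency within $Q_i$ versus dependency through a receptionist of $Q_i$---with the border lemma and Lemma~\ref{lem:local-chain-relevant} disposing of the non-receptionist subcase. The only difference is that you explicitly justify restricting a chain in $Q_1 \cup Q_2$ to a chain in $Q_i$ when the target is a non-receptionist of $Q_i$ (by applying the border lemma to each prefix of the chain), a step the paper's proof leaves implicit.
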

\begin{proof}
    If $B \in \dom(Q_i)$ is finalized in $Q_1 \cup Q_2$ then every actor on which it depends must be finalized in $Q_1 \cup Q_2$. Since every actor on which $B$ depends in $Q_i$ is also depended upon in $Q_1 \cup Q_2$, the receptionists in particular must be finalized.
    
    Conversely, let $C \in \dom(Q_i)$ and let every receptionist on which $C$ depends in $Q_i$ be finalized in $Q_1 \cup Q_2$. We show that, if $C$ depends on $B$ in $Q_1 \cup Q_2$ and $Q_1 \cup Q_2 \vdash \Chain(\Refob x A B)$, then $Q_1 \cup Q_2 \vdash \Relevant(x)$. By the preceding lemma, there are two cases.
    
    \emph{Case 1.} $B\in \dom(Q_i)$ and $C$ depends on $B$ in $Q_i$. If $B$ is a receptionist of $Q_i$ then it is finalized by hypothesis; this implies $Q_1 \cup Q_2 \vdash \Relevant(x)$ \emph{a fortiori}. Otherwise, $A$ must be in $Q_i$, so $Q_i \vdash \Relevant(x)$ by Lemma~\ref{lem:local-chain-relevant}.
    
    \emph{Case 2.} $C$ depends on a receptionist $B'$ in $Q_i$ and $B'$ that depends on $B$ in $Q_1 \cup Q_2$. Then $B$ must be finalized because $B'$ is finalized.
\end{proof}
\begin{cor}
    A receptionist $B \in \dom(Q_2)$ is finalized in $Q_1 \cup Q_2$ if and only if $Q_1 \cup Q_2 \vdash \Chain(\Refob x A B)$ implies $Q_1 \cup Q_2 \vdash \Relevant(x)$ and $A$ is finalized.
\end{cor}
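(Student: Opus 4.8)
The plan is to prove the corollary by directly unfolding the definition of ``$B$ is finalized in $Q_1 \cup Q_2$'', using that the ``depends on'' relation is reflexive and transitive and that a single chain $Q_1 \cup Q_2 \vdash \Chain(\Refob x A B)$ already witnesses that $B$ depends on $A$. Write $Q = Q_1 \cup Q_2$. The statement I would actually establish is the following general observation, valid for any $B \in \dom(Q)$ (the hypothesis that $B$ is a receptionist of $Q_2$ is not needed for the equivalence itself; it is what ties this characterization back to the other aggregator, since it guarantees at least one chain $\Refob x A B$ with $A \notin \dom(Q_2)$, and it is the companion to the preceding theorem, which handled non-receptionists): \emph{$B$ is finalized in $Q$ if and only if, for every $\Refob x A B$ with $Q \vdash \Chain(x)$, both $Q \vdash \Relevant(x)$ and $A$ is finalized in $Q$.} The corollary is then the instance of this observation written out for the named refob $\Refob x A B$.

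For the forward direction, assume $B$ is finalized in $Q$ and take $\Refob x A B$ with $Q \vdash \Chain(x)$. Since $B$ depends on itself, the definition of finalized applied with the target $B$ yields $Q \vdash \Relevant(x)$. Moreover, the one-refob sequence consisting of $\Refob x A B$ witnesses that $B$ depends on $A$ in $Q$. To see that $A$ is finalized in $Q$, take any $B''$ on which $A$ depends and any $\Refob{x''}{A''}{B''}$ with $Q \vdash \Chain(x'')$; by transitivity $B$ depends on $B''$, so finality of $B$ gives $Q \vdash \Relevant(x'')$, which is exactly what is required of $A$.

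For the converse, assume every chain into $B$ is relevant with a finalized owner, let $B'$ be an actor on which $B$ depends, and let $Q \vdash \Chain(\Refob{x'}{A'}{B'})$; I must show $Q \vdash \Relevant(x')$. If $B' = B$ this is immediate from the hypothesis. Otherwise there is a sequence of refobs $\Refob{x_1}{A_1}{A_2}, \dots, \Refob{x_k}{A_k}{A_{k+1}}$ with $A_1 = B'$, $A_{k+1} = B$, $k \ge 1$, and $Q \vdash \Chain(\Refob{x_i}{A_i}{A_{i+1}})$ for each $i$. The last link $\Refob{x_k}{A_k}{B}$ is a chain into $B$, so by hypothesis $A_k$ is finalized in $Q$; the truncated prefix $\Refob{x_1}{A_1}{A_2}, \dots, \Refob{x_{k-1}}{A_{k-1}}{A_k}$ (empty when $k = 1$, in which case $A_k = B'$ and the dependence is reflexive) shows $A_k$ depends on $B'$ in $Q$. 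Since $A_k$ is finalized and $Q \vdash \Chain(\Refob{x'}{A'}{B'})$, we conclude $Q \vdash \Relevant(x')$, as needed.

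I expect the only delicate point to be the bookkeeping in the converse: decomposing the dependency witness so that its final link is recognized as a direct chain into $B$ (triggering the hypothesis on that owner) while its prefix records that the intermediate owner still depends on $B'$. A secondary remark worth including is that ``$A$ is finalized'' remains meaningful even if $A \notin \dom(Q)$, since the definition refers only to the ``depends on'' relation; in that degenerate case $Q \not\vdash \Relevant(\Refob x A B)$, so both sides of the equivalence fail together, consistent with the statement.
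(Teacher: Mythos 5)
Your proof is correct. The paper states this corollary without proof (as a companion to the preceding theorem on non-receptionists), and your argument is the natural one: unfold the definition of ``finalized'' using reflexivity of ``depends on'' to get $\Relevant(x)$ and transitivity to get that each owner $A$ with $Q_1 \cup Q_2 \vdash \Chain(\Refob x A B)$ is itself finalized; for the converse, decompose any dependency witness so that its last link is a chain into $B$, whose owner is finalized by hypothesis and depends on the intermediate target via the prefix. Your side remarks---that the receptionist hypothesis is not needed for the equivalence itself, and that the degenerate case $A \notin \dom(Q_1 \cup Q_2)$ makes both sides fail together---are accurate and consistent with the paper's definitions.
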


This formalizes our intuition that snapshots from ``internal actors'' of $Q_1$ and $Q_2$ are unnecessary. It suffices to combine the snapshots of actors at the ``boundary'' (e.g.~receptionists) with dependency information (i.e.~which ``boundary'' actors depend on which receptionists).

\subsection{Summaries}\label{sec:summary}

Based on the insight from the preceding section, our approach is to compute, for each $Q_i$, a smaller set of snapshots $\tilde Q_i$ called its \emph{summary}. These summaries are designed so that (1) all receptionists in $Q_i$ have snapshots in $\tilde Q_i$, and (2) a receptionist is finalized in $\tilde Q_1 \cup \tilde Q_2$ if and only if it is finalized in $Q_1 \cup Q_2$. We achieve this by removing all facts about the ``internal structure'' of each $Q_i$ and then adding new refobs to encode the dependency information of $Q_i$.

\begin{defi}
    The \emph{summary} $\tilde Q$ of $Q$ is the least set of snapshots satisfying the following properties:

For any $\Refob x A B$ where $A \in \dom(Q)$ and either $B$ is a receptionist or $B \not\in \dom(Q)$:
\begin{itemize}
    \item If $Q(A) \vdash \Activated(x)$ then $\tilde Q(A) \vdash \Activated(x)$;
    \item If $Q(A) \vdash \CreatedUsing(x,y)$ for some $y$ then $\tilde Q(A) \vdash \CreatedUsing(x,y)$.
    \item If $Q(A) \vdash \SentCount(x,n)$ then $\tilde Q(A) \vdash \SentCount(x,n)$;
\end{itemize}
For any $\Refob x A B$ where $B$ is a receptionist:
\begin{itemize}
    \item If $Q(B) \vdash \Created(x)$ then $\tilde Q(B) \vdash \Created(x)$;
    \item If $Q(B) \vdash \Released(x)$ then $\tilde Q(B) \vdash \Released(x)$;
    \item If $Q(B) \vdash \RecvCount(x,n)$ then $\tilde Q(B) \vdash \RecvCount(x,n)$;
\end{itemize}
If $A,B \in \dom(\tilde Q)$ and $A$ is a receptionist and $B$ depends on $A$, then  $\tilde Q(A) \vdash \Activated(x)$ and $\tilde Q(B) \vdash \Created(x)$ for some new, ``fake'' refob $\Refob x A B$ with a fresh token $x$.
\end{defi}

By this definition, both $Q_1 \cup Q_2$ and $\tilde Q_1 \cup \tilde Q_2$ agree about refobs $\Refob x A B$ where the owner is in $Q_1$ (resp.~$Q_2$) and the target is in $Q_2$ (resp.~$Q_1$):

\begin{lem}\label{lem:coop-chains-preserved}
    Let $A \in \tilde Q_1$ and $B \in \tilde Q_2$. For any $\Refob x A B$, if $Q_1 \cup Q_2 \vdash \Chain(x)$ then $\tilde Q_1 \cup \tilde Q_2 \vdash \Chain(x)$.
\end{lem}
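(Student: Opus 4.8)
The plan is to show that the \emph{very same} sequence of refobs that witnesses $Q_1 \cup Q_2 \vdash \Chain(\Refob x A B)$ already witnesses $\tilde Q_1 \cup \tilde Q_2 \vdash \Chain(\Refob x A B)$; the only real work is checking that every fact appealed to in the $\Chain$ predicate is retained by the summary operation. So fix a chain $(\Refob{x_1}{A_1}{B}),\dots,(\Refob{x_n}{A_n}{B})$ in $Q_1\cup Q_2$ to $\Refob x A B$. Since $\dom(\tilde Q_i) \subseteq \dom(Q_i)$ and $Q_1,Q_2$ are disjoint, we have $A \in \dom(Q_1)$, $B \in \dom(Q_2)$, and in particular $A \notin \dom(Q_2)$ and $B \notin \dom(Q_1)$. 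A provenance argument from the transition rules shows that a fact $\CreatedUsing(x_i,x_{i+1})$ can only reside in the knowledge set of the owner $A_i$ of $x_i$; hence $A_i \in \dom(Q_1)\cup\dom(Q_2)$ for each $i < n$, and $A_n = A \in \dom(Q_1)$.

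The first step is to observe that $B$ is a receptionist in $Q_2$. Let $m \le n$ be the least index with $A_m \notin \dom(Q_2)$; such an $m$ exists because $A_n = A \notin \dom(Q_2)$. Then $(\Refob{x_1}{A_1}{B}),\dots,(\Refob{x_m}{A_m}{B})$ is a chain in $Q_2$: the facts $\Created(x_1)$ and $\lnot\Released(x_i)$ concern only $B$'s knowledge set and $B \in \dom(Q_2)$, while for each $i < m$ we have $A_i \in \dom(Q_2)$, so $\CreatedUsing(x_i,x_{i+1}) \in Q_2(A_i)$. Since $A_m \notin \dom(Q_2)$, this yields $Q_2 \vdash \Chain(\Refob{x_m}{A_m}{B})$ with an owner outside $\dom(Q_2)$ --- which is precisely the definition of $B$ being a receptionist in $Q_2$.

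The second step verifies the three clauses of $\tilde Q_1 \cup \tilde Q_2 \vdash \Chain(\Refob x A B)$ for the same sequence $(\Refob{x_1}{A_1}{B}),\dots,(\Refob{x_n}{A_n}{B})$. Every refob relevant to the $\Chain$ predicate targets $B$: this holds for each $x_i$, and for any $\Refob w D B$ whose fact $\CreatedUsing(w,x_1)$ might be the reason $Q_1\cup Q_2 \vdash \Created(x_1)$ holds (again using a provenance argument, now for $\Created$ facts). For each such refob, whichever of $Q_1,Q_2$ contains its owner, the target $B$ is either a receptionist in $Q_2$ (just shown) or external to $Q_1$ (since $B \in \dom(Q_2)$), so the corresponding clause of the summary definition retains that owner's $\CreatedUsing$ fact (resp.\ $B$'s $\Created$ fact). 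Hence $\tilde Q_1\cup\tilde Q_2 \vdash \Created(x_1)$ and $\tilde Q_1\cup\tilde Q_2 \vdash \CreatedUsing(x_i,x_{i+1})$ for all $i < n$. For the negative conditions: $\Released(x_i)$ can only lie in $B$'s knowledge set, and the summary clause for the receptionist $B$ puts $\Released(x_i)$ into $\tilde Q_2(B)$ only if it already lies in $Q_2(B)$, which it does not, since the original chain requires $Q_1\cup Q_2 \not\vdash \Released(x_i)$; moreover the ``fake'' refobs added by the summary carry fresh tokens, so none of them is $x_i$. Therefore $\tilde Q_1\cup\tilde Q_2 \not\vdash \Released(x_i)$, and since $A_n = A$ and $x_n = x$ we conclude $\tilde Q_1\cup\tilde Q_2\vdash\Chain(\Refob x A B)$.

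The only delicate point I anticipate is this provenance bookkeeping: one must argue that each fact named in the $\Chain$ predicate lives in exactly one actor's knowledge set --- the owner of the relevant token, or the common target $B$ --- so that one can pinpoint which of $Q_1$ and $Q_2$ it comes from and then invoke the matching clause of the summary definition. Everything else is a direct unfolding of the definitions of $\Chain$, of ``receptionist in $Q$'', and of the summary $\tilde Q$.
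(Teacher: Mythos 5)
Your proof is correct and follows essentially the same route as the paper's: fix the chain witnessing $Q_1 \cup Q_2 \vdash \Chain(x)$, observe that $B$ is a receptionist in $Q_2$ (the paper cites its earlier lemma for this where you re-derive it inline), and check that each $\Created$/$\CreatedUsing$ fact along the chain is retained by the appropriate clause of the summary definition via the same case split on whether $A_i \in \dom(Q_1)$ or $A_i \in \dom(Q_2)$. Your additional bookkeeping---verifying the $\lnot\Released$ conditions survive summarization, noting the fake refobs carry fresh tokens, and handling the possibility that $\Created(x_1)$ is only derivable from some actor's $\CreatedUsing(w,x_1)$ fact---is more careful than the paper's proof, which silently assumes $Q_2(B) \vdash \Created(x_1)$ and does not discuss the negative conditions.
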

\begin{proof}
    Let $Q_1 \cup Q_2 \vdash \Chain(x)$ and let $(\Refob{x_1}{A_1}{B}),\dots,(\Refob{x_n}{A_n}{B})$ be the chain from $B$ to $x$ in $Q_1 \cup Q_2$. Notice that $B$ is a receptionist in $Q_2$. Hence, by definition of the summary, $\tilde Q_2(B) \vdash \Created(x_1)$. We now show that $\tilde Q_1 \cup \tilde Q_2 \vdash \CreatedUsing(x_i,x_{i+1})$ for each $i < n$.
    
    If $A_i \in \dom(Q_2)$ then $Q_2(A_i) \vdash \Activated(x_i) \land \CreatedUsing(x_i,x_{i+1})$. Since $B$ is a receptionist in $Q_2$, $\tilde Q_2(A_i) \vdash \Activated(x_i) \land \CreatedUsing(x_i,x_{i+1})$.
    
    Otherwise, $A_i \in \dom(Q_1)$ and therefore $Q_1(A_i) \vdash \Activated(x_i) \land \CreatedUsing(x_i,x_{i+1})$. Since $B \not\in \dom(Q_1)$, $\tilde Q_1(A_i) \vdash \Activated(x_i) \land \CreatedUsing(x_i,x_{i+1})$.
\end{proof}

\begin{lem}\label{lem:coop-relevance-preserved}
    Let $A \in \tilde Q_1$ and $B \in \tilde Q_2$ such that $Q_1 \cup Q_2 \vdash \Chain(\Refob x A B)$. Then $Q_1 \cup Q_2 \vdash \Relevant(x)$ if and only if $\tilde Q_1 \cup \tilde Q_2 \vdash \Relevant(x)$.
\end{lem}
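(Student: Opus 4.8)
The plan is to reduce the statement to the observation that a summary preserves, in both directions, exactly those facts about a refob $\Refob x A B$ on which $\Relevant(x)$ depends, as long as the owner and target lie in different summaries. First I would record the structural facts: $A \in \dom(\tilde Q_1) \subseteq \dom(Q_1)$ and $B \in \dom(\tilde Q_2) \subseteq \dom(Q_2)$, and since $Q_1,Q_2$ are disjoint, $A \notin \dom(Q_2)$ and $B \notin \dom(Q_1)$. By the preceding lemma, $Q_1 \cup Q_2 \vdash \Chain(\Refob x A B)$ forces $B$ to be a receptionist in $Q_2$. I would also note that $x$ is a genuine token occurring in a chain of $Q_1 \cup Q_2$, hence distinct from the fresh tokens attached to the ``fake'' refobs of the summaries, so those cannot interfere with any claim about $x$.

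Next I would unfold $\Relevant(\Refob x A B)$ into its three ingredients and handle each separately. Recall from the transition rules that $\Activated(x)$ and $\SentCount(x,\cdot)$ facts only ever reside in the owner $A$'s knowledge set, while $\RecvCount(x,\cdot)$ facts reside only in the target $B$'s. Because $A \in \dom(Q_1)$ and $B \notin \dom(Q_1)$, the summary clause for refobs whose owner is in $Q$ and whose target is a receptionist or lies outside $Q$ applies to $\Refob x A B$ in $Q_1$; together with the minimality of $\tilde Q_1$ this gives $Q_1(A) \vdash \Activated(x)$ iff $\tilde Q_1(A) \vdash \Activated(x)$, and $Q_1(A) \vdash \SentCount(x,n)$ iff $\tilde Q_1(A) \vdash \SentCount(x,n)$ for every $n$. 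Symmetrically, since $B$ is a receptionist in $Q_2$, the summary clause for receptionist targets gives $Q_2(B) \vdash \RecvCount(x,n)$ iff $\tilde Q_2(B) \vdash \RecvCount(x,n)$ for every $n$. Combining these with the fact that each such fact is localised to $A$ resp.~$B$ (so $Q_1 \cup Q_2 \vdash \Activated(x)$ iff $Q_1(A) \vdash \Activated(x)$, and likewise for the others, and similarly on the summary side using that the fake refobs carry fresh tokens), we obtain $Q_1 \cup Q_2 \vdash \Activated(x)$ iff $\tilde Q_1 \cup \tilde Q_2 \vdash \Activated(x)$, and likewise for each $\SentCount(x,n)$ and each $\RecvCount(x,n)$. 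Since $Q \vdash \Relevant(x)$ is by definition the existence of an $n$ with $Q \vdash \Activated(x) \land \SentCount(x,n) \land \RecvCount(x,n)$, the equivalence follows.

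I expect the main obstacle to be the bookkeeping around the ``least set'' definition of the summary. Its defining clauses are stated as one-directional implications ($Q(A) \vdash \phi$ implies $\tilde Q(A) \vdash \phi$), so the converse directions---needed for the ``if'' half of the lemma---have to be extracted from minimality, and this interacts with the ``null count'' derivation rules (the case $n = 0$, where no explicit $\SentCount$ or $\RecvCount$ fact is present, must also be preserved) and with the fresh tokens of the fake refobs. Establishing cleanly that no summary clause and no fake refob can spuriously introduce or remove an $\Activated$, $\SentCount$, or $\RecvCount$ fact about the genuine token $x$ is the delicate step; once that is pinned down, the rest is a routine unfolding of $\Relevant$.
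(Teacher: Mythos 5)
Your proposal is correct and follows essentially the same route as the paper's proof: unfold $\Relevant(x)$ into $\Activated(x)$, $\SentCount(x,n)$, and $\RecvCount(x,n)$, localize the first two to the owner $A$ and the last to the receptionist target $B$, and transfer each fact across the summary definition in both directions. You are in fact somewhat more careful than the paper, which compresses the converse direction into ``from the definition of $\tilde Q_1$ and $\tilde Q_2$'' without explicitly invoking minimality, the freshness of the fake tokens, or the null-count derivation rule.
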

\begin{proof}
    Let $Q_1 \cup Q_2 \vdash \Relevant(x)$. Then there exists $n$ such that $Q_1(A) \vdash \Activated(x) \land \SentCount(x,n)$ and $Q_2(B) \vdash \RecvCount(x,n)$. Since $A \in \dom(Q_1)$ and $B \not\in \dom(Q_1)$, $\tilde Q_1(A) \vdash \Activated(x) \land \SentCount(x,n)$. Since $B$ is a receptionist in $Q_2$, $\tilde Q_2(B) \vdash \RecvCount(x,n)$.
    
    Conversely, let $\tilde Q_1 \cup \tilde Q_2 \vdash \Relevant(x)$. Then there exists $n$ such that $\tilde Q_1(A) \vdash \Activated(x) \land \SentCount(x,n)$ and $\tilde Q_2(B) \vdash \RecvCount(x,n)$. From the definition of $\tilde Q_1$ and $\tilde Q_2$, we must also have $Q_1(A) \vdash \Activated(x) \land \SentCount(x,n)$ and $Q_2(B) \vdash \RecvCount(x,n)$.
\end{proof}

The following lemma formalizes our understanding that the refobs in $\tilde Q$ serve to abbreviate the dependency information of $Q$:

\begin{lem}\label{lem:coop-summary-chains-cases}
    Let $A,B \in \tilde Q_1 \cup \tilde Q_2$. If $\tilde Q_1 \cup \tilde Q_2 \vdash \Chain(\Refob x A B)$ then either:
    \begin{enumerate}
        \item $Q_1 \cup Q_2 \vdash \Chain(\Refob x A B)$; or
        \item Both $A,B$ are in some $Q_i$ and $B$ depends on $A$ in $Q_i$.
    \end{enumerate}
\end{lem}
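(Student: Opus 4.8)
The plan is to unpack a witnessing chain $(\Refob{x_1}{A_1}{B}),\dots,(\Refob{x_n}{A_n}{B})$ for $\tilde Q_1 \cup \tilde Q_2 \vdash \Chain(\Refob x A B)$ and classify each token $x_i$ as either \emph{real} (a token of $Q_1\cup Q_2$, whose facts in $\tilde Q_1\cup\tilde Q_2$ are exactly those copied from $Q_1\cup Q_2$) or \emph{fake} (a fresh token minted inside one of the summaries). The structural fact to record first is that a fake refob $\Refob x A B$ carries only $\Activated(x)$ and $\Created(x)$; in particular no fake token ever occurs as an argument of a $\CreatedUsing$ fact of a summary, and no summary contains a $\Released$, $\SentCount$, or $\RecvCount$ fact about a fake token. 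Since a chain of length $n$ requires $\CreatedUsing(x_i,x_{i+1})$ for every $i<n$, the tokens $x_1,\dots,x_{n-1}$ (as first arguments) and $x_2,\dots,x_n$ (as second arguments) must all be real, so the only token that can be fake is $x_1$, and only when $n=1$.

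I would then dispatch the fake case. If $n=1$ and $x=x_1$ is fake, then by the definition of the summary the refob $\Refob x A B$ was introduced in exactly one $\tilde Q_i$, and the side condition of that clause states precisely that $A,B\in\dom(Q_i)$, that $A$ is a receptionist of $Q_i$, and that $B$ depends on $A$ in $Q_i$ --- which is conclusion~(2).

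In the remaining case every $x_i$ is real, and the goal is to show that the same sequence witnesses $Q_1\cup Q_2\vdash\Chain(\Refob x A B)$, i.e.~conclusion~(1). The $\CreatedUsing(x_i,x_{i+1})$ facts are real facts copied from $Q_1\cup Q_2$, so they hold there; and $Q_1\cup Q_2\vdash\Created(x_1)$ follows as well, directly if the summary fact is explicit, or through the $\CreatedUsing(w,x_1)\Rightarrow\Created(x_1)$ rule if it is only derived (and then $\CreatedUsing(w,x_1)$ is itself a real copied fact, so it holds in $Q_1\cup Q_2$). The delicate point is the \emph{negative} conditions $Q_1\cup Q_2\not\vdash\Released(x_i)$: one argues that the presence of a real $\Created$-fact (or a real $\CreatedUsing$-fact with second argument $x_1$) about a refob targeting $B$ forces $B$ to be a receptionist of the $Q_i$ containing that fact, with $B\in\dom(Q_i)$. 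Once $B$ is a receptionist of $Q_i$ and hence (by disjointness of $Q_1,Q_2$) external to $Q_{i'}$, the summary preserves \emph{all} of $B$'s $\Created$, $\Released$, $\RecvCount$ facts, as well as the owner-side facts of every refob targeting $B$ on whichever side its owner lives; so no $\Released(x_i)$ relevant to the chain is silently dropped. Combined with the fact that summaries only ever add fake $\Activated$/$\Created$ facts (never a $\Released$ fact), this gives $\tilde Q_1\cup\tilde Q_2\not\vdash\Released(x_i) \iff Q_1\cup Q_2\not\vdash\Released(x_i)$ along the chain, so the chain survives verbatim in $Q_1\cup Q_2$.

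I expect this last step to be the main obstacle: it requires a clause-by-clause audit of which facts the summary retains and which it discards, and in particular the argument that the common target $B$ of all chain refobs is a receptionist in its subsystem. The trickiest corner is the case where $\Created(x_1)$ is only derived from a $\CreatedUsing(w,x_1)$ fact (so $x_1$ is not ``rooted'' at $B$'s own knowledge set): there one must still extract from that fact, via the copying rule that preserved it, that $B$ is a receptionist of the relevant $Q_i$, so that the negative-information argument about $\Released$ facts goes through.
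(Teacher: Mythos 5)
Your proposal takes essentially the same route as the paper's proof: the root $\Created(x_1)$ fact of the chain either comes from a ``fake'' refob---which, never occurring in any $\CreatedUsing$ fact of a summary, forces $n=1$ and yields conclusion~(2) directly from the precondition of the fake-refob clause---or is real, in which case the $\Created$ and $\CreatedUsing$ facts along the chain are verbatim copies of facts of $Q_1\cup Q_2$ and the chain transfers, giving conclusion~(1). If anything you are more careful than the paper's own proof, which is silent both on the preservation of the $\lnot\Released$ side conditions and on the possibility that $\Created(x_1)$ is merely derived from a $\CreatedUsing(w,x_1)$ fact rather than sitting explicitly in $B$'s summary snapshot; your receptionist-based audit of which target-side facts survive summarization is exactly the justification the paper omits. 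One caution about the corner you flag: the summary clause that copies $\CreatedUsing(w,x_1)$ (for $\Refob{w}{C}{B}$) fires when $B$ is a receptionist of $Q_i$ \emph{or} when $B\notin\dom(Q_i)$, so in the second disjunct the receptionist-hood of $B$ in the subsystem that actually holds $B$'s $\Released$ facts does not follow from the copying rule alone; there you need a separate argument (drawing on the section's standing assumptions about $Q_1,Q_2$ being potentially finalized) to guarantee that none of $B$'s $\Released$ facts were silently dropped from the summary.
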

\begin{proof}
    Let $(\Refob{x_1}{A_1}{B}),\dots,(\Refob{x_n}{A_n}{B})$ be the chain from $B$ to $x$ in $\tilde Q_1 \cup \tilde Q_2$. Then for some $Q_i$, we must have $\tilde Q_i(B) \vdash \Created(x_1)$. 
    
    By construction of $\tilde Q_i$, this could either be the result of (1) $B$ being a receptionist in $Q_i$ or (2) $A_1$ being a receptionist in $Q_i$ and $B$ depending on $A_1$ in $Q_i$. 
    
    \emph{Case 1.} In this case, we must have $Q_i(B) \vdash \Created(x_1)$. Moreover, by construction of $\tilde Q_1$ and $\tilde Q_2$, $(\tilde Q_1 \cup \tilde Q_2)(A_j) \vdash \CreatedUsing(x_j,x_{j+1})$ implies $(Q_1 \cup Q_2)(A_j) \vdash \CreatedUsing(x_j,x_{j+1})$ for every $j < n$. Hence $Q_1 \cup Q_2 \vdash \Chain(x)$.
    
    \emph{Case 2.} In the latter case, the chain can only have length 1 because $x_1$ is a ``fake'' refob. Hence $x = x_1$ and $A_1 = A$, so indeed $B$ depends on $A$ in $Q_i$.
\end{proof}
\begin{cor}\label{lem:coop-summary-depends-implies}
    If $B$ depends on $A$ in $\tilde Q_1 \cup \tilde Q_2$ then $B$ depends on $A$ in $Q_1 \cup Q_2$.
\end{cor}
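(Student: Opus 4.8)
The plan is to unfold the definition of ``depends on'' in $\tilde Q_1 \cup \tilde Q_2$ and reduce it link by link to ``depends on'' in $Q_1 \cup Q_2$, using Lemma~\ref{lem:coop-summary-chains-cases} on each link and then closing under transitivity. If $A = B$ the claim holds by reflexivity, so assume a witnessing sequence $(\Refob{x_1}{A_1}{A_2}),\dots,(\Refob{x_n}{A_{n-1}}{A_n})$ with $A = A_1$, $B = A_n$, each $A_i \in \dom(\tilde Q_1 \cup \tilde Q_2)$, and $\tilde Q_1 \cup \tilde Q_2 \vdash \Chain(\Refob{x_i}{A_i}{A_{i+1}})$ for every $i < n$. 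It then suffices to prove that $A_{i+1}$ depends on $A_i$ in $Q_1 \cup Q_2$ for each $i$; transitivity of the dependency relation gives the conclusion.

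Fix $i$ and apply Lemma~\ref{lem:coop-summary-chains-cases} to $\tilde Q_1 \cup \tilde Q_2 \vdash \Chain(\Refob{x_i}{A_i}{A_{i+1}})$. In the first case, $Q_1 \cup Q_2 \vdash \Chain(\Refob{x_i}{A_i}{A_{i+1}})$, which is itself a (length-one) dependency of $A_{i+1}$ on $A_i$ in $Q_1 \cup Q_2$. In the second case, both $A_i$ and $A_{i+1}$ lie in a single $Q_j$ and $A_{i+1}$ depends on $A_i$ \emph{there}; I would then argue that this dependency lifts to $Q_1 \cup Q_2$. Concretely, a dependency in $Q_j$ is witnessed by a sequence of chains, and each such chain is built from $\Created$ and $\CreatedUsing$ facts of actors in $\dom(Q_j)$ together with the absence of certain $\Released$ facts; since $\dom(Q_1)$ and $\dom(Q_2)$ are disjoint, $Q_1 \cup Q_2$ assigns those very actors exactly the snapshots they have in $Q_j$, so the same chains remain valid in $Q_1 \cup Q_2$ and the dependency survives. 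Combining both cases over all $i$ and applying transitivity finishes the proof.

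I expect the only real subtlety to be the lifting step in the second case. Facts such as $\Created$ and $\CreatedUsing$ are monotone under enlarging the snapshot set, but the predicate $Q \vdash \Chain(x)$ also demands $Q \not\vdash \Released(x_k)$ for the refobs $x_k$ along the chain, and this negative condition is not monotone in general. The saving observation is that a $\Released(x_k)$ fact, where $x_k$ has target $B$, can only be held by $B$, and all such targets together with the intermediate owners already belong to $\dom(Q_j)$; disjointness of the two domains then guarantees that passing from $Q_j$ to $Q_1 \cup Q_2$ introduces no new $\Released$ fact about any actor relevant to the chain, so no chain is broken. Everything else is routine bookkeeping with the definitions.
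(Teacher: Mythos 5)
Your proposal is correct and matches the intended argument: the paper states this as an immediate corollary of Lemma~\ref{lem:coop-summary-chains-cases}, obtained exactly as you do by decomposing the dependency into links, applying that lemma to each link, and closing under transitivity. Your extra care about the non-monotonicity of the $Q \not\vdash \Released(x_k)$ condition under union---resolved because $\Released$ facts for a refob reside only in the target's snapshot and the domains are disjoint---is a detail the paper leaves implicit, and it is handled correctly.
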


Conversely, we now show that all the important dependencies have been preserved - namely, which actors depend on which receptionists.

\begin{lem}\label{lem:coop-depends-implies-summary}
    Let $A,B \in \tilde Q_1 \cup \tilde Q_2$ and let $A$ be a receptionist in $Q_1$. If $B$ depends on $A$ in $Q_1 \cup Q_2$ then $B$ depends on $A$ in $\tilde Q_1 \cup \tilde Q_2$.
\end{lem}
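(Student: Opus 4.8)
The plan is to unfold the hypothesis into an explicit witnessing sequence and then rebuild that sequence inside the summaries, contracting its ``internal'' portions and keeping only the boundary-crossing steps. Write the witness as $A = A_1,\dots,A_n = B$ with $Q_1 \cup Q_2 \vdash \Chain(\Refob{x_i}{A_i}{A_{i+1}})$ for each $i<n$; note that every $A_i$ lies in $\dom(Q_1)\cup\dom(Q_2)$, since being the target of a chain forces an actor to appear in the configuration, and that every $A_i$ which is the target of a link crossing between $Q_1$ and $Q_2$ is a receptionist of its system.

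Two structural observations will do most of the work. First, I would recheck---it is essentially the proof of \Cref{lem:coop-chains-preserved}, which only uses that the target is a receptionist---that whenever $B'$ is a receptionist in some $Q_j$ and $Q_1 \cup Q_2 \vdash \Chain(\Refob{x}{A'}{B'})$, then also $\tilde Q_1 \cup \tilde Q_2 \vdash \Chain(\Refob{x}{A'}{B'})$. Second, the complementary fact: if $A',B' \in \dom(Q_j)$, $B'$ is \emph{not} a receptionist of $Q_j$, and $Q_1 \cup Q_2 \vdash \Chain(\Refob{x}{A'}{B'})$, then the whole chain lies within $Q_j$---its initial $\Created$ fact sits in $B'$'s knowledge set (hence in $Q_j$), and if any owner along the chain belonged to the other system then $B'$ would be a receptionist of $Q_j$---so $Q_j \vdash \Chain(\Refob{x}{A'}{B'})$. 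Combining the second observation with transitivity of ``depends on'', any maximal run $A_p,\dots,A_q$ of the sequence whose interior links all target non-receptionists lies entirely within a single $Q_j$, starts at a receptionist $A_p$ (its predecessor link, if any, targets the receptionist $A_p$, and $A_1 = A$ is a receptionist of $Q_1$), and yields that $A_q$ depends on $A_p$ in $Q_j$.

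Given this decomposition I would induct on $n$, generalizing the statement so $A$ may be a receptionist of either $Q_1$ or $Q_2$. In the inductive step, peel off the first maximal within-$Q_j$ run $A = A_1,\dots,A_q$; it gives that $A_q$ depends on $A$ in $Q_j$, hence by the fake-refob clause of the summary definition a single link $\tilde Q_j \vdash \Chain(\Refob{z}{A}{A_q})$ with a fresh token. If $q = n$ we are done. Otherwise the next link $A_q \to A_{q+1}$ targets a receptionist, so by the first observation $\tilde Q_1 \cup \tilde Q_2 \vdash \Chain(\Refob{x_q}{A_q}{A_{q+1}})$; together with the fake link this shows $A_{q+1}$ depends on $A$ in $\tilde Q_1 \cup \tilde Q_2$. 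Since $A_{q+1}$ is a receptionist, the induction hypothesis applied to $A_{q+1},\dots,A_n$ gives that $B$ depends on $A_{q+1}$, and composing finishes the step.

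The step I expect to be the main obstacle is verifying that the fake-refob clause actually applies, i.e.~that $A_q$ (the endpoint of a contracted run) lies in $\dom(\tilde Q_j)$. This is immediate when $A_q$ is itself a receptionist or equals $B$, but when $A_q$ only serves as the source of the following boundary-crossing link one must show its snapshot still retains a fact ($\Activated$, $\CreatedUsing$, or $\SentCount$) about the refob it uses to cross; this is where the ``potentially finalized'' hypothesis on $Q_1,Q_2$ is needed, to exclude a stale snapshot in which $A_q$ has already deactivated that refob. Dealing with that case---plausibly by rerouting through the second-to-last owner of the crossing chain, which always still carries a $\CreatedUsing$ fact and hence lies in $\dom(\tilde Q)$---is the delicate part of the argument.
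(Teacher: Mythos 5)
Your proposal is correct and follows essentially the same route as the paper: both decompose the dependency witness into within-system runs, contract each run to a single ``fake'' refob supplied by the summary's dependency clause, and carry the boundary-crossing links over via the facts the summary preserves (the argument of Lemma~\ref{lem:coop-chains-preserved}). The delicate point you flag---membership of a run's endpoint in $\dom(\tilde Q_j)$ so that the fake-refob clause applies---is a genuine subtlety, but the paper's own proof silently assumes it, so your treatment is if anything the more careful of the two.
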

\begin{proof}
    Let $(\Refob{x_1}{A_1}{A_2})$, $\dots$, $(\Refob{x_n}{A_{n-1}}{A_n})$ be the sequence of refobs from $A$ to $B$.
    
    If $n = 0$ then the lemma is trivially satisfied. 
    
    If $\forall i \le n,\ A_i \in \dom(Q_1)$ then, by construction of $\tilde Q_1$, there exists $\Refob x A B$ such that $\tilde Q_1 \vdash \Chain(x)$.
    
    For the general case, the sequence of refobs may pass between $Q_1$ and $Q_2$ multiple times. We partition the sequence $x_1,\dots,x_n$ into a sequence of ``runs'' $\vec x_1,\dots,\vec x_m$, such that:
    \begin{enumerate}
        \item For each refob $(\Refob x A B)$ in the first run $\vec x_1$, the owner $A$ is in $Q_1$; for each refob $(\Refob x A B)$ in the second run $\vec x_2$, the owner $A$ is in $Q_2$; for each refob $(\Refob x A B)$ in the third run $\vec x_3$, the owner $A$ is in $Q_1$; and so on.
        \item The concatenation of $\vec x_1,\dots,\vec x_m$ is $x_1,\dots,x_n$.
    \end{enumerate}
    For each run $\vec x_i$, we denote the owner of the first refob as $B_i$ and the target of the last refob as $C_i$. Notice that every $B_i$ is a receptionist. Hence, by construction of $\tilde Q_1$ and $\tilde Q_2$ there is, for each run $\vec x_i$, a refob $\Refob{y_i}{B_i}{C_i}$ such that $\tilde Q_1 \cup \tilde Q_2 \vdash \Chain(y_i)$. Then the sequence of refobs $y_1,\dots,y_m$ witnesses the fact that $B$ depends on $A$ in $\tilde Q_1 \cup \tilde Q_2$.
\end{proof}

Finally, we can show that summaries are sound and complete for the intended purpose of finding finalized receptionists.

\begin{thm}
    Let $C \in \tilde Q_1 \cup \tilde Q_2$. Then $C$ is finalized in $\tilde Q_1 \cup \tilde Q_2$ if and only if $C$ is finalized in $Q_1 \cup Q_2$.
\end{thm}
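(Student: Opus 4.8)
The plan is to prove the two implications separately, in each case unfolding the definition ``$C$ is finalized in $Q$'' and splitting on the structure of a refob $\Refob x A B$ that could witness a failure of finalization. Two mechanisms do the work. First, any refob whose target is a receptionist of its own side, or whose owner and target lie on opposite sides of the $Q_1/Q_2$ divide, has both its chain and its send/receive counts faithfully mirrored between $Q_1 \cup Q_2$ and $\tilde Q_1 \cup \tilde Q_2$; this is \Cref{lem:coop-chains-preserved,lem:coop-relevance-preserved} together with the clauses of the definition of $\tilde Q$. Second, the ``fake'' refobs of $\tilde Q$ re-encode purely internal dependencies while being automatically $\Relevant$, since their $\SentCount$ and $\RecvCount$ both default to $0$; moreover any chain in $\tilde Q_1 \cup \tilde Q_2$ that passes through a fake refob has length one (\Cref{lem:coop-summary-chains-cases}, case~2), because a fake refob never occurs as the first argument of a $\CreatedUsing$ fact.

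For ``$C$ finalized in $Q_1 \cup Q_2$ implies $C$ finalized in $\tilde Q_1 \cup \tilde Q_2$'', I would fix $B$ on which $C$ depends in $\tilde Q_1 \cup \tilde Q_2$ and a refob $\Refob x A B$ with $\tilde Q_1 \cup \tilde Q_2 \vdash \Chain(x)$, and show $\tilde Q_1 \cup \tilde Q_2 \vdash \Relevant(x)$. If $x$ is fake this is immediate. Otherwise the first refob of the witnessing chain is real, so the $\Created$ fact anchoring it at $B$ can only have entered $\tilde Q(B)$ through the receptionist clause; hence $B$ is a receptionist of its side $Q_j$, the chain lifts to $Q_1 \cup Q_2 \vdash \Chain(x)$, and \Cref{lem:coop-summary-depends-implies} gives that $C$ depends on $B$ in $Q_1 \cup Q_2$. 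Since $C$ is finalized there, $Q_1 \cup Q_2 \vdash \Relevant(x)$, and in particular $Q_1 \cup Q_2 \vdash \Activated(x)$, so $A$ has a snapshot, lying either in $Q_j$ (with $B$ still a receptionist there) or on the other side (with $B$ external to it). In both cases the construction of $\tilde Q$ retains $A$'s $\Activated$ and $\SentCount$ facts and $B$'s $\RecvCount$ fact for $x$, so $\tilde Q_1 \cup \tilde Q_2 \vdash \Relevant(x)$.

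For the converse, assume $C$ is finalized in $\tilde Q_1 \cup \tilde Q_2$, fix $B$ on which $C$ depends in $Q_1 \cup Q_2$ and $\Refob x A B$ with $Q_1 \cup Q_2 \vdash \Chain(x)$, and say $B \in \dom(Q_i)$. If the witnessing chain stays inside $Q_i$ and $A \in \dom(Q_i)$, then $Q_i \vdash \Chain(x)$ and \Cref{lem:local-chain-relevant} gives $Q_i \vdash \Relevant(x)$, hence $Q_1 \cup Q_2 \vdash \Relevant(x)$. Otherwise the chain leaves $Q_i$, which forces $B$ to be a receptionist of $Q_i$; then \Cref{lem:coop-depends-implies-summary} pushes ``$C$ depends on $B$'' into $\tilde Q_1 \cup \tilde Q_2$, the chain to $x$ is mirrored there (by \Cref{lem:coop-chains-preserved} when $A$ lies on the other side, and by a direct check against the definition of $\tilde Q$ when $A \in \dom(Q_i)$), finalizedness of $C$ in $\tilde Q_1 \cup \tilde Q_2$ yields $\tilde Q_1 \cup \tilde Q_2 \vdash \Relevant(x)$, and \Cref{lem:coop-relevance-preserved} (or, in the same-side case, the construction of $\tilde Q$) transfers this back to $Q_1 \cup Q_2 \vdash \Relevant(x)$. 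The degenerate sub-case $A \notin \dom(Q_1 \cup Q_2)$ cannot occur: the same mirroring still produces a chain to $\Refob x A B$ in $\tilde Q_1 \cup \tilde Q_2$ with $A \notin \dom(\tilde Q_1 \cup \tilde Q_2)$, so $\tilde Q_1 \cup \tilde Q_2 \not\vdash \Activated(x)$, contradicting $\tilde Q_1 \cup \tilde Q_2 \vdash \Relevant(x)$.

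The main obstacle is not a single hard idea but the bookkeeping: the case analysis on the position of $A$ relative to $B$'s side --- in $\dom(Q_i)$, in $\dom(Q_{3-i})$, or in neither --- and checking, each time, that the definition of $\tilde Q$ retains exactly the facts about $x$ that $\Chain$ and $\Relevant$ refer to ($\Created$, $\Released$, $\RecvCount$ on the target, $\Activated$, $\SentCount$, and the $\CreatedUsing$ links on the owner side). The preservation lemmas must be invoked with care, since \Cref{lem:coop-chains-preserved,lem:coop-relevance-preserved} presuppose that owner and target lie in different summaries and \Cref{lem:coop-summary-chains-cases} presupposes that both have snapshots in $\tilde Q_1 \cup \tilde Q_2$, so the ``same-side receptionist'' configuration has to be verified by hand. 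The other easy-to-miss point is the fake-refob accounting: that fake refobs are always $\Relevant$, that chains cannot run through them, and that a real chain token never coincides with a fresh fake token.
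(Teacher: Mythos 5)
Your proposal is correct and follows essentially the same route as the paper's proof: both directions reduce to checking $\Relevant$ for chains to actors on which $C$ depends, using \Cref{lem:coop-summary-chains-cases} to split into the fake-refob case (automatically relevant, length-one chains) and the real-chain case, and then transferring chains, dependencies, and relevance between $Q_1 \cup Q_2$ and $\tilde Q_1 \cup \tilde Q_2$ via \Cref{lem:coop-chains-preserved,lem:coop-relevance-preserved,lem:coop-summary-depends-implies,lem:coop-depends-implies-summary,lem:local-chain-relevant}. Your extra attention to the same-side receptionist configuration and the degenerate case $A \notin \dom(Q_1 \cup Q_2)$ is a welcome refinement of details the paper's proof passes over, not a different argument.
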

\begin{proof}
    Let $C$ be finalized in $\tilde Q_1 \cup \tilde Q_2$. We show that, if $C$ depends on some $B$ in $Q_1 \cup Q_2$ and $Q_1 \cup Q_2 \vdash \Chain(\Refob x A B)$, then $Q_1 \cup Q_2 \vdash \Relevant(x)$.
    
    If $A,B$ are both in some $Q_i$, then $Q_i \vdash \Relevant(x)$ because $Q_i$ is potentially finalized.
    
    Otherwise, let $A \in \dom(Q_1)$ and $B \in \dom(Q_2)$, without loss of generality. Since $Q_1 \cup Q_2 \vdash \Chain(x)$, Lemma~\ref{lem:coop-chains-preserved} implies $\tilde Q_1 \cup \tilde Q_2 \vdash \Chain(x)$. Since $B$ is a receptionist, $B \in \tilde Q_2$. Since $B,C \in \tilde Q_1 \cup \tilde Q_2$ and $C$ depends on $B$ in $Q_1 \cup Q_2$, $C$ must also depend on $B$ in $\tilde Q_1 \cup \tilde Q_2$ by Lemma~\ref{lem:coop-depends-implies-summary}. Hence, since $C$ is finalized, $\tilde Q_1 \cup \tilde Q_2 \vdash \Relevant(x)$. This implies, by Lemma~\ref{lem:coop-relevance-preserved}, $Q_1 \cup Q_2 \vdash \Relevant(x)$.
    
    ---
    
    Conversely, let $C$ be finalized in $Q_1 \cup Q_2$. We show that, if $C$ depends on some $B$ in $\tilde Q_1 \cup \tilde Q_2$ and $\tilde Q_1 \cup \tilde Q_2 \vdash \Chain(\Refob x A B)$, then $\tilde Q_1 \cup \tilde Q_2 \vdash \Relevant(x)$. By Lemma~\ref{lem:coop-summary-chains-cases}, there are two cases:
    
    \emph{Case 1:} $B$ depends on $A$ in $Q_i$; the refob $x$ is a ``fake'' reference created in the process of constructing $\tilde Q_i$. Then $\tilde Q_i(A) \vdash \Activated(x)$ and $\tilde Q_i \vdash \SentCount(x,0) \land \RecvCount(x,0)$ by construction.
    
    \emph{Case 2:} $Q_1 \cup Q_2 \vdash \Chain(x)$. Since $C$ is finalized and $C$ depends on $B$ in $Q_1 \cup Q_2$, we must have $Q_1 \cup Q_2 \vdash \Relevant(x)$. Then, since $B,C \in \tilde Q_1 \cup \tilde Q_2$, by Lemma~\ref{lem:coop-relevance-preserved}, it follows that $\tilde Q_1 \cup \tilde Q_2 \vdash \Relevant(x)$.
\end{proof}

Hence a pair of aggregators can find terminated actors in $Q_1 \cup Q_2$ by:
\begin{enumerate}
    \item Garbage collecting all finalized actors in each $Q_i$;
    \item Removing all actors not potentially finalized in each $Q_i$;
    \item Computing the summary of the remaining set of snapshots and exchanging it with their partner;
    \item Removing all potentially unfinalized snapshots in the pair of summaries $\tilde Q_1 \cup \tilde Q_2$ (optionally using the $\Unreleased$ heuristic from \Cref{sec:heuristic});
    \item Garbage collecting all actors in $Q_i$ that are reachable from a finalized receptionist in $\tilde Q_1 \cup \tilde Q_2$.
\end{enumerate}
Alternatively, a set of aggregators could send their summaries to a parent aggregator, which uses the summaries to compute the finalized receptionists and sends this set to each child aggregator.

\section{Conclusion and Future Work}\label{sec:conclusion}\label{sec:future-work}

We have shown how deferred reference listing and message counts can be used to detect termination in actor systems. The technique is provably safe (Corollary~\ref{cor:safety}) and live (\Cref{thm:liveness}). An implementation in Akka is presently underway.

We believe that DRL satisfies our three initial goals:
\begin{enumerate}
    \item \emph{Termination detection does not restrict concurrency in the application.} Actors do not need to coordinate their snapshots or pause execution during garbage collection.
    \item \emph{Termination detection does not impose high overhead.} The amortized memory overhead of our technique is linear in the number of unreleased refobs. Besides application messages, the only additional control messages required by the DRL communication protocol are $\InfoMsg$ and $\ReleaseMsg$ messages. These control messages can be batched together and deferred, at the cost of worse termination detection time.
    \item \emph{Termination detection scales with the number of nodes in the system.} Our algorithm is incremental, decentralized, and does not require synchronization between nodes.
\end{enumerate}
Since it does not matter what order snapshots are collected in, DRL can be used as a ``building block’’ for more sophisticated garbage collection algorithms. One promising direction is to take a \emph{generational} approach \cite{DBLP:journals/cacm/LiebermanH83}, in which long-lived actors take snapshots less frequently than short-lived actors. Different types of actors could also take snapshots at different rates. In another approach, snapshot aggregators could \emph{request} snapshots instead of waiting to receive them.
In the presence of faults, DRL remains safe but its liveness properties are affected. If an actor $A$ crashes and its state cannot be recovered, then none of its refobs can be released and the aggregator will never receive its snapshot. Consequently, all actors potentially reachable from $A$ can no longer be garbage collected. However, $A$'s failure does not affect the garbage collection of actors it cannot reach. In particular, network partitions between nodes will not delay node-local garbage collection.
Another issue that can affect liveness is message loss: If any messages along a refob $\Refob x A B$ are dropped, then $B$ can never be garbage collected because it will always appear unblocked. This is, in fact, the desired behavior if one cannot guarantee that the message will not be delivered at some later point. In practice, this problem might be addressed with watermarking.
Choosing an adequate fault-recovery protocol will likely vary depending on the target actor framework. One option is to use checkpointing or event-sourcing to persist GC state; the resulting overhead may be acceptable in applications that do not frequently spawn actors or create refobs. Another option is to monitor actors for failure and infer which refobs are no longer active; this is a subject for future work.
\section*{Acknowledgments}
\noindent This work was supported in part by the National Science Foundation under Grant No. SHF 1617401, and in part by the Laboratory Directed Research and Development program at Sandia National Laboratories, a multi-mission laboratory managed and operated by National Technology and Engineering Solutions of Sandia, LLC, a wholly owned subsidiary of Honeywell International, Inc., for the U.S. Department of Energy’s National Nuclear Security Administration under contract DE-NA0003525.
We would like to thank Dipayan Mukherjee, Atul Sandur, Charles Kuch, Jerry Wu, and the anonymous referees at CONCUR and LMCS for providing valuable feedback in earlier versions of this work.
\bibliography{contents/bibliography}{}
\bibliographystyle{alphaurl}
\end{document}